\tikzset{tick/.style={draw, minimum width=0pt, minimum height=2pt, inner sep=0pt, label=below:$#1$},
    tick/.default={}}
\newtheorem{theorem}{Theorem}[section]
\newtheorem*{theorem*}{Theorem}
\newtheorem{proposition}[theorem]{Proposition}
\newtheorem{claim}[theorem]{Claim}
\newtheorem{lemma}[theorem]{Lemma}
\newtheorem{example}{Example}[section]
\newtheorem{remark}[theorem]{Remark}
\newcommand{\ppp}{\textsc{OS-UD}\xspace}
\newcommand{\E}{\mathbb{E}} 
\newcommand{\OPT}{\mathtt{OPT}}
\newcommand{\ALG}{\mathtt{ALG}}
\title{\vspace{-0.5in} Online Selection with Uncertain Disruption}
 \author{
 Yihua Xu\thanks{Department of Computational Applied Mathematics and Operations Research, Rice University, USA.}\ 
 \thanks{Ken Kennedy Institute, Rice University, USA.} \and
 Süleyman Kerimov\thanks{Jones Graduate School of Business, Rice University, USA.}\ \footnotemark[2] \and
 Sebastian Perez-Salazar\footnotemark[1]\ \footnotemark[2]
 }
\date{\today}
\begin{document}

\maketitle

\vspace{-0.2in}

\begin{abstract}
In numerous online selection problems, decision-makers (DMs) must allocate on the fly limited resources to customers with uncertain values. The DM faces the tension between allocating resources to currently observed values and saving them for potentially better, unobserved values in the future. Addressing this tension becomes more demanding if an uncertain disruption occurs while serving customers. Without any disruption, the DM gets access to the capacity information to serve customers throughout the time horizon. However, with uncertain disruption, the DM must act more cautiously due to risk of running out of capacity abruptly or misusing the resources. Motivated by this tension, we introduce the Online Selection with Uncertain Disruption (OS-UD) problem. In OS-UD, a DM sequentially observes $n$ non-negative values drawn from a common distribution and must commit to select or reject each value in real time, without revisiting past values. The disruption is modeled as a Bernoulli random variable with probability $p$ each time DM selects a value. We aim to design an online algorithm that maximizes the expected sum of selected values before a disruption occurs, if any.

We evaluate online algorithms using the \emph{competitive ratio}---the ratio between the expected value achieved by the algorithm and that of an optimal \emph{clairvoyant} algorithm that knows all value realizations in advance but still faces uncertain disruption.  Using a quantile-based approach, we devise a \emph{non-adaptive} single-threshold algorithm that attains a competitive ratio of at least $1-1/e$, and an \emph{adaptive} threshold algorithm characterized by a sequence of non-increasing thresholds that attains an asymptotic competitive ratio of at least $0.745$. Both of these results are worst-case optimal within their corresponding class of algorithms and continue to hold regardless of whether the last value is partially recoverable. Our results reveal an interesting connection between the OS-UD problem and the i.i.d.\ prophet inequality problems as the number of customers grows large. 

\end{abstract}

\section{Introduction} \label{sec:Intro}

Online selection models have gained increasing attention, as they capture key features of extensive applications such as online advertising (\cite{mehta2012online}), online resource allocation (\cite{delong2022online}), and applicant evaluation (\cite{epstein2024selection}). Generally speaking, in these problems, a decision-maker (DM) must irrevocably allocate limited resources to incoming customers with uncertain values. The DM faces a fundamental tension between allocating resources to currently observed values and saving them for potentially better, unobserved values. 



Alleviating this tension becomes more challenging if committing to a request may cause a \emph{disruption}, potentially halting the remaining process. For example, in cloud computing, providers such as Amazon Web Services and Azure often offer spot instances at lower prices than on-demand instances by utilizing underutilized resources. While these providers aim to maximize resource utilization and generate additional revenue through low-priced spot instances,  providers must manage the risk of reclaiming these resources for higher-priority on-demand users; when reclamation occurs, service is interrupted and the provider typically recovers only a fraction of the value from the disrupted task (e.g., see \cite{cohen2019overcommitment,perez2022dynamic}). In a similar vein, an owner of a reusable resource (e.g., a host renting her property on Airbnb, or a driver accepting a ride on Uber) might experience a disruption in operations due to property damage or other unforeseen issues caused by the customer. While in practice such disruptions are often temporary—for instance, an Airbnb host may suspend operations during repairs and then resume hosting—one can view these events as a sequence of disruptions and re-entries. Our model captures the cost of these interruptions through the abstraction of disruptions that halt service, regardless of whether they are permanent or repeated. In this context, the last accepted transaction typically yields its full value before any subsequent downtime. Motivated by this, we introduce the \emph{Online Selection with Uncertain Disruption} (\ppp) problem, which captures such disruptions when serving incoming requests.


In the \ppp problem, a finite sequence of $n$ independent and identically distributed (i.i.d.) non-negative random variables is observed sequentially. Upon observing a value, the DM must decide whether to (i) select it, or, (ii) reject it and observe the next value, if any, without the possibility of reconsidering past observations. 
{Importantly, with a known probability $p \in [0,1]$, selecting a value triggers a disruption that halts the process, and the DM partially recovers the last selected value; otherwise, with probability $1-p$, the selection is retained and the process continues with the remaining observations.}



{Our objective is to design an online algorithm that maximizes the expected sum of selected values under disruption. We evaluate the performance of algorithms using the \emph{competitive ratio}, which is the fraction between the value of an algorithm and the value of an optimal \emph{clairvoyant algorithm} that \emph{knows the values upfront and can select order of observation}, but still faces \emph{unknown disruptions}. 
The competitive ratio is a number in the interval $[0,1]$ that measures the ``price" paid by the DM for not knowing all the sequence of values upfront. Our definition of competitive ratio differs slightly from the usual definition found in other online selection problems (e.g., see \cite{krengel1977semiamarts,mehta2012online,hill1982comparisons,correa2021posted})---a formal description of our problem and discussion is presented in \S\ref{subsec:formulation}. 
Our goal is to shed light on the following questions: (i) What is the optimal (worst-case) competitive ratio for the \ppp problem? (ii) How should a DM accept incoming requests to achieve the optimal competitive ratio?

In general, without imposing any structure on the disruption process, any online algorithm can perform arbitrarily poorly and the competitive ratio can be $0$ (see Example \ref{example:vary_p}). In this paper, we answer both questions by devising simple threshold policies, and we provide an exhaustive analysis for the case when the probability of disruption $p$ is a fixed constant. We first show that the optimal online algorithm induced by a stochastic dynamic program is characterized by a sequence of non-increasing thresholds, where the algorithm accepts the incoming request if and only if its value is at least the current threshold. Motivated by this, in order to quantify the ``price" that the DM must pay due to uncertainties, we analyze two classes of algorithms that are easy to describe: \emph{non-adaptive (fixed) threshold algorithms} and \emph{adaptive threshold algorithms}. Such algorithms are also desirable in practice, as they are simple-to-implement, have economic interpretations~\citep{arnosti2023tight,van1995dynamic,naor1969regulation}, and are widely used in posted price mechanisms~\citep{chawla2007algorithmic,chawla2010multi,chawla2024static,correa2019pricing}.

For non-adaptive threshold algorithms, we offer a complete characterization of the optimal competitive ratio, which is $1 - 1/e$, through the design of an algorithm and a hard instance. For adaptive threshold algorithms, we present an algorithm that employs a non-increasing sequence of thresholds with a tight asymptotic competitive ratio of $\theta^* \approx 0.745$, where $\theta^*$ is a parameter appearing in the Hill and Kertz equation \citep{hill1982comparisons,kertz1986stop}. Interestingly, $\theta^*$ coincides with the optimal competitive ratio of the i.i.d.\ prophet inequality problem \citep{correa2021posted,hill1982comparisons,kertz1986stop}.

The techniques derived for the fixed disruption probability case can also be adapted to the case when disruption is rare (where $p=\alpha/n$, for $\alpha < 1$). Notably, one can obtain asymptotic competitive ratios that are strictly larger than $\theta^*$ (see \S\ref{sec:final_rem}). After presenting the formal introduction of our problem in the next subsection, we provide the details of our results and techniques in \S\ref{subsec:contri}.

\subsection{Problem Formulation}\label{subsec:formulation}

Given a fixed disruption probability $p\in [0,1]$ and a partial recovery parameter $\zeta \in [0,1]$, an instance $I$ of the \ppp\ problem is given by a pair $(n,F)$, where $n \geq 1$ is the number of values, and $F$ is a continuous and increasing cumulative distribution function supported in the non-negative real numbers.\footnote{We can assume that $F$ is continuous and smooth by perturbing the observed values with a small continuous noise. The monotonicity assumption is less common, but follows a similar principle of perturbing the original $F$ (e.g., see~\cite{liu2020variable,perezsalazar2024optimalguaranteesonlineselection}).} Let $\mathcal{F}$ be the set of all such cumulative distribution functions. An instance $I=(n,F)$ encodes two sequences of random variables $(X_i)_{i=1}^n$ and $(Y_i)_{i=1}^n$. The $X_i$'s are non-negative and i.i.d.\ following the distribution $F$. The $Y_i$'s are are i.i.d.\ $0/1$-random variables with $\Pr[Y_i = 1] = p \in [0,1]$. The DM observes $X_i$'s in a fixed order $1,\ldots,n,$ sequentially. Upon observing \(X_i\), the DM can either skip to the next value (if one remains) or select the current one. If the DM selects \(X_i\), the DM then observes \(Y_{i}\) indicating whether making this selection will disrupt the whole process. If \(Y_{i} = 1\), we say that a \emph{disruption occurred} and the process terminates. In this case, the DM receives the sum of selected values up to the disruption and a fraction $\zeta$ of the last selected value $X_i$. If \(Y_{i} = 0\), the process continues, and the DM receives \(X_i\) as part of the current round's value. We aim to find an online algorithm $\ALG$ that maximizes the total expected value collected. We denote by $v(\ALG(I))$ the expected value obtained by algorithm $\ALG$ for an instance $I$.

We adopt a competitive analysis, where we compare the value of an algorithm to an \emph{optimal clairvoyant algorithm} that has access to the value realizations $X_1,\ldots,X_n$, in advance, \emph{but not to the sequence $Y_{1},\ldots,Y_{n}$}. 
The clairvoyant does not operate sequentially in time; rather, it can freely choose the order in which to examine the values, denoted $X_{\sigma(1)},\ldots,X_{\sigma(n)}$, where $\sigma$ is a permutation of $[n]$. In this sense, the clairvoyant \emph{searches} over the available options without being constrained by the timeline of arrivals, thereby representing the best possible performance under full knowledge of values, but uncertainty about disruptions.
Thus, the optimal clairvoyant algorithm $\OPT$ obtains a value of {the top $D-1$ order statistics before disruption, and additionally receives $\zeta$ of the last item if disruption occurs within $n$ values:}

\begin{align}
v(\OPT (I))&=\E\left[ 
\max_{\sigma} \left\{ \sum_{i=1}^{\min\{ D-1,n\}} X_{\sigma(i)} +
\zeta \cdot \mathbb{I}\{D\leq n \} X_{\sigma(D)} \right\}\right] \notag \\
&=\E\left[ \sum_{i=1}^{\min\{D-1,n \} } X_{(i)} + 
\zeta \cdot \mathbb{I}\{D\leq n\}\,X_{(D)}\right] \label{exp:opt_prilim},
\end{align}
where $D$ is a geometric random variable with parameter $p$,\footnote{That is, $\Pr[D=j]=p(1-p)^{j-1}$ for $j \geq 1$.} and $X_{(j)}$ denotes the $j^{\text{th}}$ largest order statistic. 
We abuse notation and simply write $\ALG$ and $\OPT$ for $\ALG(I)$ and $\OPT(I)$, respectively, when the instance $I$ is clear from the context. For convenience, we will also omit the word ``online'' when referring to online algorithms when the context is clear. Then, given an algorithm $\ALG$ with a disruption probability $p$ and {a partial recovery parameter $\zeta$}, its \emph{competitive ratio} is $$\inf_{n \geq 1, F \in \mathcal{F}}\frac{v(\ALG)}{v(\OPT)},$$ and we seek an algorithm with the largest competitive ratio possible. {In the remainder of the paper, we focus on the case $p \in (0,1)$, since: (i) when $p=0$, there are no failures and it is optimal to accept all values; (ii) when $p=1$ and $\zeta=0$, both the clairvoyant algorithm and any other algorithm obtain a value of $0$; and (iii) when $p=1$ and $\zeta \in (0,1]$, the problem reduces to the classic i.i.d.\ prophet inequality problem with values rescaled by $1/\zeta$.
}

We remark that the clairvoyant algorithm is different from the \emph{offline algorithm} that knows the realizations of $X_i$'s and $Y_{i}$'s upfront for all $i\leq n$. The following example shows that it is impossible to obtain a positive constant competitive ratio when we attempt to replace $\OPT$ by the offline algorithm.

\begin{example}\label{benchmarkexample}
    Consider the following $n \geq 1$ i.i.d.\ values $X_1,\ldots,X_n$, where $X_i$'s are uniformly distributed in $[0,n]$. Fix $p \in (0,1)$ {and a partial recovery parameter $\zeta = 0$}. Note that the offline algorithm will select all the values for which $Y_{i}=0$. Thus, the expected value obtained by the offline algorithm is $v(\OPT^{\mathrm{offline}}):=\E[\text{number of $i$'s with $Y_{i}=0$}] \cdot \E[X_1] = (1-p) n^2/2$. 
    On the other hand, 
    we have $v(\OPT) \leq n/p$. 
    Then, $v(\ALG)/v(\OPT^{\mathrm{offline}})\leq v(\OPT)/v(\OPT^{\mathrm{offline}})\leq 2/(p(1-p)n)$, for any algorithm $\ALG$. This shows that, for any fixed $p\in (0,1)$, no constant competitive ratio is possible when we compare $v(\ALG)$ against the value of the offline algorithm that knows all the information in advance.
\end{example}

Note that one can formulate the DM's problem as a stochastic dynamic program (see \S \ref{sec: prelim}). However, analyzing competitive ratios is not straightforward as we are comparing two algorithms that operate with asymmetric information. Nevertheless, the dynamic program reveals a nice structural property that we will leverage in our analysis. Indeed, given an instance $I=(n,F)$ of the \ppp problem, there exists an optimal algorithm that employs thresholds $\tau_1 \geq \cdots \geq \tau_n$ such that if the $i^{\text{th}}$ observed value $X_i$ is at least $\tau_i$, then the algorithm selects it (see Proposition~\ref{prop:decreasing_thres}). This motivates us to focus on the two
extremes for this class of threshold-based algorithms: (1) NA, which is the class of \emph{non-adaptive algorithms}, where $\tau_1=\cdots=\tau_n$; and (2) AD, which is the class of \emph{adaptive algorithms}, where we do not constrain the thresholds to be the same, but the thresholds must be non-increasing, i.e., $\tau_1 \geq \cdots \geq \tau_n$. For a fixed $p\in (0,1)$ {and  $\zeta \in [0,1]$}, and a class of algorithms $\mathcal{C}\in \{ \mathrm{NA}, \mathrm{AD} \}$, we define the competitive ratio for instances of length $n$ in class $\mathcal{C}$ via

\[
\gamma_{n,p,{ \zeta}}^{\mathcal{C}} := \hspace{0.12cm} \sup_{\ALG \text{ in } \mathcal{C}} \hspace{0.12cm} \inf_{F \in \mathcal{F}} \hspace{0.12cm}  \frac{v(\ALG(n,F))}{v(\OPT(n,F))}. 
\]
Given that $\mathcal{C}$ is a class of threshold-based algorithms, we can exchange the supremum with the infimum in the definition of $\gamma_{n,p, \zeta}^{\mathcal{C}}$ without altering its value. Thus, the largest competitive ratio in the class $\mathcal{C}$ for the \ppp problem can be written as follows:
\[
\gamma_{p, { \zeta}}^{\mathcal{C}} :=  \hspace{0.12cm} \sup_{\ALG \text{ in } \mathcal{C}} \hspace{0.12cm} \inf_{n \geq 1, F \in \mathcal{F}} \hspace{0.12cm}  \frac{v(\ALG)}{v(\OPT)} = \inf_{n\geq 1} \gamma_{n,p,{  \zeta}}^{\mathcal{C}}.
\]

\subsection{Our Technical Contributions} \label{subsec:contri}

Our first result provides a tight performance bound for the non-adaptive threshold algorithms. 

\begin{theorem}\label{thm:static}
For any $p\in (0,1)$ {and  $\zeta \in [0,1]$}, we have $\gamma_{p, \zeta}^{\mathrm{NA}}= 1-1/e$.
\end{theorem}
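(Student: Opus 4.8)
The plan is to reduce everything to two explicit formulas. Write $\phi(t)=\Pr(X_1>t)$, let $\tau\ge 0$ be a threshold, $q=\Pr(X_1\ge\tau)=\phi(\tau)$, $\mu_\tau=\E[X_1\mid X_1\ge\tau]$, and let $v_\tau$ denote the value of the single-threshold-$\tau$ algorithm (recall that, since $\mathrm{NA}$ consists of threshold rules, $\gamma_p^{\mathrm{NA}}=\inf_{n,F}\sup_\tau v_\tau/v(\OPT)$). On the one hand, the reward of the $\tau$-rule is $\sum_{j=1}^{\min\{D-1,A\}}X_{a_j}$, where $A\sim\mathrm{Binomial}(n,q)$ counts the above-threshold values, $a_1<\cdots<a_A$ are their positions, and $D\sim\mathrm{Geom}(p)$ is independent of the values; conditioning on $M=\min\{D-1,A\}$, the $M$ collected values are i.i.d.\ draws from $F$ restricted to $[\tau,\infty)$, so $v_\tau=\mu_\tau\,\E[M]$. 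On the other hand, the clairvoyant inspects values largest-first (this is optimal simultaneously for every realization of $D$), so by the problem statement $v(\OPT)=\sum_{k=1}^n(1-p)^k\,\E[X_{(k)}]$ with $\E[X_{(k)}]=\int_0^\infty\Pr(\mathrm{Binomial}(n,\phi(t))\ge k)\,dt$. Both sums collapse via the same identity, obtained by exchanging the order of summation and invoking the binomial probability generating function: $\sum_{k\ge1}(1-p)^k\Pr(\mathrm{Binomial}(n,\rho)\ge k)=\tfrac{1-p}{p}\bigl(1-(1-p\rho)^n\bigr)$. This yields
\[
v_\tau=\frac{1-p}{p}\,\mu_\tau\bigl(1-(1-pq)^n\bigr),\qquad v(\OPT)=\frac{1-p}{p}\int_0^\infty\bigl(1-(1-p\,\phi(t))^n\bigr)\,dt .
\]

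For $\gamma_p^{\mathrm{NA}}\ge 1-1/e$, given $(n,F)$ I choose the threshold at quantile level $q^\star=\min\{1,\tfrac1{np}\}$. If $np\ge 1$, then $pq^\star=\tfrac1n$, hence $(1-pq^\star)^n=(1-\tfrac1n)^n\le\tfrac1e$ and $v_\tau\ge(1-\tfrac1e)\tfrac{1-p}{p}\mu_\tau$; splitting $\int_0^\infty$ at $\tau$, bounding the integrand by $1$ on $[0,\tau]$ and by $np\,\phi(t)$ on $[\tau,\infty)$ via Bernoulli's inequality, then using $\int_\tau^\infty\phi=q^\star(\mu_\tau-\tau)$ and $npq^\star=1$, gives $v(\OPT)\le\tfrac{1-p}{p}\mu_\tau$, so $v_\tau/v(\OPT)\ge 1-1/e$. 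If $np<1$, then $q^\star=1$, i.e.\ the rule accepts every value, $v_\tau=\tfrac{1-p}{p}\E[X_1]\bigl(1-(1-p)^n\bigr)$, and it suffices to prove $1-(1-p\phi)^n\le\tfrac1{1-1/e}\,\phi\,\bigl(1-(1-p)^n\bigr)$ for $\phi\in(0,1]$; the map $\phi\mapsto\tfrac1\phi(1-(1-p\phi)^n)$ is decreasing (concavity of $\phi\mapsto1-(1-p\phi)^n$ together with its value $0$ at $0$) with supremum $np$ as $\phi\to0^+$, so this reduces to $np\le\tfrac{1-(1-p)^n}{1-1/e}$, which follows from $1-(1-p)^n\ge 1-e^{-np}\ge np\,(1-1/e)$---the last step because $x\mapsto1-e^{-x}$ lies above its chord on $[0,1]$ and $np\le1$.

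For $\gamma_p^{\mathrm{NA}}\le 1-1/e$, the two formulas reveal an exact correspondence with the i.i.d.\ prophet inequality: for any $\tau>0$, both $v_\tau$ and $v(\OPT)$ equal $\tfrac{1-p}{p}$ times, respectively, the value of the single-threshold-$\tau$ rule and the value of the prophet, on $n$ i.i.d.\ copies of $BX$, where $B\sim\mathrm{Bernoulli}(p)$ is independent of $X\sim F$ (indeed $\Pr(BX\ge\tau)=p\,q$, $\E[BX\mid BX\ge\tau]=\mu_\tau$, and $\Pr(BX>t)=p\,\phi(t)$ for $t\ge0$). Hence the single-threshold competitive ratio of the \ppp\ instance $(n,F,p)$ coincides with that of the single static-threshold i.i.d.\ prophet inequality on $n$ samples from $\mathrm{law}(BX)$; and since $n$ copies of $BX$ behave like $\approx np$ copies of $X$, taking $F$ to be a worst-case distribution for a single static threshold in the i.i.d.\ prophet inequality and letting the effective sample size $np\to\infty$ drives $\sup_\tau v_\tau/v(\OPT)$ down to the tight value $1-1/e$ known for that problem. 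Combining the two bounds gives $\gamma_p^{\mathrm{NA}}=1-1/e$.

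The main obstacle is this upper bound: producing a single instance (family) that is simultaneously poor against every threshold. Simple one- or two-scale distributions (or equal-revenue-type distributions) do not suffice, since a moderately high threshold then recovers almost all of $v(\OPT)$; one genuinely needs a multi-scale construction---equivalently, the tight hard instance for the single static threshold in the i.i.d.\ prophet inequality---together with a check that the point mass at $0$ contributed by $B$ cannot help the algorithm. The first two parts, by contrast, are essentially bookkeeping once the closed forms above are in hand.
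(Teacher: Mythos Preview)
Your lower bound is correct and considerably more direct than the paper's. The paper expresses both $v(\ALG^q)$ and $v(\OPT)$ as integrals against an auxiliary density $r(v)$ satisfying $\int_u^1 r(v)\,dv = F^{-1}(1-u)$, bounds the ratio of integrands pointwise in $v$ (Lemma~\ref{lem:general_formula}, which in turn needs a separate monotonicity statement, Lemma~\ref{lemma_mono}), and then proves that the resulting distribution-free bound $\eta_{n,p}^{q_n}$ is nonincreasing in $n$ with limit $1-1/e$ (Lemma~\ref{lem:monotonicity_eta_lb_static}). You bypass all of this machinery by writing $v(\OPT)$ as the tail integral $\tfrac{1-p}{p}\int_0^\infty(1-(1-p\phi(t))^n)\,dt$, splitting at the threshold, and applying Bernoulli's inequality on $[\tau,\infty)$. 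Together with $npq^\star=1$ this gives $v(\OPT)\le\tfrac{1-p}{p}\mu_\tau$ in one line when $np\ge1$; the chord argument for $np<1$ is equally short. This is a genuine simplification: no $r(v)$ representation, no case analysis in $n$, no monotonicity lemma.

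Your upper bound, however, is a reduction together with an appeal to an unproved fact, not a proof. The correspondence you identify is correct and insightful: for every $\tau>0$, the ratio $v_\tau/v(\OPT)$ on the \ppp\ instance $(n,F,p)$ equals the single-threshold-$\tau$ competitive ratio for the i.i.d.\ prophet inequality on $n$ copies of $BX$. But this only transfers the problem. To conclude you need a hard instance for the i.i.d.\ prophet single-threshold problem \emph{of the specific shape} $\mathrm{law}(BX)$, i.e., carrying an atom of mass at least $1-p$ at zero. Your ``effective sample size $np$'' heuristic---that $n$ copies of $BX$ mimic $np$ copies of $X$---is precisely the step that needs work, since $(1-pq)^n$ and $(1-q)^{np}$ only agree asymptotically for small $q$, and the standard prophet hard instances are $n$-dependent in a way that does not obviously survive this substitution. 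You correctly flag this as ``the main obstacle'' but neither carry out the construction nor give a citation that settles it in the required form. The paper takes the explicit route: it writes down a two-parameter family $F^{-1}(1-u)=\tfrac{a_1}{n}\delta_{\{0\}}(u)+a_2\mathbbm{1}_{(0,\beta/n]}(u)$, computes the limiting $v(\ALG^{\lambda/n})$ and $v(\OPT)$ in closed form (Lemmas~\ref{lemma_hard_opt}--\ref{lemma_hard_threshold}), locates the optimal threshold $\lambda^\star=1/p$ via the first-order condition~\eqref{eq:criticalpoint}, and then tunes $a_2=p(e-2)a_1$ and $\beta$ so that the ratio at $\lambda^\star$ is $1-1/e+o(1)$. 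Without either this construction or a precise external reference for tightness of $1-1/e$ for single thresholds in the i.i.d.\ prophet inequality (together with a check that the forced atom at zero is harmless), your upper bound remains a sketch.
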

To prove Theorem~\ref{thm:static}, we explicitly construct a fixed threshold algorithm with a competitive ratio of at least $1-1/e$, through focusing on a subclass of NA, the \textit{quantile-based} threshold algorithms. Algorithms in this subclass are parametrized by $q\in [0,1]$, which we refer as the quantile of the algorithm. An algorithm with a quantile $q$ computes a threshold $\tau$ via $q=\Pr[X\geq \tau]=1-F(\tau)$, and selects any value in the input sequence that is at least $\tau$. 

To prove that the competitive ratio of an algorithm $\ALG$ is at least $\theta$, one can use standard stochastic dominance arguments, and it is sufficient to show that $\Pr \left(\ALG \text{ obtains value of at least }\tau \right)$ is at least  $\theta \cdot \Pr \left(\OPT \text{ obtains value of at least }\tau\right)$ for any $\tau\geq 0$. However, due to possibility of making multiple selections, analyzing both of these quantities is not evident. Instead, we characterize the optimal values for both the clairvoyant and the quantile-based threshold algorithms as integrals of the inverse of $F$. {From here, we can deduce a lower bound for $\gamma_{n,p,\zeta}^{\mathrm{NA}}$ that only depends on $n$ and $p$, and not dependent on $\zeta$ or $F$.} This lower bound turns out to be monotonically decreasing in $n$ and converging to $1-1/e$, providing the desired lower bound on $\gamma_{p, \zeta}^{\mathrm{NA}}$. {To prove an upper bound on $\gamma_{p, \zeta}^{\mathrm{NA}}$, we explicitly construct an instance $(n,F)$ such that $\gamma_{n,p,0}^{\mathrm{NA}}\leq 1-1/e+o(1)$, and since $\gamma_{p, \zeta}^{\mathrm{NA}}=\inf_{n\geq 1} \gamma_{n,p,\zeta}^{\mathrm{NA}}$, the result follows.} 
{For ease of exposition, we present the proof for the case $\zeta = 0$ in \S\ref{sec:static}, and we treat the general case $\zeta \in [0,1]$ in \S\ref{sec:recovery_na}. }



We then turn our attention to the performance of the general class of adaptive threshold algorithms, and we show that \textit{adaptivity} is key. Our second result shows that a competitive ratio better than $1-1/e$ is possible for such algorithms.

\begin{theorem}\label{thm:adaptive}
For any $p\in (0,1)$ and {  $\zeta \in [0,1]$}, $\liminf_{n \to \infty} \gamma_{n, p, \zeta}^{\mathrm{AD}}\geq \theta^*\approx 0.745$, where $\beta=1/\theta^*$ is the unique solution to the integral equation $\int_0^1 (y-y\ln y+(\beta-1))^{-1}\, \mathrm{d}y=1$. 
\end{theorem}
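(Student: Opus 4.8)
The plan is to reparametrize adaptive threshold algorithms by their \emph{quantiles} and show that, up to a rescaling by $p$, the resulting optimization is exactly the one solved by the optimal algorithm for the $n$-variable i.i.d.\ prophet inequality. Since AD consists of algorithms with non-increasing thresholds, writing $q_i:=1-F(\tau_i)$ and using the exchange of $\sup$ and $\inf$ noted after the definition of $\gamma^{\mathcal C}_{n,p}$, it suffices to produce, for each large $n$, a fixed non-decreasing sequence $0\le q_1\le\cdots\le q_n\le 1$ whose induced algorithm (use threshold $F^{-1}(1-q_i)$ at step $i$) has competitive ratio at least $\theta^*-o(1)$, uniformly over $F\in\mathcal F$.

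First I would obtain closed forms for both benchmarks as integrals of $\phi:=F^{-1}(1-\cdot)$, a non-increasing non-negative function on $[0,1]$. Since the clairvoyant observes the $X_i$ in decreasing order, Fubini together with $\Pr(D-1\ge i)=(1-p)^i$ gives $v(\OPT)=\sum_{i=1}^n(1-p)^i\,\E[X_{(i)}]$; expressing $\E[X_{(i)}]$ through the Beta-density of the $i$-th smallest of $n$ i.i.d.\ uniforms and summing the binomial series collapses this to
\begin{equation*}
v(\OPT)=(1-p)\,n\int_0^1\phi(s)\,(1-ps)^{n-1}\,\mathrm ds .
\end{equation*}
For a quantile algorithm, tracking the probability $\prod_{j<i}(1-pq_j)$ of being undisrupted when step $i$ is reached gives
\begin{equation*}
v(\ALG)=(1-p)\int_0^1\phi(s)\,W(s)\,\mathrm ds,\qquad W(s):=\sum_{i=1}^n\Big(\prod_{j<i}(1-pq_j)\Big)\mathbbm 1(q_i>s).
\end{equation*}
The prefactor $(1-p)$ cancels in the ratio, and as $F$ ranges over $\mathcal F$ the function $\phi$ ranges over all non-increasing non-negative functions, whose extreme rays are the indicators $s\mapsto\mathbbm 1(s\le a)$. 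Hence the worst-case ratio of the algorithm equals
\begin{equation*}
\inf_{F\in\mathcal F}\frac{v(\ALG)}{v(\OPT)}=\inf_{a\in(0,1]}\frac{p\int_0^a W(s)\,\mathrm ds}{1-(1-pa)^n}.
\end{equation*}

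The crux is the substitution $\tilde q_i:=p\,q_i$ and $\tilde a:=pa$. A short computation shows $p\int_0^a W(s)\,\mathrm ds=\int_0^{\tilde a}\sum_{i=1}^n\prod_{j<i}(1-\tilde q_j)\,\mathbbm 1(\tilde q_i>v)\,\mathrm dv$ and $1-(1-pa)^n=1-(1-\tilde a)^n$, so the last display is precisely the worst-case competitive ratio, at scalar parameter $\tilde a$, of the threshold algorithm with quantiles $\tilde q_i$ for the $n$-variable i.i.d.\ prophet inequality. Thus
\begin{equation*}
\gamma^{\text{AD}}_{n,p}=\sup_{0\le\tilde q_1\le\cdots\le\tilde q_n\le p}\ \inf_{\tilde a\in(0,p]}\ \frac{\int_0^{\tilde a}\sum_{i=1}^n\prod_{j<i}(1-\tilde q_j)\,\mathbbm 1(\tilde q_i>v)\,\mathrm dv}{1-(1-\tilde a)^n},
\end{equation*}
which differs from the i.i.d.\ prophet problem only in the extra restrictions $\tilde q_i\le p$ and $\tilde a\le p$. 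I would then invoke the classical $\theta^*$-optimal construction for the i.i.d.\ prophet inequality \citep{hill1982comparisons,kertz1986stop,correa2021posted}: for each $n$ it provides a non-decreasing quantile sequence with all entries of order $1/n$ whose ratio over \emph{all} $\tilde a\in(0,1]$ is at least $\theta^*-o(1)$, with $\theta^*$ characterized by the stated $\beta$-equation. For $n$ large (depending on $p$) these entries are $\le p$, so the sequence is feasible above, and restricting to $\tilde a\in(0,p]$ only raises the infimum; hence $\gamma^{\text{AD}}_{n,p}\ge\theta^*-o(1)$ and $\liminf_{n\to\infty}\gamma^{\text{AD}}_{n,p}\ge\theta^*$.

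The main obstacle is the first half: establishing the two closed-form identities — especially collapsing the order-statistic sum for $v(\OPT)$ to $n(1-p)(1-ps)^{n-1}$ — and rigorously justifying the reduction of $\inf_F$ to the single scalar $a$ via the extreme-point structure of non-increasing functions. Once these are in hand, the substitution $\tilde q_i=pq_i$ makes the link to the i.i.d.\ prophet inequality mechanical, and only light asymptotic bookkeeping (plus the standard fact that the prophet-optimal quantiles are $O(1/n)$, hence eventually $\le p$) remains. A self-contained variant that avoids this citation would instead build the quantiles directly from the continuous-limit version of the tight constraint $p\int_0^a W(s)\,\mathrm ds=\theta\,(1-(1-pa)^n)$, which after rescaling becomes an integro-differential equation whose extremal solution is exactly the one encoded by $\int_0^1(y-y\ln y+\beta-1)^{-1}\,\mathrm dy=1$.
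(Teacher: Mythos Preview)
Your reduction is correct and gives a genuinely different (and cleaner) route than the paper's. The paper constructs quantile densities for \ppp directly: it partitions $[0,1]$ into intervals $[\varepsilon_{i-1},\varepsilon_i]$, samples $q_i$ from a density proportional to $-g_n'(p,\cdot)$ on each, and shows via the recursion in Lemma~\ref{lem:unique_theta} that the induced factor $\theta_n$ converges to $\theta^*$ through the Hill--Kertz ODE. You instead observe that the substitution $\tilde q_i=pq_i$, $\tilde a=pa$ turns the \ppp ratio at scalar $a$ into precisely the $n$-variable i.i.d.\ prophet ratio at scalar $\tilde a$, and then import the prophet bound as a black box. Your route makes the prophet connection explicit rather than emergent and avoids rederiving the Hill--Kertz limit; the paper's route is self-contained and also produces the explicit finite-$n$ multiplicative correction $1-\tfrac{(1-p)^{n-1}pn}{1-(1-p)^n}$ of Theorem~\ref{thm:adapt_lower_bound}. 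Your ``self-contained variant'' at the end is essentially the paper's construction.

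There is, however, a real gap in your feasibility step. The assertion that the prophet-optimal quantile sequence has ``all entries of order $1/n$'' is false: in the standard constructions (the DP optimum or the \cite{correa2021posted} scheme) the final quantile is $\tilde q_n=1$, and more generally $\tilde q_i\approx\lambda(i/n)/n$ with $\lambda(x)\to\infty$ as $x\uparrow 1$, so the late quantiles exceed any fixed $p<1$. The repair is a one-line truncation: take $\hat q_i:=\min(\tilde q_i^{*},p)$. For $v<p$ one has $\mathbbm 1(\hat q_i>v)=\mathbbm 1(\tilde q_i^{*}>v)$ while $\prod_{j<i}(1-\hat q_j)\ge\prod_{j<i}(1-\tilde q_j^{*})$, so $\hat W(v)\ge W^{*}(v)$ on $(0,p)$; hence the truncated sequence does at least as well as the prophet-optimal one for every $\tilde a\in(0,p]$, and it is feasible ($\hat q_i\le p$, still non-decreasing). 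With this patch your argument goes through; without it the ``eventually $\le p$'' claim is simply not true.
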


Theorem~\ref{thm:adaptive} shows that there is a strict separation between acting adaptive and non-adaptive. In order to prove this result, we provide a quantile-based algorithm similar to~\cite{correa2021posted} and \cite{perez2025iid}. For each observed value $i$, we sample a quantile $q_i$ from an appropriate distribution, and we compute a threshold $\tau_i$ such that $\Pr[X\geq \tau_i]=1-F(\tau_i)=q_i$. The algorithm then selects $i$ if the observed value $X_i$ is at least $\tau_i$. By an appropriate choice of distribution for the quantiles, we show that the value of the algorithm is asymptotically a fraction $\theta^*\approx 0.745$ of $v(\OPT)$. { Again for ease of exposition, we present the algorithm and its analysis for $\zeta = 0$ in \S\ref{sec:adapt}, and analyze the general case $\zeta \in [0,1]$ in \S\ref{sec:recovery_a}. }

Our final result establishes a limit on the competitive ratio attainable by any algorithm for the $\ppp$ problem, contributing another piece to the puzzle of answering our first research question.

\begin{theorem}\label{thm:adaptive_ub}
    For any $p\in (0,1)$ { and  $\zeta \in [0,1]$}, we have $\gamma_{p, \zeta}^{\mathrm{AD}}\leq \theta^*$, where $\theta^*$ is defined in Theorem~\ref{thm:adaptive}.
\end{theorem}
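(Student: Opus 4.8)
The plan is to construct, for every $\varepsilon>0$, a single \ppp\ instance $(n,F)$ on which \emph{no} non-increasing threshold algorithm attains more than $\theta^*+\varepsilon$ times $v(\OPT)$. Since $\gamma_p^{\mathrm{AD}}=\inf_{n\ge1}\gamma_{n,p}^{\mathrm{AD}}$ and, by Proposition~\ref{prop:decreasing_thres}, the optimal online algorithm is itself a non-increasing threshold algorithm (and for a fixed instance the best randomized such policy is matched by a deterministic one), such an instance yields $\gamma_{n,p}^{\mathrm{AD}}\le\theta^*+\varepsilon$ for that $n$, hence $\gamma_p^{\mathrm{AD}}\le\theta^*$. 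The hard instances will be obtained from the extremal instances of the i.i.d.\ single-choice prophet inequality via a $p$-dependent rescaling, which is the precise form of the connection announced in the introduction and explains why the bound does not depend on $p$.

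First I would move both quantities into ``quantile space.'' From $v(\OPT)=\E\big[\sum_{j=1}^{n}(1-p)^{j}X_{(j)}\big]$, the layer-cake formula, and $\#\{i:X_i>t\}\sim\mathrm{Bin}(n,\bar F(t))$ with $\bar F=1-F$, one gets the closed form
\[
v(\OPT)=\tfrac{1-p}{p}\int_0^\infty\big(1-(1-p\,\bar F(t))^{n}\big)\,\mathrm{d}t .
\]
For a non-increasing threshold algorithm with thresholds $\tau_1\ge\cdots\ge\tau_n$, write $q_i=\bar F(\tau_i)$ (non-decreasing in $i$); conditioning on the accepted set and the accepted values and using that the disruption coins are i.i.d.\ and independent of everything, the expected value kept equals $\E\big[\sum_{m\ge1}(1-p)^m X^{\mathrm{sel}}_m\big]$, with $X^{\mathrm{sel}}_m$ the $m$-th accepted value, whose conditional mean depends on the thresholds only through the $q_i$'s. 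Now fix a non-increasing ``shape'' $h\colon[0,\infty)\to[0,\infty)$ with $\int_0^\infty h<\infty$ and take $F=F_n$ with $\bar F_n(t)=\min\{1,h^{-1}(t)/n\}$. Poissonization of extremes gives $X_{(j)}\Rightarrow h(\Gamma_j)$ with $\Gamma_j$ a sum of $j$ i.i.d.\ $\mathrm{Exp}(1)$'s, hence $v(\OPT(n,F_n))\to p(1-p)\int_0^\infty H(s)e^{-ps}\,\mathrm{d}s$ with $H=\int_0^\cdot h$. On the algorithm side only thresholds with $q_i$ of order $1/n$ contribute in the limit (thresholds with much larger $q_i$ accept values of vanishing size), so let $Q(x)=\lim_n n\,q_{\lfloor xn\rfloor}$ be the limiting non-decreasing rescaled-quantile profile; the accepted indices, divided by $n$, converge to a Poisson process of intensity $Q$ on $[0,1]$ with points $T_1<T_2<\cdots$, and the elementary identity
\[
\E\Big[\sum_{k\ge1}(1-p)^{k}\,\mathbbm{1}\{T_k\in\mathrm{d}x\}\Big]=(1-p)\,Q(x)\,e^{-p\int_0^x Q(y)\,\mathrm{d}y}\,\mathrm{d}x
\]
then gives, uniformly over the class, $v(\ALG(n,F_n))\le(1-p)\int_0^1 H(Q(x))\,e^{-p\int_0^x Q(y)\,\mathrm{d}y}\,\mathrm{d}x+o(1)$ as $n\to\infty$.

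It remains to reduce the limiting ratio to the prophet inequality and to optimize over the shape. Setting $\hat h(s):=h(s/p)/p$ and $\hat H=\int_0^\cdot\hat h$ (so $\hat H(q)=H(q/p)$), the substitutions $s\mapsto s/p$ and $Q\mapsto pQ=:\hat Q$ turn the limit of $v(\ALG)/v(\OPT)$ into
\[
\sup_{\hat Q\ \text{non-decreasing}}\ \frac{\int_0^1\hat H(\hat Q(x))\,e^{-\int_0^x\hat Q(y)\,\mathrm{d}y}\,\mathrm{d}x}{\int_0^\infty\hat H(w)\,e^{-w}\,\mathrm{d}w},
\]
which is exactly the asymptotic competitive ratio of the i.i.d.\ single-choice prophet inequality for the distribution with rescaled-quantile function $\hat h$. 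As $h$ ranges over admissible shapes so does $\hat h$, and by Hill--Kertz/Kertz \citep{hill1982comparisons,kertz1986stop,correa2021posted} the infimum of the displayed quantity over all shapes equals $\theta^*=1/\beta$, with $\beta$ the root of the integral equation in Theorem~\ref{thm:adaptive}. Hence one can choose $\hat h$ (a suitable truncation of the extremal shape) so that this supremum is at most $\theta^*+\varepsilon$; pulling back via $h(s)=p\,\hat h(ps)$ produces an instance with $\limsup_n v(\ALG(n,F_n))/v(\OPT(n,F_n))\le\theta^*+\varepsilon$ for every $\ALG$, so $\gamma_{n,p}^{\mathrm{AD}}\le\theta^*+\varepsilon$ for all large $n$ and therefore $\gamma_p^{\mathrm{AD}}=\inf_n\gamma_{n,p}^{\mathrm{AD}}\le\theta^*$. (The same argument can be run at finite $n$ by padding a finite Kertz instance, avoiding the continuum limit.)

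I expect the main obstacle to be the uniform continuum upper bound on $v(\ALG)$: one must control every non-increasing threshold sequence simultaneously---including pathological ones whose quantiles are far from order $1/n$---make the discrete-to-continuum passage rigorous, and prove the Poisson generating-function identity that collapses the discounted multi-selection value into a single one-dimensional integral. A secondary point of care is the uniform integrability needed to upgrade $X_{(j)}\Rightarrow h(\Gamma_j)$ to convergence of $v(\OPT)$, which uses $\int_0^\infty h<\infty$ and the boundedness of $F_n$.
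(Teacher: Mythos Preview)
Your proposal is correct and lands on essentially the same hard instances as the paper, but the route is genuinely different. The paper writes down the worst-case distribution explicitly via the Hill--Kertz ODE solution $y$, computes $v(\OPT)$ in closed form (Proposition~\ref{prop:ub_opt}), and then bounds the dynamic-programming value $D_1^\varepsilon$ from above by sandwiching it between the discrete DP and a continuous Bellman solution $d(x)=\int_x^1(-1/y')$ (Lemmas~\ref{lem: ub_adapt_be} and~\ref{lem: ub_dp_approx}); the ratio $D_1^\varepsilon/v^\varepsilon(\OPT)$ is then shown to tend to $\theta^*$ by a direct appeal to the Hill--Kertz integral. You instead pass through a Poisson limit and the rescaling $h(s)\mapsto\hat h(s)=h(s/p)/p$, $Q\mapsto\hat Q=pQ$, under which the limiting ratio becomes \emph{literally} the asymptotic i.i.d.\ single-choice prophet ratio for the shape $\hat h$; since this map is a bijection on admissible shapes, you import the $\theta^*$ upper bound wholesale from \citep{hill1982comparisons,kertz1986stop}. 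Your generating-function identity $\E\big[\sum_k(1-p)^k\mathbbm{1}\{T_k\in dx\}\big]=(1-p)Q(x)e^{-p\int_0^xQ}\,dx$ is exactly what makes geometric discounting in \ppp\ collapse to ``stop at first acceptance'' in the prophet problem, and this is the conceptual content your argument makes explicit. The paper's approach, by working with the DP directly, sidesteps the uniformity and Poissonization issues you correctly flag; in fact your own parenthetical remark points to the cleanest way to close those gaps: at finite $n$, the \ppp\ DP with shape $h$ satisfies $D_i\le(1-p)\tilde D_i$ where $\tilde D_i$ is the prophet DP for shape $\hat h$ (the constraint $\hat Q\le pn<n$ only shrinks the feasible set), while the two $\OPT$ values agree up to $o(1)$, so one can avoid the continuum limit over policies entirely.
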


We derive the upper bound by adapting the worst-case instance for the i.i.d.\ prophet inequality problem from~\citep{liu2020variable,hill1982comparisons} to \ppp. We present the details in \S\ref{sec:adapt}. {  Taken together, Theorems~\ref{thm:adaptive} and~\ref{thm:adaptive_ub} imply that, for any fixed \(p\in(0,1)\),
\(\lim_{n\to\infty}\gamma_{n,p, \zeta}^{\mathrm{AD}}=\theta^*\approx 0.745\).}

{  Our quantile-based techniques are highly flexible. In \S\ref{sec:extensions}, we show that the single-threshold approach extends to a variant of the \ppp{} problem, in which the objective is not to maximize the sum of the selected values, but rather to \emph{maximize the largest value} among them. For this model, we prove a tight competitive ratio of $1 - e^{-\lambda(p)}$ for the class of non-adaptive algorithms, where $\lambda=\lambda(p)$ is the unique solution to the equation $(1 - e^{-\lambda p}) = p\lambda (1 - e^{-\lambda})$. Notably, this competitive ratio lies in the range $[1 - 1/e, 1]$ and approaches $1$ as $p \to 0$. A formal definition of this model, along with our results and analysis, is provided in \S\ref{sec:extensions}. }


\section{Related Work}

Online selection, as well as the tension between collecting short-term values and saving resources for long-term values, have been extensively studied in computer science and operations research literature through the lens of optimal stopping theory (e.g., see~\cite{hill1992survey, shiryaev2007optimal, krengel1977semiamarts}), and for their broad applicability on various practical problems from crowd-sourcing \citep{mehta2012online} to capital investment problem \citep{goyal2010ptas}. Here, we present several streams of literature that are closely related to our work.

\noindent{\textbf{Prophet Inequality.}} 
One related stream of literature to our work is the prophet inequality problem (e.g., see \cite{krengel1977semiamarts,samuel1984comparison}), in particular, the case with i.i.d.\ values. In the i.i.d.\ prophet inequality problem \citep{hill1982comparisons}, a DM must select at most one value from a sequence of i.i.d.\ randomly generated values, and her goal is to design an algorithm with a large competitive ratio, where the offline benchmark is the expected maximum of the sequence of values. It is known that the optimal competitive ratio for the i.i.d.\ prophet inequality problem is $\theta^* \approx 0.745$, which is the unique parameter appearing in the Hill \& Kertz equation~\citep{hill1982comparisons,kertz1986stop,correa2021posted}. This optimal algorithm can be attained by a quantile-based algorithm that depends solely on $n$ and is independent of the specific instance distribution. To facilitate the analysis of the algorithm, the competitive ratio can be retrieved through a unique solution of the Hill and Kertz equation. Similar techniques to using quantile algorithms (e.g., see \cite{feng2025iid, allouah2023optimal}) and ODEs have also been explored in various recent work (e.g., see \cite{liu2020variable, correa2021posted}). By generalizing the Hill and Kertz equation, \cite{brustle2024splitting} introduces a novel non-linear system of differential equations and provide tight analysis for the $k$-prophet inequality problem. Our paper extends the single-selection prophet setting by incorporating an additional disruption indicator, and we use a similar quantile-based approach to develop an adaptive threshold algorithm, which achieves an asymptotic competitive ratio of $\theta^*$ as well.

\noindent {\bfseries Random horizon.} There is large body of work in optimal stopping problems with random horizon (e.g., see \citep{hajiaghayi2007automated,zhang2023secretary}). Here, the disruption is caused by not knowing the length $n$ of values upfront. For example, the uncertain horizon setting has been extensively studied within the framework of the secretary problem. When no distributional information is available regarding the disruption time---beyond which no further applicants can be picked---it is known that no algorithm can achieve a constant competitive ratio~\citep{hill1991minimax}. However, if a random termination time with a known value-independent distribution exists, a conditionally optimal selection rule can be formulated~\citep{samuel1996optimal}. In our case, the disruption is caused potentially by selecting a value, and in principle we could observe the whole sequence of $n$ values. Closer to our work is~\cite{alijani2020predict}, which studies the prophet inequality problem with supply uncertainty. Even though the \ppp problem has applications in settings with supply uncertainty, our main focus is in applications, where serving a request can disrupt the remaining selection process. In a similar vein, our model is loosely related to  \emph{stochastic knapsack} problems (e.g., see \citep{dean2008approximating,ma2018improvements}), where items have unknown stochastic sizes, items are packed sequentially, and if the knapsack is overflowed, then the whole process stops.
We could regard the \ppp problem as a knapsack problem, where we have a knapsack of capacity $1$, and each item has two possible sizes: size $1/n$ with probability $1-p$ and size $1+1/n$ with probability $p$. Nevertheless, the knapsack literature mostly deals with approximation algorithms as opposed to competitive analysis that we pursue in this work.

\noindent{\bfseries Dynamic matching and online resource allocation.} The tension between committing to a decision now and delaying decisions in anticipation of better opportunities arises as an inherent trade-off in many stochastic models. In the context of dynamic matching and online resource allocation, recent work addressed this trade-off through the lens of an all-time regret notion \citep{wei2023constant, kerimov2024dynamic, kerimov2025optimality, he2025online, gupta2024greedy} and approximation algorithms \citep{aouad2020dynamic}. In particular, the all-time regret notion implicitly deals with uncertain disruption by guaranteeing near-optimal performance throughout the time horizon. Our work studies the same trade-off in the context of online selection by explicitly introducing a disruption indicator, and while there are differences across these stochastic models, we hope that the modeling we propose in this work can be leveraged to study this fundamental tension under disruptions such as match rejections and item returns. 

\section{Preliminaries} \label{sec: prelim}

In this section, we present some preliminaries needed in the remainder of the paper. All missing proofs in the main body are deferred to the appendix. The value maximization problem faced by the DM can be solved by means of stochastic dynamic programming. For any $n \geq 1$, a disruption parameter $p \in (0,1)$, {  a partial recovery parameter $\zeta \in [0,1]$}, and a distribution $F \in \mathcal{F}$, let $D_{i}(p,F)$ be the optimal value obtainable from the sequence $(X_j)_{j=i}^n$, when $i-1 \leq n$ values have been observed already (i.e., the $i^\text{th}$ value is ready to be observed next) and no disruption has occurred yet. Then clearly $D_{n+1} (p,F) = 0$, and for $i\leq n$, we have
{ 
\begin{equation*}\label{exp:dynamic_programming}
\begin{aligned}
D_i(p,F) \; = \max_{\tau_i \geq 0} & \Big\{ 
\Pr[X < \tau_i]\, D_{i+1}(p,F) \hspace{9cm} \text{(DP)} \\
& + \Pr[X \geq \tau_i] \Big( (1-p)\big( \mathbb{E}[X \mid X \geq \tau_i] + D_{i+1}(p,F)\big)
+ p\,\zeta\, \mathbb{E}[X \mid X \geq \tau_i] \Big)
\Big\}.
\end{aligned}
\end{equation*}}
\noindent By solving (DP), one can obtain an optimal algorithm that at time $i \in [n]$, selects the observed value 
$X_i$ if it is at least $\tau_i$, where $\tau_i$ is the maximizer of (DP). The next proposition states that it is optimal to use a non-increasing sequence of thresholds, i.e., $\tau_1 \geq \cdots \geq \tau_n$.

\begin{proposition}\label{prop:decreasing_thres}
There exists an optimal algorithm for the \ppp problem that employs a non-increasing sequence of thresholds. 
\end{proposition}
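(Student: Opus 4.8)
The plan is to prove the claim by backward induction on the index $i$, showing that the optimal threshold $\tau_i$ obtained from the maximizer of \eqref{exp:dynamic_programming} satisfies $\tau_i \geq \tau_{i+1}$. The key observation is that in \eqref{exp:dynamic_programming}, the continuation value $D_{i+1}(p,F)$ is a constant (with respect to the current decision), so the optimal threshold depends on the continuation value in a monotone way. First I would rewrite the objective in \eqref{exp:dynamic_programming}, using the tower property, as
\[
D_{i}(p,F) = D_{i+1}(p,F) + \max_{\tau \geq 0}\, (1-p)\,\Pr[X \geq \tau]\bigl(\E[X \mid X \geq \tau] - p\,D_{i+1}(p,F)\bigr),
\]
so that, writing $g_c(\tau) := \Pr[X \geq \tau]\bigl(\E[X \mid X \geq \tau] - c\bigr) = \int_\tau^\infty (x - c)\,\mathrm{d}F(x)$ with $c = p\,D_{i+1}(p,F)$, the optimal threshold is a maximizer of $g_c$. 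Differentiating, $g_c'(\tau) = -( \tau - c) f(\tau)$, which is positive for $\tau < c$ and negative for $\tau > c$; hence the unique maximizer of $g_c$ is exactly $\tau = c = p\,D_{i+1}(p,F)$. So the optimal threshold at step $i$ is precisely $\tau_i = p\, D_{i+1}(p,F)$.

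Given this clean characterization, the statement reduces to showing that the sequence $D_{i+1}(p,F)$ is non-increasing in $i$ on $\{0,1,\dots,n\}$, equivalently $D_i(p,F) \geq D_{i+1}(p,F)$ for all $i \le n$. This is intuitive: having more values remaining can only help. Formally I would argue by backward induction: $D_{n+1} = 0 \le D_n$ since $D_n = \max_\tau g_{pD_{n+1}}(\tau) \ge g(\,\cdot\,)|_{\tau=\infty}\cdot(1-p) + 0 = 0$ (taking $\tau$ large enough to never select is feasible and yields $D_{n+1}$). For the inductive step, suppose $D_{i+1} \ge D_{i+2}$. Then $c_i := p D_{i+1} \ge p D_{i+2} =: c_{i+1}$, and since $g_{c}(\tau) = g_{c'}(\tau) + (c' - c)\Pr[X\ge\tau]$ we get, evaluating at the optimal $\tau_{i+1}$ for step $i+1$,
\[
D_i - D_{i+1} = \max_\tau g_{c_i}(\tau) \ge g_{c_i}(\tau_{i+1}) \ge g_{c_{i+1}}(\tau_{i+1}) + (c_{i+1}-c_i)\Pr[X \ge \tau_{i+1}] = (D_{i+1}-D_{i+2}) + (c_{i+1}-c_i)\Pr[X\ge \tau_{i+1}].
\]
Hmm — the sign of $(c_{i+1}-c_i)$ is the wrong way for this crude bound, so instead I would run the monotonicity of $D_i$ and of $\tau_i$ together in a single induction: assume both $D_{i+1}\ge D_{i+2}$ and (as a consequence via $\tau_i = pD_{i+1}$, which need not even be invoked separately) derive $D_i \ge D_{i+1}$ directly from the fact that at step $i$ the DM can mimic the optimal policy for the $(n-i)$-length subproblem — feasibility of "ignoring $X_i$ by setting $\tau_i=\infty$" immediately gives $D_i \ge D_{i+1}$ with no reference to $c$ at all. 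Then $\tau_i = pD_{i+1} \ge pD_{i+2} = \tau_{i+1}$ follows.

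The main obstacle — and the only subtle point — is the case where the maximizer in \eqref{exp:dynamic_programming} is not unique or the maximand lies at the boundary $\tau = \infty$ (e.g., when $pD_{i+1}$ exceeds the support of $F$), so I need to be careful that "the optimal algorithm" is well-defined and that the threshold selection $\tau_i = pD_{i+1}(p,F)$ is a valid choice of maximizer; since $F$ is continuous and increasing on its support, $g_c$ is strictly unimodal there and this is not a real problem, but it should be stated cleanly. I also want to make sure the reduction $D_i \ge D_{i+1}$ is argued in a way that does not accidentally assume the monotone-threshold structure it is meant to establish — the cleanest route is the feasibility/coupling argument that the step-$i$ DM can replicate any step-$(i{+}1)$ policy by first rejecting $X_i$. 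Everything else is a short computation with $g_c(\tau)=\int_\tau^\infty (x-c)\,\mathrm dF(x)$.
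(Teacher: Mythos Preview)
Your approach is correct in structure but contains a small algebraic slip: expanding \eqref{exp:dynamic_programming} gives
\[
D_i = D_{i+1} + \max_{\tau\ge 0}\Bigl\{(1-p)\Pr[X\ge\tau]\,\E[X\mid X\ge\tau] - p\,\Pr[X\ge\tau]\,D_{i+1}\Bigr\},
\]
so the constant inside $g_c$ should be $c = \tfrac{p}{1-p}D_{i+1}$, not $pD_{i+1}$. This does not affect the logic: the maximizer of $g_c$ is still $\tau_i = c$, which is increasing in $D_{i+1}$, and your feasibility argument ($\tau=\infty$ is admissible, hence $D_i\ge D_{i+1}$) then yields $\tau_i\ge\tau_{i+1}$ immediately. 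The detour through the failed telescoping inequality is unnecessary and should be dropped; the two-line version ``$\tau_i$ is monotone in $D_{i+1}$; $D_{i+1}$ is non-increasing by feasibility'' is the whole proof.

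This route is genuinely different from the paper's. The paper uses an interchange argument: it supposes some optimal policy has $\tau_i<\tau_{i+1}$, swaps these two thresholds, and computes directly that the swap does not decrease the expected value, so one may iteratively reorder the thresholds into a non-increasing sequence without loss. Your argument instead exploits the first-order condition of the DP to identify the optimal threshold explicitly as a monotone function of the continuation value. The paper's approach is more elementary (no calculus, no identification of the maximizer) and applies verbatim to any threshold policy, optimal or not; your approach is shorter once the DP is on the table and yields the bonus of a closed-form expression $\tau_i=\tfrac{p}{1-p}D_{i+1}$, which makes the monotonicity transparent.
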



Next, we provide two characterizations of $v(\OPT)$ in terms of the inverse of $F$ and its derivative. These characterizations will be useful in the analysis of NA and AD classes. 
\begin{proposition} \label{prop:opt_value_complete}
The value achieved by the optimal algorithm can be characterized as follows:
    \begin{equation} \label{exp:opt_value_complete}
    v(\OPT) = \int_{0}^1 { \left(B_n (p, v) + \zeta \big[1 - (1 - vp)^n] \right)} r(v) \, \mathrm{d}v = \int_{0}^1 F^{-1}(1-q) { g_n^{(\zeta)}(p, q)} \, \mathrm{d}q, 
\end{equation}
where $B_n (p, v) := \E [\min \{X, D-1, n\} ]$ with  $X \sim \text{Bin}(n,v)$, and $r(v) > 0$ is a function such that $\int_{u}^1 r(v) \, \mathrm{d}v = F^{-1}(1-u)$,\footnote{The existence of $r(v)$ is guaranteed by our assumption of $F^{-1}$ being differentiable and strictly decreasing.} and { $g_n^{(\zeta)}(p,q) \;=\; n\bigl(1-(1-\zeta)p\bigr)\,(1-pq)^{n-1}$. }
\end{proposition}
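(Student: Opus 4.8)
The plan is to derive the two expressions for $v(\OPT)$ separately and then reconcile them by a change of variables. First I would recall from the problem formulation that $v(\OPT) = \E\!\left[\sum_{i=1}^{\min\{D-1,n\}} X_{(i)}\right]$, where $X_{(i)}$ is the $i$-th largest order statistic among $X_1,\dots,X_n$ and $D$ is geometric with parameter $p$, \emph{independent} of the $X_i$'s. The key idea is a ``layer-cake'' decomposition: write $X_{(i)} = \int_0^\infty \mathbbm{1}[X_{(i)} > t]\,dt$, or better, parametrize by quantiles. Using the substitution $t = F^{-1}(1-q)$ (so $q$ ranges over $[0,1]$ as $t$ ranges over the support), each value $X_j$ contributes $X_j = \int_0^1 F^{-1}(1-q)\,\mathbbm{1}[X_j \geq F^{-1}(1-q)]\cdot(\text{Jacobian})\,dq$. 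I would instead use the cleaner route via the function $r$: since $\int_u^1 r(v)\,dv = F^{-1}(1-u)$, we have $F^{-1}(1-U) = \int_0^1 r(v)\,\mathbbm{1}[v \geq U]\,dv$ for $U\in[0,1]$, hence for each fixed realization, $X_j = \int_0^1 r(v)\,\mathbbm{1}[1-F(X_j)\leq v]\,dv = \int_0^1 r(v)\,\mathbbm{1}[X_j \geq F^{-1}(1-v)]\,dv$.

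Next I would handle the sum of the top $\min\{D-1,n\}$ order statistics. The crucial observation is that $\sum_{i=1}^{\min\{D-1,n\}} X_{(i)}$, after the layer-cake substitution, becomes $\int_0^1 r(v)\cdot\#\{j : X_j \geq F^{-1}(1-v),\ X_j \text{ among top } \min\{D-1,n\}\}\,dv$. Since the order statistics are sorted, the count of values exceeding the threshold $F^{-1}(1-v)$ that are also among the top $D-1$ is exactly $\min\{N_v,\, D-1,\, n\}$, where $N_v := \#\{j : X_j \geq F^{-1}(1-v)\}$. Because $F$ is continuous, $N_v \sim \mathrm{Bin}(n,v)$ (each $X_j$ independently exceeds $F^{-1}(1-v)$ with probability $1 - F(F^{-1}(1-v)) = v$). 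Taking expectations and using Fubini (justified by nonnegativity), $v(\OPT) = \int_0^1 r(v)\,\E[\min\{N_v, D-1, n\}]\,dv$. By independence of $D$ from the $X_i$'s this equals $\int_0^1 r(v)\, B_n(p,v)\,dv$ with $B_n(p,v) = \E[\min\{X,D-1,n\}]$, $X\sim\mathrm{Bin}(n,v)$ — the first claimed identity.

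For the second identity I would instead expand directly in the quantile variable of $F^{-1}$. Write $v(\OPT) = \sum_{k=1}^n \E[X_{(k)}]\cdot\Pr[D-1 \geq k] = \sum_{k=1}^n \E[X_{(k)}](1-p)^k$. Then use the standard formula $\E[X_{(k)}] = \int_0^1 F^{-1}(1-q)\cdot \binom{n}{k}k\, q^{k-1}(1-q)^{n-k}\,dq$ (density of the $k$-th largest of $n$ i.i.d.\ uniforms in the $q=1-F(x)$ coordinate). Interchanging sum and integral, $v(\OPT) = \int_0^1 F^{-1}(1-q)\left[\sum_{k=1}^n (1-p)^k \binom{n}{k} k\, q^{k-1}(1-q)^{n-k}\right]dq$. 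The bracketed sum is the part I expect to require the most care: I would factor out $n(1-p)$ and recognize $\sum_{k=1}^n \binom{n-1}{k-1}(1-p)^{k-1} q^{k-1}(1-q)^{n-k} = \big((1-p)q + (1-q)\big)^{n-1} = (1 - pq)^{n-1}$ by the binomial theorem, giving exactly $g_n(p,q) = (1-p)\,n\,(1-pq)^{n-1}$. This yields the second identity, and the two agree by the substitution relating $r$ and $F^{-1}$, completing the proof.

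The main obstacle is the combinatorial manipulation in the last step — correctly reindexing the sum $\sum_k (1-p)^k\binom{n}{k}k\,q^{k-1}(1-q)^{n-k}$ into a clean binomial expansion — together with carefully justifying that the count of top-$(D-1)$ values above a threshold collapses to $\min\{N_v,D-1,n\}$ (which relies on the order statistics being nested and on $D$ being independent of the realizations). The measure-theoretic interchanges (Fubini, sum–integral swap) are routine given nonnegativity and finiteness of $v(\OPT)$, which itself follows since $v(\OPT) \leq \E[\sum_{i=1}^{D-1} X_{(i)}] \leq \E[(D-1)X_{(1)}]$ is finite under mild integrability (or simply bounded for the perturbed $F$ we work with).
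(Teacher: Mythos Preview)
Your argument is correct and rests on the same two ingredients as the paper---the representation $F^{-1}(1-u)=\int_u^1 r(v)\,dv$ for the first identity, and the order-statistic density plus the binomial theorem for the second---but your organization is more direct in both places. For the first identity the paper isolates an intermediate claim $\E[X_{(j)}]=\int_0^1 \Pr[\mathrm{Bin}(n,v)\geq j]\,r(v)\,dv$ and then sums over $j$; you bypass this by observing at once that the number of the top $\min\{D-1,n\}$ values exceeding $F^{-1}(1-v)$ is $\min\{N_v,D-1,n\}$, which is a cleaner route to $B_n(p,v)$. For the second identity the paper conditions on the value of $D$ (splitting into $D\leq n$ and $D>n$), obtains a double sum, swaps the order of summation, and only then collapses the inner sum to $(1-p)^j$; you skip all of this by using the tail decomposition $v(\OPT)=\sum_{k=1}^n \E[X_{(k)}](1-p)^k$ from the outset, after which the reindexing $\binom{n}{k}k=n\binom{n-1}{k-1}$ and the binomial identity $(1-p)q+(1-q)=1-pq$ give $g_n(p,q)$ immediately. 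Both approaches are equivalent, but yours avoids the extra bookkeeping.
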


{  Both representations of 
$v(\OPT)$ originate from its characterization in \eqref{exp:opt_prilim}, which express the value as the expected sum of order statistics up to the disruption point:
\begin{align*}
v(\OPT(I))
&= \Pr[D>n]\;\mathbb{E}\!\left[\sum_{i=1}^{n} X_{(i)}\right]
  \;+\; \sum_{j=1}^{n} \Pr[D=j]\;
      \mathbb{E}\!\left[\sum_{i=1}^{j-1} X_{(i)} \;+\; \zeta\,X_{(j)}\right].
\end{align*}
For each term $\E[X_{(i)}]$, two equivalent characterizations are available: (i) the tail-integral representation, where the integrand is the survival function, or, (ii) the density-based representation, where one starts from the distribution of order statistics, performs a change of variable into quantile form, interchanges integrals, and applies the binomial theorem. The formal proof is given in \S\ref{app:sec0}.}

Finally, we reproduce the system of ordinary differential equation (ODE) introduced by~\cite{hill1982comparisons}, colloquially known as the Hill and Kertz equation. This will be used in the analysis of the AD class. We want to find a solution $y:[0,1]\to [0 ,1]$ to the following ODE:
\begin{align*}
    \label{hill_kertz}
y'(t) &= y(t) (\ln(y(t))-1) - 
\left(\frac{1}{\theta^*} -1 \right),\\
y(0) &=1,\quad \lim_{t\uparrow 1}y(t) = 0. 
\end{align*} 
This system has a solution if and only if $\theta^*\approx 0.745$, where $\theta^*$ the unique solution to the integral equation in Theorem~\ref{thm:adaptive}.

\section{The Class of Non-Adaptive Algorithms}\label{sec:static}

In this section, we analyze the performance of non-adaptive threshold algorithms { for $\zeta = 0$}. We fully characterize the competitive ratio achievable by this class, presenting an optimal algorithm that guarantees a competitive ratio of at least $1 - 1/e$ for any input to the \ppp problem. Moreover, we show that no non-adaptive algorithm can achieve a competitive ratio exceeding $1 - 1/e + o(1)$ in the worst case. Together, these results imply Theorem~\ref{thm:static}.

\subsection{Quantile-Based Non-Adaptive Algorithm}\label{subsec:static_cr}

Our algorithm is quantile-based. It receives a quantile $q\in [0,1]$, sets the threshold $\tau = F^{-1} (1-q)$, and selects any value of at least $\tau$. For any $n \geq 1$, we denote the algorithm with quantile $q_n$ by $\ALG^{q_n}$ and its expected value by $v(\ALG^{q_n})$. The main result of this section is the following:

\begin{theorem}\label{thm:lb_for_quantile}
    For any $n\geq 1$, if ${q}_n=\min\{1,1/pn\}$, we have $\frac{v(\ALG^{q_n})}{v(\OPT)} \geq 1-\frac{1}{e}$.
\end{theorem}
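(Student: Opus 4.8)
The plan is to leverage the two integral representations in Proposition~\ref{prop:opt_value_complete} so that the ratio $v(\ALG^{q_n})/v(\OPT)$ becomes a quantity depending only on $n$ and $p$, with the distribution $F$ factored out. First I would derive an analogous integral formula for $v(\ALG^{q_n})$ in terms of $F^{-1}$. A non-adaptive quantile-$q$ algorithm accepts exactly those values exceeding $\tau = F^{-1}(1-q)$; in each round, independently, with probability $q$ the value is above threshold, and conditioned on that the selection triggers a disruption with probability $p$. So the number of successfully collected values (before any disruption) is governed by a sequence of i.i.d.\ trials, and the contribution of the ``$k$-th largest order statistic exceeding $\tau$'' can be written against the same density $r(v)$ used for $v(\OPT)$. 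Concretely, I expect to obtain $v(\ALG^{q_n}) = \int_0^1 F^{-1}(1-u)\, h_n(p,q,u)\,du$ for an explicit kernel $h_n$, where the key point is that $h_n(p,q,u)$ vanishes for $u > q$ (the algorithm never collects anything ranked below the $\tau$-quantile) and is otherwise a clean binomial-type expression.

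Second, with both $v(\ALG^{q_n})$ and $v(\OPT)$ expressed as $\int_0^1 F^{-1}(1-u)\,(\cdot)\,du$ against nonnegative kernels, and using that $F^{-1}(1-u)$ is nonnegative and nonincreasing in $u$, I would invoke a pointwise-ratio / stochastic-dominance argument: if $h_n(p,q_n,u) \geq (1-1/e)\,g_n(p,u)$ fails pointwise, one can still conclude the ratio of integrals is at least $1-1/e$ provided the ``mass'' of $h_n$ is front-loaded relative to that of $g_n$. The cleanest route is to show that for the specific choice $q_n = \min\{1, 1/(pn)\}$ the two kernels satisfy a single-crossing property in $u$ and that the cumulative integrals compare correctly — i.e., $\int_0^t h_n \geq (1-1/e)\int_0^t g_n$ for all $t$ — which combined with monotonicity of $F^{-1}(1-\cdot)$ yields the bound via an Abel summation / integration-by-parts argument. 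In fact I suspect the problem collapses further: since the worst case should drive $F^{-1}(1-u)$ to be concentrated near $u=0$ (a near-two-point-type instance), the ratio is minimized by instances where $v(\OPT)$ puts all its weight at the top, reducing the claim to comparing the total masses $\int_0^1 h_n(p,q_n,u)\,du$ and $\int_0^1 g_n(p,u)\,du$ up to the monotone rearrangement, i.e., to a purely scalar inequality in $n$ and $p$.

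Third, I would reduce everything to the scalar inequality and analyze it. With $q_n = 1/(pn)$ (the interesting regime $pn \geq 1$), the relevant quantities become concrete: $\int_0^1 g_n(p,u)\,du = (1-p)n\int_0^1(1-pu)^{n-1}du = (1-(1-p)^n)/p \cdot$ (a constant close to $1$), and the algorithm's collected mass should come out to something like $q_n \cdot (1-p)\cdot\frac{1-(1-pq_n)^n}{pq_n}$-type expression, which with $pq_n n = 1$ telescopes to a bound involving $(1-1/n)^n \to 1/e$. The target inequality then takes the shape ``ratio $\geq 1 - (1-1/n)^n \geq 1 - 1/e$'' or similar, and I would verify that the resulting function of $n$ is monotonically decreasing with limit $1-1/e$, consistent with the paper's stated outline that the bound ``is monotonically decreasing in $n$ and converging to $1-1/e$.''

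\textbf{Main obstacle.} The delicate step is the second one: handling the fact that we do \emph{not} have a pointwise domination of kernels and must instead argue via the monotonicity of $F^{-1}(1-\cdot)$ together with a comparison of \emph{cumulative} kernel masses (single-crossing). Getting the single-crossing point of $h_n(p,q_n,\cdot)$ versus $(1-1/e)g_n(p,\cdot)$ right — and showing the cumulative inequality holds up to exactly that crossing and beyond — is where the real work lies; everything after that is the routine scalar estimate that $(1 - (1-1/(n))^{?})$-type expressions are bounded below by $1-1/e$. A secondary subtlety is the boundary regime $pn < 1$, where $q_n = 1$ and the algorithm accepts every value; there one should check separately that $v(\ALG^{1})/v(\OPT) \geq 1 - 1/e$ (in fact it should be even closer to $1$), which I would dispatch by the same integral representation with $q=1$.
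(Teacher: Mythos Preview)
Your plan is correct and is essentially the paper's proof, arriving at the same distribution-free lower bound $\frac{1-(1-q_np)^n}{p}\min\{\frac{p}{1-(1-p)^n},\frac{1}{q_nn}\}$ and then the same monotonicity-in-$n$ argument. The one simplification in the paper dissolves your ``main obstacle'': rather than comparing cumulative kernels and invoking single-crossing, the paper performs your Abel/integration-by-parts step at the outset by writing $F^{-1}(1-u)=\int_u^1 r(v)\,dv$ with $r\geq 0$, which expresses both $v(\ALG^{q_n})$ and $v(\OPT)$ as $\int_0^1 r(v)\cdot(\text{something})\,dv$; the ratio is then a nonnegatively weighted average of the pointwise ratio $\frac{A_n(q,p)\min\{v,q\}/q}{B_n(p,v)}$, hence bounded below by its infimum over $v$. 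That pointwise ratio is nondecreasing on $[0,q]$ and nonincreasing on $[q,1]$, so the infimum is at $v\to 0$ or $v=1$, yielding precisely your two scalar cases without any crossing analysis.
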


\noindent Theorem \ref{thm:lb_for_quantile} immediately implies $\gamma_{p,{ 0}}^{NA}\geq 1-1/e$, proving the first part of Theorem~\ref{thm:static}. The proof relies on two key lemmas. The first lemma provides a lower bound on $v(\ALG^{q_n})/v(\OPT)$ that is independent of $F$.

\begin{lemma}{\label{lem:general_formula}}
    For any $n \geq 1$ and $p \in (0,1)$, we have
    \begin{equation}
        \frac{v(\ALG^{q_n})}{v(\OPT)} \geq \frac{(1-(1-q_np)^n)}{p}\cdot\min\left\{  \frac{p}{1-(1-p)^n}, \frac{1}{q_n n}  \right\}, \label{eq:lower_bound_fixed}
    \end{equation}
    for any instance of the $\ppp$ problem.
\end{lemma}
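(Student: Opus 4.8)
The plan is to prove Lemma~\ref{lem:general_formula} by separately deriving a convenient closed form (or lower bound) for $v(\ALG^{q_n})$ and then comparing it with the characterization of $v(\OPT)$ furnished by Proposition~\ref{prop:opt_value_complete}. Since the quantile-based algorithm uses a single threshold $\tau = F^{-1}(1-q_n)$, the selection events are i.i.d.: value $i$ is a candidate with probability $q_n = \Pr[X \geq \tau]$. Writing $v(\ALG^{q_n})$ via the inverse-CDF representation $\E[X \mathbbm{1}\{X\geq \tau\}] = \int_0^{q_n} F^{-1}(1-q)\,dq$ (a consequence of the same $r(v)$ identity used in Proposition~\ref{prop:opt_value_complete}), I would express the algorithm's value as $\E[\text{(number of successful selections)}]\cdot \E[X \mid X \geq \tau]$, where a ``successful'' selection is one that happens before the first disruption. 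Concretely, if the $i$-th observed value clears the threshold (prob.\ $q_n$) and no prior accepted value triggered disruption, it is accepted and, with probability $1-p$, contributes its value; the chain stops the first time an accepted value has $Y=1$. So the expected number of collected values equals $\sum_{k\geq 0}(1-p)^k\Pr[\text{at least } k+1 \text{ threshold-clearing values among } n]$, or more cleanly $\frac{1-(1-q_np)^n}{p} - (\text{a correction})$; I would aim for the clean identity $v(\ALG^{q_n}) = \frac{(1-p)}{p}\bigl(1 - (1-q_np)^n\bigr)\cdot \E[X\mid X\geq\tau] \cdot \frac{q_n}{q_n}$ — more precisely, the expected count of values that are both accepted and non-disruptive is $\frac{(1-p)}{p}\bigl(1-(1-q_n p)^n\bigr)$ divided appropriately, which after simplification gives a term proportional to $\frac{1-(1-q_np)^n}{p}$ times $\int_0^{q_n}F^{-1}(1-q)\,dq \big/ q_n$.

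Next I would handle $v(\OPT)$. Using the second representation in Proposition~\ref{prop:opt_value_complete}, $v(\OPT) = \int_0^1 F^{-1}(1-q)\, g_n(p,q)\,dq$ with $g_n(p,q) = (1-p)n(1-pq)^{n-1}$. The key structural observation is that $g_n(p,q)$ is \emph{decreasing} in $q$ (since $1-pq$ decreases), so $F^{-1}(1-q)$ — also decreasing — is most heavily weighted near $q=0$. This lets me bound $v(\OPT)$ from above by splitting the integral at $q_n$: the mass of $g_n$ on $[0,q_n]$ is $\int_0^{q_n} g_n(p,q)\,dq = \frac{1-p}{p}\bigl(1-(1-pq_n)^n\bigr)$, and on $[q_n,1]$ one uses $F^{-1}(1-q) \leq F^{-1}(1-q_n)$ together with $\int_0^{q_n} F^{-1}(1-q)\,dq \geq q_n F^{-1}(1-q_n)$ to trade the tail against the head. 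Comparing the resulting upper bound on $v(\OPT)$ with the lower bound on $v(\ALG^{q_n})$, the distribution-dependent integral $\int_0^{q_n}F^{-1}(1-q)\,dq$ and the factor $F^{-1}(1-q_n)$ should cancel, leaving a ratio that depends only on $n$, $p$, and $q_n$ — and that ratio should be exactly $\frac{1-(1-q_np)^n}{p}\cdot\min\{p/(1-(1-p)^n),\, 1/(q_n n)\}$. The two arguments of the $\min$ arise from which of the two bounding steps is tight: the $p/(1-(1-p)^n)$ term comes from bounding $v(\OPT)$ by its value when $q_n=1$ (i.e.\ the ``no tail'' regime), and the $1/(q_n n)$ term comes from the averaging inequality $\int_0^{q_n}F^{-1}(1-q)\,dq / q_n$ versus the derivative-weighted average, controlled by the concavity/monotonicity of the weight $n(1-pq)^{n-1}$.

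The main obstacle I anticipate is making the comparison tight enough to produce the precise $\min$ of two expressions rather than a weaker single bound. The difficulty is that $v(\OPT)$ integrates $F^{-1}$ against the smooth decreasing weight $g_n$ over all of $[0,1]$, whereas $v(\ALG^{q_n})$ only ``sees'' $F^{-1}$ on $[0,q_n]$ (flattened to its average there); bridging these requires a careful rearrangement-type inequality. I would formalize it by writing both numerator and denominator as $\int_0^1 F^{-1}(1-q)\,d\mu(q)$ for suitable measures and invoking that, for a decreasing integrand, $\int F^{-1}\,d\mu \leq \int F^{-1}\,d\nu$ whenever $\nu$ stochastically dominates $\mu$ in the appropriate sense (equivalently, $\int_0^t d\mu \geq \int_0^t d\nu$ for all $t$). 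Verifying this domination between the algorithm's effective measure and $g_n$ — in both the $q_n=1/(pn)$ and $q_n=1$ regimes — is the crux, and it reduces to elementary but slightly delicate inequalities comparing $(1-pq)^{n-1}$-type tails with $(1-pq_n)^n$-type quantities; I expect these to follow from convexity of $t\mapsto (1-pt)^{n}$ and Bernoulli-type estimates, but this is where the real work lies.
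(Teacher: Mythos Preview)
Your stochastic-dominance plan is correct and is essentially the paper's argument: the paper writes $F^{-1}(1-u)=\int_u^1 r(v)\,dv$ so that both $v(\ALG^{q_n})$ and $v(\OPT)=\int_0^1 B_n(p,v)\,r(v)\,dv$ become integrals against the same positive weight $r$, and then bounds the ratio below by $\inf_{v\in[0,1]} \tfrac{A_n(q_n,p)\min\{v,q_n\}/q_n}{B_n(p,v)}$---which, since $B_n(p,v)=\int_0^v g_n(p,q)\,dq$, is exactly your cumulative-measure comparison $\mu([0,v])\geq c\,\nu([0,v])$ after Fubini. You are also right to distrust the crude split-at-$q_n$ bound (it produces denominator $pq_n n+(1-pq_n)^n-(1-p)^n$, which by Bernoulli's inequality can exceed both $pq_n n$ and $1-(1-p)^n$); what actually pins the infimum at the two endpoints $v\to 0$ and $v\to 1$---and hence yields the two terms of the $\min$---is the monotonicity of $v\mapsto v/(1-(1-pv)^n)$, precisely the ``convexity of $(1-pt)^n$'' computation you anticipate, and which the paper isolates as a short auxiliary lemma.
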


\noindent Let $\eta_{n,p}^{q_n}$ denote the right-hand side of \eqref{eq:lower_bound_fixed}. The next lemma provides a useful monotonicity property. 

\begin{lemma}\label{lem:monotonicity_eta_lb_static}
    For any $n\geq 1$, let $q_n=\min\{ 1, 1/pn\}$. Then, we have $\eta_{n,p}^{q_n}\geq \eta_{n+1,p}^{q_{n+1}}$. Furthermore, $\lim_{n\to \infty} \eta_{n,p}^{q_n}=1-1/e$.
\end{lemma}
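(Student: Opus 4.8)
\textbf{Proof plan for Lemma~\ref{lem:monotonicity_eta_lb_static}.}

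The plan is to split the analysis according to which term achieves the minimum in the definition of $\eta_{n,p}^{q_n}$, and to exploit the specific choice $q_n=\min\{1,1/pn\}$. First I would observe that there is a threshold index: since $1/pn$ is decreasing in $n$, there is some $n_0$ (namely the largest $n$ with $pn\le 1$) such that $q_n=1$ for $n\le n_0$ and $q_n=1/pn$ for $n>n_0$. I will treat these two regimes separately, and also check the ``boundary'' step from $n_0$ to $n_0+1$ by hand.

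For the regime $n>n_0$, where $q_n=1/pn$, I would substitute $q_n n=1/p$ into \eqref{eq:lower_bound_fixed}. Then $\eta_{n,p}^{q_n}=\frac{1-(1-1/n)^n}{p}\cdot\min\{p/(1-(1-p)^n),\,p\}$. Since $p/(1-(1-p)^n)\ge p$ always (because $1-(1-p)^n\le 1$), the minimum is simply $p$, so in this regime $\eta_{n,p}^{q_n}=1-(1-1/n)^n$. This is a classical quantity: $(1-1/n)^n$ is increasing in $n$ and converges to $1/e$, hence $1-(1-1/n)^n$ is decreasing in $n$ and converges to $1-1/e$. This immediately gives both the monotonicity within this regime and the claimed limit. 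For the regime $n\le n_0$, where $q_n=1$, I would write $\eta_{n,p}^{q_n}=\frac{1-(1-p)^n}{p}\cdot\min\{p/(1-(1-p)^n),\,1/n\}$; here the first term inside the min equals $p/(1-(1-p)^n)$, and plugging it in gives exactly $1$, while plugging in $1/n$ gives $(1-(1-p)^n)/(pn)$, which is $\le 1$ by Bernoulli/convexity (equivalently $1-(1-p)^n\le np$). So in this regime $\eta_{n,p}^{q_n}=(1-(1-p)^n)/(pn)$ once $n$ is large enough that $1/n$ is the binding term, and I would verify it is decreasing in $n$ there; for the very small $n$ where the other term binds, $\eta_{n,p}^{q_n}=1\ge\eta_{n+1,p}^{q_{n+1}}$ trivially since every value of $\eta$ is at most $1$.

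The remaining piece is the single transition step from $n=n_0$ to $n=n_0+1$, i.e.\ from the $q_n=1$ regime to the $q_n=1/pn$ regime; I would just compare the closed forms obtained above ($\le 1$ on the left, $1-(1-1/(n_0+1))^{n_0+1}\le 1-1/e$ on the right) and confirm the inequality holds, using $pn_0\le 1<p(n_0+1)$. The main obstacle I anticipate is purely bookkeeping: making sure the case split over which term of the min is active is exhaustive and that the monotonicity $\eta_{n,p}^{q_n}\ge\eta_{n+1,p}^{q_{n+1}}$ is checked across the regime boundary rather than only within each regime; the analytic content (monotonicity of $(1-1/n)^n$ and the Bernoulli inequality) is standard. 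I would also double-check that $\eta_{n,p}^{q_n}$ is well-defined and continuous enough that no edge case (e.g.\ $pn=1$ exactly) is missed, but since $F$ is assumed continuous and $p$ can be perturbed this should not cause trouble.
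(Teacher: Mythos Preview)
Your approach is essentially the paper's: split into the regime $q_n=1$, the regime $q_n=1/pn$, and the transition step, simplify $\eta$ in each, and use standard monotonicity facts. The within-regime analyses are fine: when $q_n=1/pn$ you correctly get $\eta_{n,p}^{q_n}=1-(1-1/n)^n$, and when $q_n=1$ your Bernoulli argument actually shows the minimum is \emph{always} $1/n$, so $\eta_{n,p}^{q_n}=(1-(1-p)^n)/(pn)$ for every $n\le n_0$; your ``very small $n$ where the other term binds'' case never occurs and can be dropped.

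The genuine gap is in your handling of the boundary step $n_0\to n_0+1$. The parenthetical bounds you record are useless or wrong: ``$\le 1$ on the left'' is an upper bound on the quantity you need to bound from below, and the claim $1-(1-1/(n_0+1))^{n_0+1}\le 1-1/e$ has the inequality reversed, since $(1-1/m)^m$ increases to $1/e$ from \emph{below}, so $1-(1-1/m)^m\ge 1-1/e$. Thus neither observation helps you conclude $\eta_{n_0,p}^{q_{n_0}}\ge \eta_{n_0+1,p}^{q_{n_0+1}}$. What actually works (and what the paper does) is to exploit the constraint $p\le 1/n_0$ via monotonicity in $p$: writing $(1-(1-p)^{n_0})/p=\sum_{\ell=0}^{n_0-1}(1-p)^\ell$ shows this is decreasing in $p$, hence
\[
\eta_{n_0,p}^{q_{n_0}}=\frac{1-(1-p)^{n_0}}{pn_0}\;\ge\;\frac{1-(1-1/n_0)^{n_0}}{(1/n_0)\cdot n_0}=1-\Bigl(1-\tfrac{1}{n_0}\Bigr)^{n_0}\;\ge\;1-\Bigl(1-\tfrac{1}{n_0+1}\Bigr)^{n_0+1}=\eta_{n_0+1,p}^{q_{n_0+1}},
\]
the last step being the same monotonicity you already used in the $q_n=1/pn$ regime. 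Once you replace your boundary sketch with this argument, the proof is complete.
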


\noindent With these two lemmas, we are ready to provide the proof of Theorem \ref{thm:lb_for_quantile}. Indeed, for any $k\geq 1$, we obtain ${v(\ALG^{q_n})}/{v(\OPT)} \geq \eta_{n,p}^{q_n} \geq \eta_{n+k,p}^{q_{n+k}}$, where the first inequality simply follows from Lemma \ref{lem:general_formula}, and the second inequality follows from applying the monotonicity in Lemma~\ref{lem:monotonicity_eta_lb_static} iteratively $k$ times. Thus, we obtain $v(\ALG^{q_n})/v(\OPT)\geq \lim_{k \rightarrow \infty} \eta_{n+k,p}^{q_{n+k}}=1-1/e$ for any $n \geq 1$. We use the rest of this subsection to prove Lemmas~\ref{lem:general_formula} and~\ref{lem:monotonicity_eta_lb_static}.

\begin{proof}[Proof of Lemma~\ref{lem:general_formula}]
Fix $n \geq 1$, and we drop the index $n$ from quantile for simplicity. We start characterizing  $v(\ALG^q)$ as follows:
\begin{align}
    v&(\ALG^q)  = \sum_{a = 1}^n \Pr[\text{Bin}(n,q) = a] \left(\sum_{i = 1}^a p\cdot (1-p)^{i-1} \cdot (i-1) + a \cdot (1-p)^a\right) \cdot \E[X | X\geq \tau] \notag \\
    & = A_n(q, p) \frac{\int_\tau^\infty x \cdot f(x) \, \mathrm{d}x}{\Pr[X \geq \tau]} = A_n(q, p) \frac{\int_0^q F^{-1}(1-u) \, \mathrm{d}u}{q} = A_n(q, p) \frac{\int_0^1 r(v) \min\{v,q\} \, \mathrm{d}v}{q},\label{exp:threshold_val}
\end{align}
{ where the second equality uses $A_n(q, p) = (1-p) \cdot \left(1-(1-qp)^n\right)/p$, which is a simplification from the preceding expression. Note that $A_n(q,p)/q$ denotes the expected number of items accepted by the algorithm at quantile at most $q$, and $\E[X \mid X \geq \tau]$ represents the expected value of an item whose value lies above the threshold $\tau = F^{-1}(1-q)$. Hence, $v(\ALG)$ can be interpreted as \emph{the expected number of accepted items at quantile $q$ multiplied by the expected value of such an item};} the third equality uses the assumption that $F$ is strictly increasing and substitutes $F(x)$ by $1-u$, and the last equality is achieved by setting $F^{-1}(1 - u) = \int_u^1 r(v) \, \mathrm{d}v$, which is valid by our assumption on $F$. Then, per Proposition~\ref{prop:opt_value_complete}, we have

\begin{equation}\label{exp:comp_ratio_lb}
    \frac{v(\ALG^q)}{v(\OPT)} = \frac{\left( A_n(q, p) / q \right) \int_0^1 r(v) \min\{v,q\} \, \mathrm{d}v }{\int_{0}^1 B_n (p, v) r(v) \, \mathrm{d}v}
    \geq \inf_{v \in [0,1]} \frac{A_n(q, p)\cdot \min\{v,q\}}{B_n (p, v) \cdot q}.
\end{equation}
For $v\geq q$, the ratio inside the infimum becomes $A_n(p,q)/B_n(p,v)$ which is a decreasing function in $v$. For $v< q$, the ratio now becomes
\begin{align}\label{exp:comp_ratio_lb2}
    \frac{A_n(q, p) \cdot v}{B_n (p, v) \cdot q}  = \frac{A_n(q, p)}{q} \frac{v}{ \sum_{i=2}^n p\cdot (1-p)^{i-1} \sum_{j=1}^{i-1}\Pr[\text{Bin}(n,v) \geq j] + (1-p)^n \cdot nv}.  
\end{align}
In Lemma~\ref{lemma_mono}, we show that \eqref{exp:comp_ratio_lb2} is non-decreasing in $v$. This implies that the infimum is attained either when $v\to 0$ or  $v\to 1$. Thus, 
\begin{align*}
\frac{v(\ALG^q)}{v(\OPT)} &\geq \min\left\{ \lim_{v\to 0} \frac{A_n(q,p)v}{B_{n}(p,q)q} , \lim_{v\to 1} \frac{A_n(p,q)}{B_n(p,q)}  \right\} \\
&= \frac{(1-p)(1-(1-qp)^n)}{p}\cdot\min\left\{  \frac{1}{\E[\min\{D-1, n \} ]}, \frac{1}{(1-p)qn}  \right\},
\end{align*}
where the second line follows by a straightforward calculation of the corresponding limits. Here $D\sim \mathrm{Geom}(p)$, and a direct calculation shows that $\E[\min\{D-1,n\}]=(1-p)(1-(1-p)^n)/p$. This concludes the proof. 
\end{proof}

\begin{proof}[Proof of Lemma~\ref{lem:monotonicity_eta_lb_static}] We analyze three cases: (i) $p n < p(n+1)<1$; (ii) $p n \leq 1 < p(n+1) $; and (iii) $1< pn < p(n+1)$.

    \paragraph{Case (i).} Here, we have $q_{n}=q_{n+1}=1$. Then $\eta_{n,p}^{1} = \min\left\{  1,  \frac{1-(1-p)^n}{ pn}  \right\} = \frac{1-(1-p)^n}{pn}$,
    where the first equality follows simply by definition, while the second equality follows from the Bernoulli's inequality $(1-p)^n\geq 1-pn$. From here, it is immediate that $\eta_{n+1,p}^{1}\leq \eta_{n,p}^1$.

    \paragraph{Case (ii).} Here, we have $q_{n}=1$ and $q_{n+1}=1/p(n+1)$. Then $\eta_{n,p}^{q_n} = \frac{1-(1-p)^n}{pn}$,
    and
    \[
    \eta_{n+1,p}^{q_{n+1}}= \left(1-\left(1-\frac{1}{n+1}\right)^{n+1}\right) \min\left\{  1, \frac{1}{1-(1-p)^{n+1}}  \right\} = 1- \left( 1- \frac{1}{n+1} \right)^{n+1}
    \]
    with calculations analogous to the previous case. Note that the function $(1-(1-p)^n)/p =\sum_{\ell=0}^{n-1} (1-p)^\ell$ is decreasing in $p$. Thus, we have $p\in [1/(n+1),1/n]$, and
    \[
    \eta_{n,p}^{q_n} \geq 1- \left( 1-\frac{1}{n}\right)^n \geq 1- \left(1-\frac{1}{n+1} \right)^{n+1}= \eta_{n+1,p}^{q_{n+1}}.
    \]
    \paragraph{Case (iii).} In this case, we have $q_{n}=1/pn$ and $q_{n+1}=1/p(n+1)$. Thus, $\eta_{n,p}^{q_n}=1-\left( 1-\frac{1}{n}\right)^n$ and $\eta_{n+1,p}^{q_{n+1}}=1-\left( 1- \frac{1}{n+1} \right)^{n+1}$. As a byproduct of the analysis from the previous case, we have already proved $\eta_{n,p}^{q_{n}}\geq \eta_{n+1,p}^{q_{n+1}}$.
    To conclude, for $n>1/p$, we have $q_n=1/pn$ so that $\lim_{n \to \infty} \eta_{n,p}^{q_n} = \lim_{n \to \infty} 1- \left( 1- \frac{1}{n} \right)^n = 1- \frac{1}{e}$. 
\end{proof}

\subsection{Upper Bound on Competitive Ratio}\label{sebsec:comp_ratio_ub}
In this subsection, we prove $\gamma_{p, \zeta}^{\mathrm{NA}}\leq 1-1/e$ to complete the proof of Theorem~\ref{thm:static}. To do so, we provide a family of instances with competitive ratios approaching $1-1/e$. For this hard instance, we define the distribution through 
\begin{align}
    \Hat{F}(u) := \frac{a_1}{n} \delta_{\lbrace 0 \rbrace}(u) + a_2\mathbbm{I}_{(0, \beta/n]}(u), 
    \label{exp:hard_dist}
\end{align}
\noindent where $a_1, a_2, \beta \in \mathbb{R}$, $\beta \leq n$, $\delta_{\lbrace 0 \rbrace}(u)$ denotes the Dirac delta function centered at $u = 0$, and $\mathbbm{I}_S(\cdot)$ denotes the indicator of the set $S$. {  By approximating $\delta_{\{0 \}}(\cdot)$ via $n^2 \mathbbm{I}_{[0,1/n^2]}(\cdot)$, we can easily find a distribution $F$ such that $F^{-1}(1-u)\approx \hat{F}(u)$; however, we will provide the calculations over $\hat{F}$ as the calculations are cleaner. In short, $\hat{F}$ defines a three-point distribution: among $n$ draws, we expect approximately one high item with value of $a_1n$, $\beta$ moderate items with value of $a_2$, and the remaining items with value of $0$. With independent disruptions of probability $p$, the expected payoff of the $j^\text{th}$ accepted item is scaled by $(1-p)^j$; thus, taking many moderate items yields geometrically decaying value, while waiting for the single high item risks forfeiting most moderate items. We formalize this trade-off below.} Denote by $\text{Poisson}(\beta)$ a Poisson distribution with parameter $\beta$.  We use the following two lemmas to characterize the value of the optimal algorithm and the value of the single-threshold algorithm. 
\begin{lemma}{\label{lemma_hard_opt}} 
    When $n \to \infty$, $v(\OPT) \to a_1(1-p) + a_2 \sum_{j=1}^{\infty} \Pr[\text{Poisson}(\beta) \geq j] (1-p)^j$. 
\end{lemma}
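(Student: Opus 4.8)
The plan is to compute the limiting value of $v(\OPT)$ directly from the integral characterization in Proposition~\ref{prop:opt_value_complete}, using the explicit form of $F^{-1}$ given in~\eqref{exp:hard_dist}. Recall that $v(\OPT) = \int_0^1 F^{-1}(1-q)\, g_n(p,q)\, dq$ with $g_n(p,q) = (1-p)\, n(1-pq)^{n-1}$. The distribution~\eqref{exp:hard_dist} has two parts: a point mass of size $a_1/n$ contributed by the Dirac delta at $u=0$ (equivalently, an atom of $F$ at the top value, of probability $1/n$ up to the convention used), and a ``bulk'' piece of height $a_2$ on the quantile interval $(0,\beta/n]$. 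I would split the integral accordingly into a term coming from the atom and a term $a_2 \int_0^{\beta/n} g_n(p,q)\, dq$.

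First I would handle the atom term. The contribution is $\frac{a_1}{n}$ times the appropriate value of $g_n$ (or its integral against the delta), which to leading order is $\frac{a_1}{n}\cdot (1-p) n (1-p\cdot 0)^{n-1} = a_1(1-p)$; so this term contributes exactly $a_1(1-p)$ in the limit (one should be slightly careful about whether the delta sits at $u=0$ so that $g_n$ is evaluated at $q=0$, which is the natural reading). Second, for the bulk term I would change variables $q = t/n$, so that $a_2\int_0^{\beta/n} n(1-p)(1-pq)^{n-1}\,dq = a_2(1-p)\int_0^{\beta}(1-pt/n)^{n-1}\,dt$. As $n\to\infty$, $(1-pt/n)^{n-1}\to e^{-pt}$ pointwise and is dominated by $1$ on $[0,\beta]$, so by dominated convergence the bulk term converges to $a_2(1-p)\int_0^\beta e^{-pt}\,dt = a_2(1-p)(1-e^{-p\beta})/p$. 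The last step is to reconcile this with the claimed form $a_2\sum_{j\ge 1}\Pr[\mathrm{Poisson}(\beta)\ge j](1-p)^j$: using $\sum_{j\ge 1}\Pr[\mathrm{Poisson}(\beta)\ge j] x^j = \frac{x}{1-x}\sum_{j\ge1}\Pr[\mathrm{Poisson}(\beta)=j]x^{j}\cdot(\text{rearranged})$, or more cleanly $\sum_{j\ge1}\Pr[\mathrm{Poisson}(\beta)\ge j](1-p)^j = \E\!\left[\sum_{j=1}^{\mathrm{Poisson}(\beta)}(1-p)^j\right] = \E\!\left[\tfrac{(1-p)(1-(1-p)^{N})}{p}\right]$ with $N\sim\mathrm{Poisson}(\beta)$, and then $\E[(1-p)^N] = e^{-\beta p}$ by the Poisson PGF, which gives exactly $(1-p)(1-e^{-\beta p})/p$. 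Hence the bulk term matches $a_2\sum_{j\ge1}\Pr[\mathrm{Poisson}(\beta)\ge j](1-p)^j$, and summing the two contributions yields the claim.

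An alternative, perhaps more transparent, route is probabilistic: under~\eqref{exp:hard_dist}, the number of ``large'' values (those equal to $a_2$) among the $n$ i.i.d.\ draws is $\mathrm{Bin}(n,\beta/n)$, which converges in distribution to $\mathrm{Poisson}(\beta)$, and there is additionally one ``huge'' value of size $a_1$ appearing with probability $1/n$ in the top quantile. The clairvoyant orders values in decreasing order and collects them until the geometric disruption time $D$; so $v(\OPT) = \E\big[\sum_{i=1}^{\min\{D-1,n\}} X_{(i)}\big]$. The huge value is collected iff it is present (probability $1/n$) and $D\ge 2$, contributing $\tfrac1n\cdot a_1\cdot(1-p)\to$ wait — one must be careful: conditioned on the huge value being present it is picked first, and it survives with probability $\Pr(D\ge 2)=1-p$, giving $a_1(1-p)/n \cdot n$? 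No: the factor of $n$ comes because there are $n$ draws each with a $1/n$ chance, so in expectation one such value is present, contributing $a_1(1-p)$. The large values: given $N$ of them present, the clairvoyant collects $\min\{N, D-1\}$ of them (after the huge one, if any; asymptotically the huge one shifts indices negligibly), each worth $a_2$, and $\E[\min\{N,D-1\}] \to \sum_{j\ge1}\Pr[N\ge j]\Pr[D-1\ge j] = \sum_{j\ge1}\Pr[\mathrm{Poisson}(\beta)\ge j](1-p)^j$ by independence of $N$ and $D$ in the limit. This recovers the formula and also clarifies why the Poisson appears.

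The main obstacle I anticipate is making the interchange of limit and integral (or limit and expectation) rigorous and, relatedly, pinning down the exact bookkeeping around the atom: one must verify that the Dirac-delta term in~\eqref{exp:hard_dist} indeed contributes its mass at quantile $q=0$ where $g_n(p,0)=(1-p)n$, and that lower-order corrections (e.g.\ the difference between $(1-pq)^{n-1}$ and $(1-pq)^n$, and the shift of order statistics caused by the huge value) vanish as $n\to\infty$. For the bulk term, dominated convergence with the bound $(1-pt/n)^{n-1}\le 1$ on $[0,\beta]$ (valid once $n$ is large enough that $p\beta/n\le 1$) handles the analytic route cleanly; for the probabilistic route, one invokes convergence of $\mathrm{Bin}(n,\beta/n)$ to $\mathrm{Poisson}(\beta)$ together with uniform integrability of $\min\{N_n, D-1\}$ (dominated by $D-1$, which is integrable). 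Everything else is a routine computation of the two integrals and the Poisson generating-function identity.
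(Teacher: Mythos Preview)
Your analytic route is correct, and it is genuinely different from the paper's proof. The paper works from the \emph{first} characterization in Proposition~\ref{prop:opt_value_complete} (the order-statistics expansion): it writes $v(\OPT)$ as a double sum over the disruption index $i$ and the order statistic $j$, plugs in~\eqref{exp:hard_dist} to get $a_1(1-p)(1-(1-p)^{n-1}) + a_2 \sum_{j} \Pr[\mathrm{Bin}(n,\beta/n)\ge j]\,[(1-p)^j-(1-p)^n]$, and then lets $n\to\infty$ using binomial-to-Poisson convergence. You instead use the \emph{second} characterization $\int_0^1 F^{-1}(1-q)g_n(p,q)\,dq$, change variables $q=t/n$ in the bulk term, pass to the limit to obtain the closed form $a_2(1-p)(1-e^{-\beta p})/p$, and only then recognize this as the Poisson sum via the PGF identity $\E[(1-p)^N]=e^{-\beta p}$. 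Your route is shorter and more computational; the paper's route makes the role of the Poisson limit theorem explicit and connects directly to the order-statistics picture of the clairvoyant. Your observation that the Poisson sum is just a repackaging of $(1-p)(1-e^{-\beta p})/p$ is a useful simplification the paper does not spell out.

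One small comment: your ``alternative probabilistic route'' is the one closest in spirit to the paper's argument, but as written it is muddled around the atom (the back-and-forth about factors of $n$). If you keep it, clean it up; otherwise your primary analytic computation already suffices and is the cleaner of the two.
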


\begin{lemma}\label{lemma_hard_threshold}
Let $\lambda=\lambda^* > 0$ be a solution to the equation
\begin{equation}
e^{-\lambda p} \left( a_1 + a_1 p \lambda + a_2 p \lambda^2 \right) = a_1. \label{eq:criticalpoint}
\end{equation}
As $n$ grows large, the expected value of $\ALG^{\lambda/n}$ converges to
\[
\lim_{n \to \infty} v(\ALG^{\lambda/n}) = \max \left\{ (1 - p)a_1,\; \frac{(1 - p)(1 - e^{-\beta p})}{\beta p} (a_1 + a_2 \beta),\; C_1 \right\},
\]
where $C_1=\frac{(1 - p)(1 - e^{-\lambda^* p})}{\lambda^* p} (a_1 + a_2 \lambda^*)$ if $\lambda^* \leq \beta$ and $C_1=0$ otherwise.
\end{lemma}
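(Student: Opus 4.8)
The plan is to evaluate $v(\ALG^{\lambda/n})$ in closed form on the hard distribution \eqref{exp:hard_dist}, pass to the limit $n\to\infty$, and then optimize the resulting expression over $\lambda$; the three candidates inside the $\max$ are exactly the two endpoint values and the value at an interior critical point $\lambda^*$ of \eqref{eq:criticalpoint}, so the statement is really a characterization of the best asymptotic value over the family $\{\ALG^{\lambda/n}\}_{\lambda>0}$ (equivalently, since these quantile thresholds exhaust all non-adaptive thresholds for the perturbed instance, the best non-adaptive algorithm). The starting point is the exact formula for a quantile-based non-adaptive algorithm from \eqref{exp:threshold_val}, namely $v(\ALG^q)=A_n(q,p)\,q^{-1}\int_0^q F^{-1}(1-u)\,\mathrm{d}u$ with $A_n(q,p)=(1-p)\bigl(1-(1-qp)^n\bigr)/p$. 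Since \eqref{exp:hard_dist} is atomic, I would first replace $F$ by a continuous, strictly increasing perturbation $F_\varepsilon$, run the computation, and let $\varepsilon\downarrow 0$ at the end (as in the footnote on perturbations); the net effect is that the Dirac part of \eqref{exp:hard_dist} contributes its mass $a_1/n$ to the integral.

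Carrying this out: taking $q=\lambda/n$, the Dirac part contributes $a_1/n$ and the indicator part contributes $a_2\min\{\lambda/n,\beta/n\}$, so $\int_0^{\lambda/n}F^{-1}(1-u)\,\mathrm{d}u=\tfrac1n\bigl(a_1+a_2\min\{\lambda,\beta\}\bigr)$, and hence $v(\ALG^{\lambda/n})=A_n(\lambda/n,p)\cdot\frac{a_1+a_2\min\{\lambda,\beta\}}{\lambda}$. Using $(1-p\lambda/n)^n\to e^{-p\lambda}$ gives $A_n(\lambda/n,p)\to (1-p)(1-e^{-p\lambda})/p$, so
\[
\lim_{n\to\infty}v(\ALG^{\lambda/n})=h(\lambda):=\frac{(1-p)\,(1-e^{-p\lambda})}{p\lambda}\,\bigl(a_1+a_2\min\{\lambda,\beta\}\bigr).
\]

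It remains to optimize $h$ over $\lambda>0$. On $[\beta,\infty)$ we have $h(\lambda)=\frac{(1-p)(a_1+a_2\beta)}{p}\cdot\frac{1-e^{-p\lambda}}{\lambda}$, and $\lambda\mapsto\frac{1-e^{-p\lambda}}{\lambda}$ is strictly decreasing (its derivative has numerator $e^{-p\lambda}(1+p\lambda)-1<0$), so the supremum over this range is attained at $\lambda=\beta$ and equals $\frac{(1-p)(1-e^{-\beta p})}{\beta p}(a_1+a_2\beta)$, the second term in the $\max$. On $(0,\beta]$ we have $h(\lambda)=h_1(\lambda):=\frac{(1-p)}{p}\cdot\frac{(1-e^{-p\lambda})(a_1+a_2\lambda)}{\lambda}$, which is smooth with $\lim_{\lambda\downarrow 0}h_1(\lambda)=(1-p)a_1$ (the first term) and with $h_1(\beta)$ equal to the second term; differentiating $h_1$ and clearing $\lambda^2$ shows that every interior critical point solves $e^{-\lambda p}(a_1+a_1p\lambda+a_2p\lambda^2)=a_1$, i.e.\ \eqref{eq:criticalpoint}. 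Consequently $\sup_{(0,\beta]}h_1$ is the largest among $(1-p)a_1$, $h_1(\beta)$, and the values $h_1(\lambda^*)$ over solutions $\lambda^*\in(0,\beta]$ of \eqref{eq:criticalpoint}; writing $C_1:=h_1(\lambda^*)=\frac{(1-p)(1-e^{-\lambda^* p})}{\lambda^* p}(a_1+a_2\lambda^*)$ when such a $\lambda^*\le\beta$ exists and $C_1:=0$ otherwise (in which case $0$ is dominated by the other two terms), the overall supremum of $h$ is exactly $\max\bigl\{(1-p)a_1,\ \frac{(1-p)(1-e^{-\beta p})}{\beta p}(a_1+a_2\beta),\ C_1\bigr\}$, which is the claimed limit.

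The only genuine subtlety — and the step I would expect to need the most care — is handling the atomic instance: justifying that the formula \eqref{exp:threshold_val}, derived for continuous $F$, applies to \eqref{exp:hard_dist} with the Dirac term contributing its mass $a_1/n$, which the perturbation-and-limit argument settles cleanly (alternatively one can recompute $v(\ALG^{\lambda/n})$ directly from the selection dynamics on the explicit three-point distribution and avoid the $r(v)$ machinery entirely). A minor secondary point is that if \eqref{eq:criticalpoint} has more than one root in $(0,\beta]$ one takes the maximizing root; everything else is a routine calculus computation.
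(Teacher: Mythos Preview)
Your approach is essentially the paper's: plug the hard distribution into \eqref{exp:threshold_val}, write $q=\lambda/n$, pass to the limit, and identify the three terms in the $\max$ as the two endpoint values and the interior critical-point value of \eqref{eq:criticalpoint}. The derivative computation and the identification of $C_1$ are correct.

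The one genuine gap is the order of limit and supremum. You fix $\lambda$, send $n\to\infty$ to obtain $h(\lambda)$, and then optimize --- this yields $\sup_\lambda \lim_n v(\ALG^{\lambda/n})$. But the lemma is used to upper-bound $\gamma_p^{\mathrm{NA}}$, which requires controlling $\limsup_n \sup_{q\in[0,1]} v(\ALG^q)$; the inequality $\sup\lim\le\liminf\sup$ points the wrong way. The paper handles this by optimizing over $q$ \emph{before} the limit: it shows that for each finite $n$ the function $\lambda\mapsto \frac{1}{\lambda}\bigl(1-(1-\lambda p/n)^n\bigr)$ is decreasing on $[\beta,n]$ (via the sign of $r_1'$), so the supremum over $[\beta,n]$ is attained at $\lambda=\beta$, confining the entire optimization to the compact interval $[0,\beta]$ where uniform convergence of $(1-\lambda p/n)^n\to e^{-\lambda p}$ justifies the interchange. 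Your monotonicity argument for $(1-e^{-p\lambda})/\lambda$ is the limiting version of exactly this step; you just need to run it at finite $n$ as well.
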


Let $f(a_1, a_2, \beta, p,\lambda):=v(\ALG^{\frac{\lambda}{n}})/v(\OPT)$. We aim to bound $
\min_{a_1, a_2, \beta, p} \max_{\lambda} f(a_1, a_2, \beta, p, \lambda)
$, which will provide an upper bound on the competitive ratio for non-adaptive algorithms.

\begin{theorem}{\label{hard_instance}} Given any $\varepsilon > 0$, for given inputs $a_1, a_2, \beta, p, \lambda$ which satisfy $a_2 = p (e-2) a_1$, and $\beta$ sufficiently large, we have $v(\ALG^{\frac{\lambda}{n}}) /v(\OPT) \leq 1-1/e +\varepsilon$. 
\end{theorem}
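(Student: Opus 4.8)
The plan is to pass to the $n\to\infty$ limit via Lemmas~\ref{lemma_hard_opt} and~\ref{lemma_hard_threshold}, reduce the competitive ratio on this instance to a one-variable maximization, and identify the worst-case quantile as $\lambda^*=1/p$. Since $v(\OPT)$ and $v(\ALG^{\lambda/n})$ are both linear in $F^{-1}$ — hence homogeneous of degree one in $(a_1,a_2)$ — and the constraint $a_2=p(e-2)a_1$ forces $(a_1,a_2)$ to move along a single ray, the ratio is scale-invariant, so I may set $a_1=1$, $a_2=p(e-2)$. For the benchmark I use the identity $\sum_{j\ge1}\Pr[N\ge j]x^j=\tfrac{x}{1-x}(1-\E[x^N])$ with $N\sim\mathrm{Poisson}(\beta)$ (whose probability generating function is $e^{\beta(x-1)}$) and $x=1-p$; Lemma~\ref{lemma_hard_opt} then yields $\lim_n v(\OPT)=(1-p)\bigl(1+(e-2)(1-e^{-\beta p})\bigr)$, which stays bounded away from $0$ and increases to $(1-p)(e-1)$ as $\beta\to\infty$.

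For the algorithm, substitute $a_1=1$, $a_2=p(e-2)$ into Lemma~\ref{lemma_hard_threshold} and introduce $\psi(s):=\frac{(1-e^{-s})(1+(e-2)s)}{s}$. A direct rearrangement shows that each of the three terms inside the max equals $(1-p)$ times a value of $\psi$: the first is $(1-p)\psi(0^+)=(1-p)$, the second is $(1-p)\psi(\beta p)$, and $C_1=(1-p)\psi(\lambda^* p)$ when $\lambda^*\le\beta$. Hence $\lim_n v(\ALG^{\lambda/n})=(1-p)\sup_{s\in(0,\beta p]}\psi(s)$ for the best static quantile (and this upper-bounds $\lim_n v(\ALG^{\lambda/n})$ for any fixed $\lambda$), so the task reduces to maximizing $\psi$ over $(0,\infty)$.

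The crux is to show $\psi$ attains its global maximum on $(0,\infty)$ uniquely at $s=1$, with $\psi(1)=(1-1/e)(e-1)=\tfrac{(e-1)^2}{e}$. Substituting $a_1=1$, $a_2=p(e-2)$, $\lambda=1/p$ into \eqref{eq:criticalpoint} confirms $s=1$ is a critical point, so $\lambda^*=1/p$. To rule out other critical points I rewrite $\psi'(s)=0$ equivalently as $\chi(s):=e^{-s}(1+s+(e-2)s^2)=1$ and compute $\chi'(s)=e^{-s}s(e-2)\bigl(\tfrac{2e-5}{e-2}-s\bigr)$; since $\chi(0)=1$ and $\chi$ first increases and then decreases, $\chi(s)=1$ has exactly one positive root, namely $s=1$. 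As $\psi(0^+)=1$ and $\psi(s)\to e-2$ as $s\to\infty$ — both strictly below $\tfrac{(e-1)^2}{e}\approx 1.086$ — the interior critical point $s=1$ is the global maximum of $\psi$.

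Finally, for $\beta\ge 1/p=\lambda^*$ we get $\sup_{s\in(0,\beta p]}\psi(s)=\psi(1)=\tfrac{(e-1)^2}{e}$, so
\[
\frac{\lim_n v(\ALG^{\lambda/n})}{\lim_n v(\OPT)}\;\le\;\frac{(e-1)^2/e}{1+(e-2)(1-e^{-\beta p})}\;\xrightarrow[\beta\to\infty]{}\;\frac{(e-1)^2/e}{e-1}=1-\frac1e .
\]
Given $\epsilon>0$, pick $\beta$ large enough that the displayed fraction is at most $1-1/e+\epsilon$; since $v(\OPT)$ is bounded below, the finite-$n$ ratio $v(\ALG^{\lambda/n})/v(\OPT)$ converges to its limit, so it is at most $1-1/e+\epsilon$ for all large $n$ (and, taking $\beta\to\infty$, this gives $\gamma_p^{\mathrm{NA}}\le 1-1/e$). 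The only genuinely nontrivial step is the global maximization of $\psi$ — equivalently, pinning down which root of \eqref{eq:criticalpoint} is the maximizer rather than a spurious extremum; the auxiliary function $\chi$, with its single sign change in the derivative, dispatches this cleanly, and everything else is bookkeeping with the two lemmas and elementary limits.
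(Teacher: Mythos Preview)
Your proof is correct and follows essentially the same route as the paper: identify $\lambda^*=1/p$ as the unique positive root of \eqref{eq:criticalpoint}, verify it is the global maximizer among static thresholds, and then send $\beta\to\infty$. Your execution is slightly cleaner in two places—evaluating $\lim_n v(\OPT)$ in closed form via the Poisson generating function (yielding $(1-p)\bigl(1+(e-2)(1-e^{-\beta p})\bigr)$) rather than lower-bounding the tail sum $\sum_{j\ge1}\Pr[\text{Poisson}(\beta)\ge j](1-p)^{j-1}$ as the paper does, and packaging the three candidates of Lemma~\ref{lemma_hard_threshold} as values of the single function $\psi(s)=\tfrac{(1-e^{-s})(1+(e-2)s)}{s}$ so that the ``direct comparison and monotonicity analysis'' the paper invokes becomes the explicit $\chi$-argument—but the underlying argument is the same.
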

\begin{proof}[Proof of Theorem \ref{hard_instance}.]
    Given $\beta > 1/p$ sufficiently large, plugging in $a_2 = p (e-2) a_1$ into \eqref{eq:criticalpoint} we obtain the equation $e^{-\lambda^* p} \left( a_1 + a_1 p \lambda^* + p^2 (e-2) a_1 (\lambda^*)^2 \right) = a_1$, which holds if and only if $  \lambda^* = 1/p$ for $\lambda^*>0$.
    Note that when $\lambda^* = 1/p$, through direct comparison and monotonicity analysis, we get $$C_1 \geq \max \left \lbrace (1-p) a_1 , \frac{1-p}{\beta p}(1-e^{-\beta p}) (a_1 + a_2 \beta) \right\rbrace.$$ Consequently, for the input parameters $a_1, a_2, \beta, p, \lambda$ satisfying $a_2 = p (e-2) a_1$, and $\beta$ sufficiently large, the value $\lambda = 1/p$ serves as the maximizer of the function $f(a_1, a_2, \beta, p, \lambda)$. Then for any given $\varepsilon > 0$, we have 
    \begin{align*}
        \frac{v(\ALG^{\frac{1}{np}})}{v(\OPT)} & \leq \frac{(1-p) \cdot (1-1/e) \cdot (a_1+a_2/p)}{a_1(1-p) + a_2 \sum_{j=1}^{\infty} \Pr[\text{Poisson}(\beta) \geq j] (1-p)^j}. 
    \end{align*}
    Note that the desired upper bound of $1-1/e +\varepsilon$ is only achieved when the necessary condition of $\sum_{j=1}^{\infty} \Pr[\text{Poisson}(\beta) \geq j] (1-p)^{j-1} \geq 1/p -\varepsilon$ is met. We can find a sufficiently large $\beta$ that ensures the validity of the above inequality. The process of finding such a $\beta$ consists of the following two steps:
    \begin{enumerate}[label=(\roman*)]
    \item Find an integer $C_2$ such that $(1-p)^{C_2} \leq \varepsilon p /2$. $C_2$ could then be set as $\left\lceil \log_{1-p} \left( \varepsilon p /2\right) \right\rceil$. 
    \item Using the integer $C_2$ found in the previous step, we find $\beta^* \geq \lambda$ such that $\Pr[\text{Poisson}(\beta^*) \geq C_2] \geq 1-\varepsilon p /2$. The existence of such a $\beta^*$ is guaranteed by the intermediate value theorem.
    \end{enumerate}
    Using conditions (i) and (ii) above, we recover the necessary condition: 
    \begin{align*}
         \sum_{j=1}^{\infty} \Pr[\text{Poisson}(\beta) \geq j] & (1-p)^{j-1} \geq \sum_{j=1}^{C_2} \left(1-\frac{\varepsilon p}{2}\right) (1-p)^{j-1} = \left(1-\frac{\varepsilon p}{2}\right) \frac{1-(1-p)^{C_2}}{p}\\
        \geq & \left(1-\frac{\varepsilon p}{2}\right) \left(\frac{1}{p} - \frac{\varepsilon}{2}\right) = \frac{1}{p} - \varepsilon + \frac{\varepsilon^2 p}{4} \geq \frac{1}{p} - \varepsilon. \qedhere
    \end{align*}
\end{proof}

\section{The Class of Adaptive Algorithms}\label{sec:adapt}

In this section, we focus on adaptive algorithms. In \S\ref{sec:adapt_lb}, we formally define our algorithm and present the proof of Theorem~\ref{thm:adaptive} to show that \textit{adaptivity} is key, and one can improve the competitive ratio from the non-adaptive case in the limit. In \S\ref{subsec:adapt_ub}, we present the proof of Theorem \ref{thm:adaptive_ub} to provide an upper bound on the competitive ratio, and we show that our derived asymptotic competitive ratio is tight. 

\subsection{Quantile-Based Threshold Algorithm}\label{sec:adapt_lb}

For each $i=1,\ldots,n$, our adaptive algorithm samples a quantile $q_i \in [0,1]$ from a density function with a support in $[\varepsilon_{i-1},\varepsilon_{i}]$, where $0=\varepsilon_0< \varepsilon_1< \cdots <  \varepsilon_n =1$. Then, if the observed value $X_i$ is at least $F^{-1}(1-q_i)$, the algorithm selects the value; otherwise the value is rejected. Denote this algorithm by $\ALG^{\mathrm{AD}}$. 

The construction of the densities over each interval $[\varepsilon_{i-1}, \varepsilon_i]$ follows a similar approach to the one outlined in~\cite{correa2021posted,perez2025iid}. We take some time to explain the importance of these densities and the challenges involved in applying the method from~\cite{correa2021posted}. From the second characterization of $v(\OPT)$ in Proposition~\ref{prop:opt_value_complete}, the function $g_n^{(0)}(p, q)=(1-p)n(1-pq)^{n-1}$ is the derivative of $A_n(p,q)=\E[\min\{ \text{Bin}(n,q),D-1 \}]$, i.e., the derivative of the expected number of values $\OPT$ gets at quantile $q$. The goal of our analysis is to show that, for every $q \in [0,1]$, the derivative of the expected number of values at quantile $q$ that the algorithm accepts is at least $\theta_n \cdot g_n^{(0)}(p, q)$. If this condition holds, then we can guarantee that $v(\ALG^{\mathrm{AD}}) \geq \theta_n \cdot v(\OPT)$. This analysis is quite stringent, as it requires specifying a \emph{valid density for every $q \in [0,1]$}. However, the method from \cite{correa2021posted} applied to the \ppp probem only provides a density over the interval $[0, p]$, leaving $(1-p,1]$ unassigned. To address this limitation, we use a direct approach and construct densities that cover the interval $[0,1]$ completely in such a way that the competitive ratio of $\ALG^{\mathrm{AD}}$ converges to $\theta^*$. 

We now explain the density functions for $\ALG^{\mathrm{AD}}$. Let $0=\varepsilon_0<\varepsilon_1$. For $\theta_n>0$, consider the following function $\beta_{1,n}(p,q):=-\theta_n \mathbbm{I}_{[\varepsilon_{0},\varepsilon_1]}(q) (g_n^{(0)})'(p,q)/(1-p)$. Note that $\beta_{1,n}\geq 0$. If we want to sample $q_1$ from $\beta_{1,n}(p,\cdot)$, then we must have $1=\int_0^1 \beta_{1,n}(p,q) {\, \mathrm{d}q}$, which happens if and only if $\frac{1}{n \theta_n} = 1- (1-p \varepsilon_1)^{n-1}$.
From here, $\varepsilon_1$ is decreasing in $\theta_n$; thus, there is $\theta_n$ such that $\varepsilon_1\leq 1$. In general, let $\beta_{i,n}(p,q):=- \theta_n \mathbb{I}_{[\varepsilon_{i-1},\varepsilon_i]}(q)(g_n^{(0)})'(p,q)/(1-p)$ such that the following system is satisfied for $0=\varepsilon_1< \varepsilon_2<\cdots< \varepsilon_k\leq 1$, and for the largest $k$ possible:
\begin{align}
    \int_0^1 \beta_{1,n}(p,q) {\, \mathrm{d}q} & =1, \label{eq:beta_1} \\
    \int_0^1 \beta_{i+1,n}(p,q) {\, \mathrm{d}q} & = \int_0^1 \beta_{i,n}(p,q)(1-pq) {\, \mathrm{d}q} & \forall i< k.\label{eq:beta_k}
\end{align}
We already know that we can satisfy this system with $k=1$. We seek to satisfy this system for $k=n$ and $\varepsilon_n=1$. Then, our densities for $\ALG^{\mathrm{AD}}$ become $\beta_{i,n}(p,q)/\int_0^1\beta_{i,n}(p,q) \, \mathrm{d}q$ for all $i\in [n]$. 

\begin{lemma}\label{lem:unique_theta}
    There is a unique $\theta_n>0$ such that the system $\eqref{eq:beta_1}-\eqref{eq:beta_k}$ has a solution for $k=n$ and $\varepsilon_n=1$. 
\end{lemma}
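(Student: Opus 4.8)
The plan is to eliminate the sampling densities in favour of the breakpoints $\varepsilon_i$, recast the system as a one-dimensional shooting problem, and resolve it by a monotonicity argument. First I would substitute $x_i:=1-p\varepsilon_i$, so that $x_0=1$ and, because we demand $\varepsilon_n=1$, also $x_n=1-p$; a feasible chain $0=\varepsilon_0<\dots<\varepsilon_n=1$ corresponds exactly to $1=x_0>x_1>\dots>x_n=1-p$. Since $g_n(p,q)=(1-p)n(1-pq)^{n-1}$, we have $-g_n'(p,q)/(1-p)=n(n-1)p(1-pq)^{n-2}$, and the two integrals in \eqref{eq:beta_1}--\eqref{eq:beta_k} evaluate in closed form to $\int_0^1\beta_{i,n}(p,q)\,dq=\theta_n n(x_{i-1}^{n-1}-x_i^{n-1})$ and $\int_0^1\beta_{i,n}(p,q)(1-pq)\,dq=\theta_n(n-1)(x_{i-1}^{n}-x_i^{n})$. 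Hence \eqref{eq:beta_1} becomes $1-x_1^{n-1}=1/(n\theta_n)$, while \eqref{eq:beta_k}, for $i=1,\dots,n-1$, becomes the second-order recursion $x_{i+1}^{n-1}=x_i^{n-1}-\tfrac{n-1}{n}(x_{i-1}^{n}-x_i^{n})$. Crucially $\theta_n$ enters only through $x_1$: taking $t:=1/(n\theta_n)=1-x_1^{n-1}\in(0,1)$ as the free parameter, the sequence $x_1(t),x_2(t),\dots$ is generated forward from $x_0=1$, $x_1=(1-t)^{1/(n-1)}$. Consequently the system admits a solution with $k=n$ and $\varepsilon_n=1$ if and only if some $t\in(0,1)$ satisfies $x_n(t)=1-p$, and, since $\theta_n=1/(nt)$, uniqueness of $\theta_n$ is equivalent to uniqueness of such a $t$.

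Next I would set up the forward map. Let $(0,s)$ be the maximal set of parameters for which $x_1(t),\dots,x_n(t)$ are all well defined, i.e.\ $x_i^{n-1}(t)>0$ for $1\le i\le n$; it is an interval with left endpoint $0$, because all increments $b_j:=x_{j-1}^{n-1}-x_j^{n-1}$ tend to $0$ as $t\to0^+$, forcing each $x_i(t)\to1$. Wherever defined the chain is strictly decreasing, since $x_{i+1}^{n-1}=x_i^{n-1}-\tfrac{n-1}{n}(x_{i-1}^{n}-x_i^{n})<x_i^{n-1}$ once $x_{i-1}>x_i>0$; moreover each $x_i^{n-1}(t)$, hence each $x_i(t)$, is smooth on $(0,s)$.

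The heart of the argument---and the step I expect to be the main obstacle---is to show that $x_n(t)$ is strictly decreasing on $(0,s)$. Differentiating the recursion directly only yields the second-order relation $\frac{d}{dt}x_{i+1}^{n-1}=(1+x_i)\frac{d}{dt}x_i^{n-1}-x_{i-1}\frac{d}{dt}x_{i-1}^{n-1}$, whose sign is not transparent; instead I would use the telescoping identity $x_i^{n-1}=1-\sum_{j=1}^i b_j$ with $b_1=t$ and $b_j=\tfrac{n-1}{n}(x_{j-2}^{n}-x_{j-1}^{n})$ for $j\ge2$. Differentiating this and substituting $\frac{d}{dt}x_k^{n}=\tfrac{n}{n-1}x_k\frac{d}{dt}x_k^{n-1}$, the sum telescopes and, writing $D_i:=\frac{d}{dt}x_i^{n-1}$, collapses to the first-order recursion $D_0=0$, $D_i=-1+x_{i-1}D_{i-1}$. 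Since $x_{i-1}\in(0,1)$ on $(0,s)$, a one-line induction gives $D_i\le-1<0$ for all $i\ge1$, so $x_n^{n-1}(t)$, and therefore $x_n(t)$, is strictly decreasing.

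Finally I would pin down the range of $x_n$ on $(0,s)$. As $t\to0^+$, $x_n(t)\to1$. As $t\to s^-$, each $x_i^{n-1}(t)$ decreases to a limit $\ell_i\ge0$; if $s=1$ then $x_n(t)\le x_1(t)=(1-t)^{1/(n-1)}\to0$, and if $s<1$ then maximality forces $\ell_{i_0}=0$ for a smallest index $i_0$, but $i_0<n$ is impossible because then $x_{i_0}^{n-1}\to0$ while $x_{i_0-1}^{n-1}\to\ell_{i_0-1}>0$ (note $\ell_0=1$), whence $x_{i_0+1}^{n-1}\to-\tfrac{n-1}{n}\ell_{i_0-1}^{n/(n-1)}<0$, contradicting positivity of $x_{i_0+1}^{n-1}$ on $(0,s)$; so $i_0=n$ and again $x_n(t)\to0$. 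Thus $x_n$ maps $(0,s)$ continuously and strictly decreasingly onto $(0,1)$, and since $1-p\in(0,1)$ the intermediate value theorem yields a unique $t^*$ with $x_n(t^*)=1-p$. Then $\theta_n:=1/(nt^*)>0$ is the unique value for which the system is solvable with $k=n$, $\varepsilon_n=1$, and the induced $\varepsilon_i=(1-x_i(t^*))/p$ form a strictly increasing chain in $[0,1]$ with $\varepsilon_0=0$ and $\varepsilon_n=1$, because $1=x_0>x_1>\dots>x_n=1-p$.
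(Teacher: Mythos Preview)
Your proof is correct and follows essentially the same shooting-plus-monotonicity strategy as the paper: both reduce the system to a one-parameter recursion for the breakpoints, differentiate to obtain a first-order relation for the derivatives, and conclude by monotonicity and the intermediate value theorem. Your substitution $x_i=1-p\varepsilon_i$ and parametrization by $t=1/(n\theta_n)$ lead to the clean recursion $D_i=-1+x_{i-1}D_{i-1}$, which is precisely the paper's relation $g_n'(p,\varepsilon_{i+1})\varepsilon_{i+1}'=\tfrac{1-p}{\theta_n^2}+g_n'(p,\varepsilon_i)(1-p\varepsilon_i)\varepsilon_i'$ rewritten in the new variables; your treatment of the boundary behavior (the $t\to 0^+$ and $t\to s^-$ limits, and the $i_0<n$ contradiction) is more detailed than the paper's terse final sentence, but the underlying argument is the same.
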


We present the proof of Lemma \ref{lem:unique_theta} after establishing the following guarantee on the competitive ratio of $\ALG^{\mathrm{AD}}$.

\begin{theorem}\label{thm:adapt_lower_bound}
    Let $\theta_n>0$ as in Lemma~\ref{lem:unique_theta}. Using the densities $\beta_{i,n}/\int_0^1 \beta_{i,n}(p,q)\, \mathrm{d}q$ for $\ALG^{\mathrm{AD}}$ guarantees
    \begin{equation}\label{exp:alg_adap}
         \frac{v(\ALG^{\mathrm{AD}})}{v(\OPT)}\geq  \left( 1-\frac{(1-p)^{n-1}pn}{1-(1-p)^n} \right) \cdot\theta_n.
    \end{equation}
\end{theorem}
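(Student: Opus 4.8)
The plan is to carry out the quantile-space accounting of \cite{correa2021posted,perez2025iid}, exploiting that the system \eqref{eq:beta_1}--\eqref{eq:beta_k} is engineered precisely so that the probability of surviving (no disruption) up to each step equals the total mass of the corresponding unnormalized density $\beta_{i,n}(p,\cdot)$.

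First I would compute $P_i:=\Pr[\text{no disruption in steps }1,\dots,i-1]$, the probability that $\ALG^{\text{AD}}$ still gets to act at step $i$. Conditioned on reaching step $i$ and drawing the quantile $q_i=q$, the algorithm selects $X_i$ exactly when $1-F(X_i)\le q$, an event of probability $q$ (since $1-F(X_i)$ is uniform on $[0,1]$), after which a disruption occurs with probability $p$; hence the chance of surviving step $i$ is $1-p\,\E[q_i]=\big(\int_0^1(1-pq)\beta_{i,n}(p,q)\,dq\big)\big/\big(\int_0^1\beta_{i,n}(p,q)\,dq\big)$, and by \eqref{eq:beta_k} the numerator equals $\int_0^1\beta_{i+1,n}(p,q)\,dq$. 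Thus $P_{i+1}/P_i=\|\beta_{i+1,n}\|_1/\|\beta_{i,n}\|_1$, where $\|\cdot\|_1$ denotes the $L^1$-mass on $[0,1]$, and since $P_1=1=\|\beta_{1,n}\|_1$ by \eqref{eq:beta_1}, telescoping gives $P_i=\|\beta_{i,n}\|_1$ for every $i$.

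Next I would expand $v(\ALG^{\text{AD}})$ as the sum over $i$ of the value collected at step $i$. Conditioned on reaching step $i$ with $q_i=q$, that contribution is $(1-p)\int_{F^{-1}(1-q)}^{\infty}x\,f(x)\,dx=(1-p)\int_0^q F^{-1}(1-u)\,du=(1-p)\int_0^1 r(v)\min\{v,q\}\,dv$, using the substitution $u=1-F(x)$ together with $F^{-1}(1-u)=\int_u^1 r(v)\,dv$ and Fubini. Multiplying by $P_i$ and averaging over $q_i$ with density $\beta_{i,n}(p,\cdot)/\|\beta_{i,n}\|_1$, the factor $\|\beta_{i,n}\|_1$ cancels against $P_i$ (the point of the first step), and summing over $i$---using that the intervals $[\varepsilon_{i-1},\varepsilon_i]$ partition $[0,1]$ because $\varepsilon_n=1$ by Lemma~\ref{lem:unique_theta}, so $\sum_{i=1}^n\beta_{i,n}(p,q)=-\tfrac{\theta_n}{1-p}\,g_n'(p,q)$ for a.e.\ $q\in[0,1]$---gives
\[
v(\ALG^{\text{AD}})=-\theta_n\int_0^1 r(v)\left(\int_0^1\min\{v,q\}\,g_n'(p,q)\,dq\right)dv .
\]
Integrating the inner integral by parts and using $\int_0^v g_n(p,q)\,dq=A_n(p,v)$ (since $g_n=\partial_q A_n$ and $A_n(p,0)=0$), one obtains $\int_0^1\min\{v,q\}\,g_n'(p,q)\,dq=v\,g_n(p,1)-A_n(p,v)$. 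Substituting back, recognizing $\int_0^1 r(v)A_n(p,v)\,dv=v(\OPT)$ via Proposition~\ref{prop:opt_value_complete} and $\int_0^1 v\,r(v)\,dv=\E[X_1]$ (Fubini once more), and using $g_n(p,1)=n(1-p)^n$, I reach the identity $v(\ALG^{\text{AD}})=\theta_n\big(v(\OPT)-n(1-p)^n\,\E[X_1]\big)$.

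Finally I would lower bound $v(\OPT)$ to turn this identity into the stated ratio. Because $\min\{D-1,n\}$ is independent of the $X_i$'s and the expected sum of the top $k$ order statistics among $n$ i.i.d.\ samples is at least $k\,\E[X_1]$ (each of the $k$ largest expected order statistics is at least the overall average $\E[X_1]$), we get $v(\OPT)\ge \E[\min\{D-1,n\}]\,\E[X_1]=\tfrac{(1-p)(1-(1-p)^n)}{p}\,\E[X_1]$. Hence $n(1-p)^n\E[X_1]/v(\OPT)\le \tfrac{(1-p)^{n-1}pn}{1-(1-p)^n}$, and plugging this into the identity yields $v(\ALG^{\text{AD}})/v(\OPT)\ge\big(1-\tfrac{(1-p)^{n-1}pn}{1-(1-p)^n}\big)\theta_n$. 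I expect the bulk of the work---and the likeliest place for a slip---to be the second step: tracking survival probabilities, expressing the conditional value of a selection in quantile coordinates, and exploiting the cancellation $P_i=\|\beta_{i,n}\|_1$; the single genuinely non-mechanical ingredient is the order-statistics lower bound on $v(\OPT)$ in the last step.
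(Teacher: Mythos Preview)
Your proof is correct and follows essentially the same route as the paper: you establish the cancellation $P_i=\|\beta_{i,n}\|_1$, sum the $\beta_{i,n}$ to recover $-\tfrac{\theta_n}{1-p}g_n'$, and integrate by parts to obtain the exact identity $v(\ALG^{\text{AD}})=\theta_n\big(v(\OPT)-n(1-p)^n\E[X_1]\big)$, which is precisely the paper's line~\eqref{ratio_adap_res0}. The only cosmetic difference is in the final step, where the paper's ``ratio comparison'' is the Chebyshev-type inequality $\int F^{-1}(1-q)g_n(p,q)\,dq\ge \int F^{-1}(1-q)\,dq\cdot\int g_n(p,q)\,dq$, which is equivalent to your order-statistics bound $v(\OPT)\ge \E[\min\{D-1,n\}]\,\E[X_1]$.
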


\begin{proof}[Proof of Theorem \ref{thm:adapt_lower_bound}.]
Let $r_i$ be the probability that the algorithm observes $X_i$, $i \in [n]$. Then, we have
\begin{align*}
    v(\ALG^{\mathrm{AD}}) & = \sum_{i=1}^n r_i\int_0^1 \frac{\beta_{i,n}(p,u)}{\int_0^1 \beta_{i,n}(p,q)\, \mathrm{d}q} (1-p)\int_{F^{-1}(1-u)}^\infty x\, \mathrm{d}F(x)\, \mathrm{d}u,
\end{align*}
{  where $\beta_{i,n}(p,u)\big/\!\int_0^1 \beta_{i,n}(p,q)\,\mathrm{d}q$ is the conditional adaptive density of the quantile used when observing $X_i$, and the inner integral $(1-p)\int_{F^{-1}(1-u)}^\infty x\,\mathrm{d}F(x)$ is the expected marginal contribution at that step given quantile $u$, with $(1-p)$ accounting for survival upon acceptance.} Via induction, we have $r_1=1$, and for $i>1$, we have $r_i=\int_0^1 \beta_{i-1,n}(p,q)(1-pq)\, {d}q$. Hence, using the system $\eqref{eq:beta_1}-\eqref{eq:beta_k}$, we obtain
\begin{align}
    v(\ALG^{\mathrm{AD}}) & = \sum_{i=1}^n \int_{0}^{1} \beta_{i, n} (p, u) \cdot (1-p) \int_{0}^q F^{-1} (1-q) \, \mathrm{d}q \, \mathrm{d}u \label{eq:firstline52}\\
    & = \theta_n \int_{0}^1 F^{-1}(1-q) g_n^{(0)}(p, q) \, \mathrm{d}q - \theta_n \int_{0}^{1} F^{-1} (1-q) \cdot n (1-p)^{n} \, \mathrm{d}q,\label{ratio_adap_res0}
\end{align}
where \eqref{eq:firstline52} follows from a change of variable and using $r_i=\int_0^1 \beta_{i,n}(p,q)\, {d}q$ for $i >1$.  Then, using a ratio comparison, we have
\begin{align}
    \frac{\int_{0}^{1} F^{-1} (1-q) g_n^{(0)}(p,1)\, \mathrm{d}q}{\int_{0}^1 F^{-1}(1-q) g_n^{(0)}(p,q)\mathrm{d}q} & \leq \frac{(1-p)^{n-1} p n}{1-(1-p)^n}\leq 1,
    \label{exp:ratio_compare}
\end{align}
and applying this bound in \eqref{ratio_adap_res0}, we obtain
\[
v(\ALG^{\mathrm{AD}}) \geq \theta_n\cdot  \left( 1- \frac{(1-p)^{n-1}pn}{1-(1-p)^n} \right)\int_0^1 F^{-1}(1-u) g_n^{(0)}(p,u)\, \mathrm{d}u = \theta_n \cdot \left( 1- \frac{(1-p)^{n-1}pn}{1-(1-p)^n} \right) v(\OPT),
\]
which concludes the proof.
\end{proof}

We now present the proof of Lemma~\ref{lem:unique_theta}. The idea is to generalize the monotonicity of $\varepsilon_1$ as a function of $\theta_n$ to $\varepsilon_i$ for all $i$. To this end, we first present an alternative characterization of the system $\eqref{eq:beta_1}-\eqref{eq:beta_k}$.

\begin{proof}[Proof of Lemma \ref{lem:unique_theta}]
    We start with an intermediate result. Given a fixed $k \leq n$,  we claim that for all $i\leq k$, the following recursion holds: 
    \begin{equation}
        g_n^{(0)}(p, \varepsilon_{i}) - g_n^{(0)}(p, \varepsilon_{i-1}) = -\frac{1-p}{\theta_n} - p \left(\varepsilon_{i-1} \cdot g_n^{(0)}(p, \varepsilon_{i-1})  - \int_{0}^{\varepsilon_{i-1}} g_n^{(0)}(p, q) \, \mathrm{d}q \right). \label{exp:recursion_journal2}
    \end{equation}
    We proceed by induction. The base case $i = 1$ can be verified easily. Assuming that \eqref{exp:recursion_journal2} holds for some $i<k$, by the fundamental theorem of calculus, we have
    \begin{align*}
        g_n^{(0)}(p, \varepsilon_{i+1}) -& g_n^{(0)}(p, \varepsilon_{i})  = \int_{\varepsilon_i}^{\varepsilon_{i+1}} (g_n^{(0)})'(p,q) \, \mathrm{d}q = \int_{\varepsilon_{i-1}}^{\varepsilon_{i}} (g_n^{(0)})'(p, q) (1-pq) \, \mathrm{d}q \tag{per \eqref{eq:beta_k}}\\
        & = \int_{\varepsilon_{i-1}}^{\varepsilon_{i}} (g_n^{(0)})'(p, q) \, \mathrm{d}q - p \int_{\varepsilon_{i-1}}^{\varepsilon_{i}} (g_n^{(0)})'(p, q) q \, \mathrm{d}q\\
        & = g_n^{(0)}(p, \varepsilon_{i}) - g_n^{(0)}(p, \varepsilon_{i-1}) - p \varepsilon_i \cdot g_n^{(0)}(p, \varepsilon_i) + p \varepsilon_{i-1} \cdot g_n^{(0)}(p, \varepsilon_{i-1}) + p \int_{\varepsilon_{i-1}}^{\varepsilon_i} g_n^{(0)}(p,q) \, \mathrm{d}q \\
        & = -\frac{1-p}{\theta_n} + p \int_{0}^{\varepsilon_{i}} g_n^{(0)}(p, q) \, \mathrm{d}q  - p \varepsilon_i \cdot g_n^{(0)}(p, \varepsilon_i),
    \end{align*}
    where the last equality comes from the induction hypothesis. 

Note that from~\eqref{exp:recursion_journal2}, we obtain that $\varepsilon_{i}<\varepsilon_{i+1}$ and $(g_n^{(0)})'(p,\varepsilon_{i+1})\varepsilon_{i+1}' = \frac{1-p}{\theta_n^2} + (g_n^{(0)})'(p,\varepsilon_i)(1-p\varepsilon_i) \varepsilon_i'$, where the derivative is with respect to $\theta_n$. From here, we see that all $\varepsilon_i$'s are decreasing in $\theta_n$. Thus, by making $\theta_n$ sufficiently large, we can sequentially define $\varepsilon_{k+1}, \varepsilon_{k+2}, \dots\leq 1$, until we reach $\varepsilon_n = 1$. This concludes the proof of Lemma~\ref{lem:unique_theta}.
\end{proof}

\paragraph{Asymptotic Analysis.} We now show that $\liminf_{n \to \infty} \gamma_{n, p, { 0}}^{\mathrm{AD}} \geq \theta^* \approx 0.745$. Let $f_n(p,\lambda) := g_n^{(0)}(p,\lambda/n)/(1-p)n$. Then,~\eqref{exp:recursion_journal2} becomes
\begin{align}
    \frac{f_n(p, \lambda_{i}) - f_n(p, \lambda_{i-1})}{1/n} & = -\frac{1}{\theta_n} - p \left(\lambda_{i-1} \cdot f_n(p, \lambda_i) - \int_{0}^{\lambda_{i-1}} f_n(p,w) \mathrm{d}w\right), \label{exp:asymptotic_recursion}
\end{align}
where $\lambda_i=n\varepsilon_i$ for all $i$. Note that $\theta_n \in [0,1]$. Then, there exists a subsequence that converges to some $\hat{\theta}\in [0,1]$. For simplicity, we abuse notation and denote  this subsequence by $\theta_n$ so that $\theta_n\to \hat{\theta}\in [0,1]$. Now, doing a linear piece-wise approximation of $\lambda_i$ via a function $\lambda_n(x)$ such that $\lambda_i=\lambda_n(i/n)$ and taking the limit in $n$ of \eqref{exp:asymptotic_recursion}, we obtain
\begin{align*}
    f(p, \lambda(x))' &= -\frac{1}{\hat{\theta}} - p \left( \lambda(x) \cdot f(p,\lambda(x)) - \int_{0}^{\lambda(x)} f(p, w) \mathrm{d}w \right)\\
    & = 1- \frac{1}{\hat{\theta}} - p \lambda(x) e^{-\lambda(x)p} - e^{-\lambda(x)p},
\end{align*}
where $f(p,\lambda)= \lim_{n\to \infty} f_n(p,\lambda)= e^{-\lambda p}$. Furthermore, we have conditions $\lambda(0)=0$ and $\lim_{x\to 1} \lambda(1)=+\infty$. Performing the change of variable $y(x)=e^{-\lambda(x) p}$, we obtain the following system:
\begin{align*}
    y'(x) &= 1-\frac{1}{\hat{\theta}} + (\ln y(x)-1)y(x), \\
    y(0) & = 1, \quad \lim_{x\to 1}y(x)=0.
\end{align*}
This is exactly the Hill and Kertz equation presented in \S\ref{sec: prelim}, which has a solution if and only if $\hat{\theta}=\theta^*\approx 0.745$, which yields that $\liminf_n \theta_n = \theta^*$. Finally per Theorem~\ref{thm:adapt_lower_bound}, for any $p \in (0,1)$, we have
$$\liminf_{n \to \infty} \gamma_{n, p, { 0}}^{\mathrm{AD}} \geq \liminf_{n \to \infty} \theta_n \cdot  \left( 1-\frac{(1-p)^{n-1}pn}{1-(1-p)^n} \right) = \theta^*.$$

\subsection{Upper Bound} \label{subsec:adapt_ub}

In this subsection, we prove Theorem~\ref{thm:adaptive_ub} to establish that the lower bound derived in the previous section is optimal. Fix $p, \varepsilon \in (0,1)$ and $n$ sufficiently large so that $n \geq \left\lceil -\frac{\log(y(1-\varepsilon))}{p} \right\rceil$. We define the following distribution through $\Tilde{F}(u)$, where $y(\cdot)$ is the solution to the Hill and Kertz equation: 
{ \begin{align}
\Tilde{F}(u) 
&:= \frac{\theta^*}{(1-p)n}\,\delta_{\{0\}}(u) 
   - \frac{p}{1-p} 
   \int_{y^{-1}(e^{-pnu})}^{1-\varepsilon} \frac{1}{y'(s)}\,\mathrm{d}s\;\mathbbm{1}_{(0,1]}(u). \notag 
\label{eq:TildeF}
\end{align}}
\noindent {  Similar to the hard instance presented in \S\ref{sebsec:comp_ratio_ub}, the distribution comprises a spike when $u$ is close to $0$, together with a monotonically decreasing continuous component. Consequently, the largest contributions occur for small $u$, and the quantile decays rapidly thereafter. The instance is designed to induce a tension between capturing more mass immediately and preserving continuation value in the presence of disruption–induced discounting. We analyze this trade-off through a continuous Bellman equation for any online optimal algorithm, where the optimal control equates marginal benefit and discounted continuation cost. What follows is the formalization of this intuition. 
} 

Given $\varepsilon > 0$, denote the value of the optimal algorithm by $v^\varepsilon(\OPT)$ under $\Tilde{F}(u)$ that is defined above. The following proposition provides a characterization of $v^\varepsilon(\OPT)$.

\begin{proposition} \label{prop:ub_opt}
    $v^\varepsilon(\OPT) = \theta^* - \int_{0}^{1-\varepsilon} \frac{1}{y'(s)} \left( 1- \left(1+\frac{\log(y(s))}{n}\right)^n\right) \mathrm{d}s$. 
\end{proposition}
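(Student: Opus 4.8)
The plan is to compute $v^\epsilon(\OPT)$ directly from the second characterization in Proposition~\ref{prop:opt_value_complete}, namely $v(\OPT) = \int_0^1 F^{-1}(1-q)\, g_n(p,q)\, \mathrm{d}q$ with $g_n(p,q) = (1-p)\, n\,(1-pq)^{n-1}$, by plugging in the explicit inverse distribution given just before the proposition. The inverse splits into two pieces: a Dirac mass of size $\frac{\theta^*}{(1-p)n}$ at $u=0$, and an absolutely continuous part $-\frac{p}{1-p}\int_{y^{-1}(e^{-pnu})}^{1-\epsilon} \frac{1}{y'(s)}\,\mathrm{d}s$ on $(0,1]$. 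The Dirac piece contributes $\frac{\theta^*}{(1-p)n}\cdot g_n(p,0) = \frac{\theta^*}{(1-p)n}\cdot (1-p)n = \theta^*$, which accounts for the leading term. So the whole content of the proposition is to show that the absolutely continuous piece integrates to exactly $-\int_0^{1-\epsilon} \frac{1}{y'(s)}\bigl(1 - (1+\tfrac{\log y(s)}{n})^n\bigr)\mathrm{d}s$.

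First I would write the contribution of the continuous part as
\[
-\frac{p}{1-p}\int_0^1 \left( \int_{y^{-1}(e^{-pnu})}^{1-\epsilon} \frac{1}{y'(s)}\,\mathrm{d}s \right) (1-p)\, n\,(1-pu)^{n-1}\,\mathrm{d}u,
\]
and then swap the order of integration (Tonelli, everything is nonnegative since $y$ is decreasing so $y'<0$ and the integrand is appropriately signed). The inner region is $\{(u,s): 0\le u\le 1,\ y^{-1}(e^{-pnu}) \le s \le 1-\epsilon\}$. For fixed $s \in [0,1-\epsilon]$, the condition $y^{-1}(e^{-pnu}) \le s$ is equivalent — since $y$ is decreasing and hence $y^{-1}$ is decreasing — to $e^{-pnu} \ge y(s)$, i.e. $u \le -\frac{\log y(s)}{pn}$. (The upper cutoff $u \le 1$ is not binding precisely because $n \ge \lceil -\log(y(1-\epsilon))/p\rceil$ guarantees $-\log y(s)/(pn) \le 1$ for all $s \le 1-\epsilon$; this is where that hypothesis on $n$ is used.) So after swapping, the expression becomes
\[
-p\, n\int_0^{1-\epsilon} \frac{1}{y'(s)} \left( \int_0^{-\log y(s)/(pn)} (1-pu)^{n-1}\,\mathrm{d}u \right) \mathrm{d}s.
\]
The inner $u$-integral is elementary: $\int_0^{a} (1-pu)^{n-1}\mathrm{d}u = \frac{1}{pn}\bigl(1 - (1-pa)^n\bigr)$ with $a = -\log y(s)/(pn)$, so $1-pa = 1 + \frac{\log y(s)}{n}$, and the prefactor $pn$ cancels, leaving exactly $-\int_0^{1-\epsilon} \frac{1}{y'(s)}\bigl(1-(1+\tfrac{\log y(s)}{n})^n\bigr)\mathrm{d}s$. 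Adding the Dirac contribution $\theta^*$ gives the claimed formula.

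The main obstacle is bookkeeping rather than anything deep: one must be careful about the sign and integrability of the inner integrand (since $y' < 0$, the term $\frac{1}{y'(s)}(1-(1+\tfrac{\log y(s)}{n})^n)$ needs its sign checked — note $1+\frac{\log y(s)}{n} \in (0,1)$ for $s<1$ so $(1+\tfrac{\log y(s)}{n})^n < 1$, making the parenthesis positive and the whole term negative, consistent with $F^{-1}\ge 0$), and one must justify applying Proposition~\ref{prop:opt_value_complete} to a distribution specified through a generalized inverse containing a Dirac atom (this is the same device used for the hard instance in \S\ref{sebsec:comp_ratio_ub}, so I would appeal to the limiting/smoothing argument already invoked there, or interpret the atom as the $\beta\to 0$ limit of a thin uniform piece). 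The only genuinely necessary analytic input is verifying that the hypothesis $n \ge \lceil -\log(y(1-\epsilon))/p\rceil$ makes the cutoff $-\log y(s)/(pn)$ lie in $[0,1]$ for every $s\in[0,1-\epsilon]$, which follows because $y$ is decreasing so $-\log y(s) \le -\log y(1-\epsilon)$ on that range. Everything else is the Fubini swap and a one-line antiderivative.
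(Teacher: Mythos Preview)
Your proposal is correct and follows essentially the same approach as the paper: plug the explicit inverse into the integral formula from Proposition~\ref{prop:opt_value_complete}, split off the Dirac contribution $\theta^*$, swap the order of integration in the remaining double integral, and evaluate the inner integral in closed form. The only cosmetic difference is that the paper inserts an intermediate substitution $v=e^{-npu}$ before swapping, whereas you swap directly in the $(u,s)$ variables and then compute $\int_0^{-\log y(s)/(pn)} (1-pu)^{n-1}\,\mathrm{d}u$ in one line; your route is slightly more direct but otherwise identical in spirit and in the use of the hypothesis on $n$.
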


\noindent We then characterize the maximum value that can be obtained by any online algorithm. Consider the following dynamic program, where for any $n\geq1$ and $p\in(0,1)$, $D_i^\varepsilon$ is the value at observing $X_i$,\;$i \in [n+1]$, with the convention $X_{n+1}=0$:

\begin{equation} \label{exp:dp_ub}
    D_{i}^\varepsilon = \sup_{q \in [0,1]} \left \lbrace (1-p) \int_{0}^q \Tilde{F}(u) \, \mathrm{d}u + (1-pq)D_{i+1}^\varepsilon \right \rbrace, \forall i \in [n] \hspace{0.5cm} \text{and} \hspace{0.5cm} D_{n+1}^\varepsilon = 0.
\end{equation}

\noindent In particular, we are interested in analyzing the dynamic program solution $D_i^0$ for all $i \in [n]$. We now examine its continuous approximate counterpart (denoted by $d(x)$) through Lemma~\ref{lem: ub_adapt_be} and link it back to discrete valued $D_i^0$ via Lemma~\ref{lem: ub_dp_approx}. 
In order to find $d(0)$, we rewrite \eqref{exp:dp_ub} from an ODE perspective. Consider the following Bellman equation:
\begin{align}
    - d'(x) = \sup_{\mu \in [0,\infty]} \left\lbrace \int_{0}^{\mu} h(u) \, \mathrm{d}u -p \mu d(x) \right\rbrace, \label{exp: ub_adapt_be}
\end{align}

\noindent where $h(u):=  \theta^* \cdot \delta_{\lbrace 0 \rbrace}(u) - p \int_{y^{-1} (e^{-pu})}^{1} \left( 1/y'(s) \right) \mathrm{d}s \mathbbm{I}_{(0, 1)}(u)$. In what follows, we show that for any $x \in (0, 1]$, \eqref{exp: ub_adapt_be}  can be satisfied by 
\begin{align}
    d(x) = \int_{x}^1 -\frac{1}{y'(s)} \mathrm{d}s \hspace{0.5cm} \text{and} \hspace{0.5cm} \mu = -\frac{\log(y(x))}{p} \label{exp: ub_adapt_sol}.
\end{align} 

\begin{lemma}\label{lem: ub_adapt_be}
    \eqref{exp: ub_adapt_sol} provides a unique solution to the Bellman equation \eqref{exp: ub_adapt_be}, when $x \in (0, 1]$. 
\end{lemma}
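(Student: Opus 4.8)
The plan is to treat the bracketed objective in \eqref{exp: ub_adapt_be} as a function of $\mu$ alone (with $x$ a parameter), show it is strictly concave, solve its first-order condition to recover the claimed maximizer $\mu = -\log(y(x))/p$, and then verify that the optimal value equals $-d'(x) = -1/y'(x)$ by differentiating in $x$ and invoking the Hill and Kertz ODE. Fix $x \in (0,1)$ (the case $x=1$ is handled as a limit, where $\mu \to \infty$) and write $\Phi_x(\mu) := \int_0^\mu h(u)\,du - p\mu d(x)$. Since the Dirac mass contributes $\theta^*$ whenever $\mu > 0$, and $d(x) = -\int_x^1 1/y'(s)\,ds > 0$ because $y$ is strictly decreasing, for $\mu > 0$ we have $\Phi_x(\mu) = \theta^* - p\int_0^\mu \int_{y^{-1}(e^{-pu})}^1 \frac{1}{y'(s)}\,ds\,du - p\mu d(x)$, and $\mu = 0$ is dominated.

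First I would establish strict concavity. We have $\Phi_x'(\mu) = h(\mu) - p d(x)$; differentiating $h$ on $(0,1)$ using the identity $y(y^{-1}(e^{-pu})) = e^{-pu}$ gives $h'(u) = -p^2 e^{-pu}/\bigl(y'(y^{-1}(e^{-pu}))\bigr)^2 < 0$, while on $(1,\infty)$ the indicator kills $h$, so $\Phi_x' = -p d(x) < 0$ there and $h$ has a downward jump at $u = 1$. Hence $\Phi_x'$ is strictly decreasing on $(0,\infty)$, so $\Phi_x$ is strictly concave with a unique maximizer. Next, rewriting $\Phi_x'(\mu) = p\int_x^{\,y^{-1}(e^{-p\mu})} \frac{1}{y'(s)}\,ds$ (using the closed form for $d(x)$) and invoking the monotonicity of $y^{-1}$, the unique critical point is $\mu^\star = -\log(y(x))/p$, which is exactly the value in \eqref{exp: ub_adapt_sol}.

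It remains to check $\Phi_x(\mu^\star) = -d'(x) = -1/y'(x)$. Let $\Psi(x) := \Phi_x(\mu^\star(x))$. By the envelope theorem (the partial derivative in $\mu$ vanishes at $\mu^\star$), $\Psi'(x) = -p\mu^\star d'(x) = \log(y(x))/y'(x)$, using $d'(x) = 1/y'(x)$ and $p\mu^\star = -\log(y(x))$. On the other hand, differentiating the Hill and Kertz equation $y'(x) = 1 - 1/\theta^* + (\ln y(x) - 1)y(x)$ yields the identity $y''(x) = \ln(y(x))\,y'(x)$, hence $\bigl(-1/y'(x)\bigr)' = y''(x)/y'(x)^2 = \log(y(x))/y'(x) = \Psi'(x)$. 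Therefore $\Psi(x) + d'(x)$ is constant on $(0,1)$; letting $x \to 0^+$, we have $\mu^\star \to 0$ so only the Dirac mass survives and $\Psi(0^+) = \theta^*$, while $-d'(0^+) = -1/y'(0) = \theta^*$ because $y'(0) = -1/\theta^*$. Thus the constant is $0$ and $\Phi_x(\mu^\star) = -d'(x)$ for every $x \in (0,1]$, which says precisely that \eqref{exp: ub_adapt_sol} satisfies \eqref{exp: ub_adapt_be}. Uniqueness then follows: $\mu^\star$ is the unique maximizer by strict concavity, and once $\mu^\star$ is substituted, \eqref{exp: ub_adapt_be} reduces to the first-order ODE $-d'(x) = -1/y'(x)$, whose solution under the natural boundary condition $d(1) = 0$ is exactly $d(x) = \int_x^1 -1/y'(s)\,ds$.

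The main obstacle I anticipate is the truncation $\mathbbm{I}_{(0,1)}$ in $h$: when $y(x) < e^{-p}$ the unconstrained critical point $\mu^\star = -\log(y(x))/p$ exceeds $1$, where $h \equiv 0$, so one must argue that over the range of $x$ actually used in Proposition~\ref{prop:ub_opt} (consistent with the standing bound $n \ge \lceil -\log(y(1-\epsilon))/p \rceil$) the maximizer remains $\mu^\star$; otherwise one must separately account for the boundary maximizer $\mu = 1$. A secondary but routine point is justifying the $x \to 0^+$ limit used to pin the integration constant, and interpreting $\mu^\star = \infty$ at $x = 1$. Beyond these, everything reduces to elementary calculus together with the single ODE identity $y'' = (\ln y)\,y'$.
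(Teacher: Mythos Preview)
Your argument is correct and parallels the paper's proof closely: both hinge on the identity $y'' = (\ln y)\,y'$ derived from the Hill and Kertz ODE, and both reduce the value computation to $\theta^* + \int_0^x \ln(y(s))/y'(s)\,ds = -1/y'(x)$. The paper reaches this integral by a direct computation (swapping the order of integration in $\int_0^{\mu^\star}\int_{y^{-1}(e^{-pu})}^1 \cdots$), whereas you reach its derivative form via the envelope theorem and then match the boundary at $x\to 0^+$; these are the same calculation in integral versus differential dress. For uniqueness of the maximizer, the paper checks the sign of $\Phi_x'$ at $\mu=0$ and $\mu\to\infty$ together with $\Phi_x'(\mu^\star)=0$, while your strict-concavity argument (via $h'<0$) is a cleaner way to the same conclusion. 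Your additional remark that the resulting first-order ODE $-d'(x)=-1/y'(x)$ with $d(1)=0$ pins down $d$ uniquely is a point the paper leaves implicit.

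Regarding your main anticipated obstacle: the indicator $\mathbbm{I}_{(0,1)}$ in the definition of $h$ appears to be a misprint for $\mathbbm{I}_{(0,\infty)}$. The paper's own verification ignores any $u<1$ truncation when swapping the order of integration, and the limiting construction from $F^{-1}(1-u)$ (with $q=\mu/n$, $q\in[0,1]$ becoming $\mu\in[0,\infty)$) confirms that $h$ should be supported on $(0,\infty)$. So you need not worry about the boundary maximizer at $\mu=1$; under the intended definition your first-order condition $\mu^\star=-\log(y(x))/p$ holds for all $x\in(0,1)$.
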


Next, we connect the continuous dynamic program value (denoted by $d(i/n)$) with $D_i^\varepsilon$. 

\begin{lemma}\label{lem: ub_dp_approx}
    Fix $\sigma \in (0,1)$ and $n > -\log(y(1-\sigma))$. Let 
\begin{equation} \label{exp: eta_req}
    \eta_\sigma := \frac{n\sigma - \log(y(1-\sigma))(1 - \sigma)}{(1 - \sigma)\left(n + \log(y(1-\sigma))\right)}. 
\end{equation}
Let $\widetilde{D}_i := (1+ \eta_\sigma) \cdot d((1-\sigma)i/n))$. Then for any $i \in [n]$, we have $D_i^\varepsilon \leq D_i^0 \leq \widetilde{D}_i$. 
\end{lemma}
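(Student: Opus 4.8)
\emph{Overview.} I would prove the chain of inequalities in two independent steps, each via backward induction on $i$.

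\emph{Step 1: $D_i^\epsilon\le D_i^0$.} The only difference between the two instances is in the $\mathbbm{I}_{(0,1]}$-part of the quantile function: subtracting the $\epsilon$-version from the $0$-version gives
\[
F_0^{-1}(1-u)-F_\epsilon^{-1}(1-u)\;=\;-\frac{p}{1-p}\,\mathbbm{I}_{(0,1]}(u)\int_{1-\epsilon}^{1}\frac{1}{y'(s)}\,\mathrm{d}s\;\ge\;0,
\]
since $y'(s)<0$; note this difference is a (nonnegative) constant in $u$. Hence $F_0^{-1}(1-\cdot)\ge F_\epsilon^{-1}(1-\cdot)$ pointwise. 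The recursion \eqref{exp:dp_ub} is monotone in the quantile function (it enters only through $\int_0^q F^{-1}(1-u)\,\mathrm{d}u$) and in the continuation value (the coefficient $1-pq\ge 1-p>0$). Starting from $D_{n+1}^\epsilon=D_{n+1}^0=0$ and using the inductive hypothesis $D_{i+1}^\epsilon\le D_{i+1}^0$, monotonicity yields $D_i^\epsilon\le D_i^0$ for all $i\in[n]$.

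\emph{Step 2: $D_i^0\le\widetilde D_i$.} The crux is to show that $\widetilde d(x):=(1+\eta_\sigma)\,d((1-\sigma)x)$, so that $\widetilde D_i=\widetilde d(i/n)$, is a \emph{supersolution} of the $\epsilon=0$ dynamic program, i.e.
\[
\widetilde d(i/n)\;\ge\;\sup_{q\in[0,1]}\Bigl\{(1-p)\textstyle\int_0^q F^{-1}(1-u)\,\mathrm{d}u+(1-pq)\,\widetilde d\bigl((i+1)/n\bigr)\Bigr\}\qquad\text{for all }i\in[n],
\]
together with $\widetilde d((n+1)/n)\ge 0$ (interpreting $d\equiv 0$ beyond $1$ if needed). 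Granting this, a backward induction identical in structure to Step 1 — using that $D_i^0$ solves \eqref{exp:dp_ub} with equality and that the recursion is monotone in the continuation value — gives $D_i^0\le\widetilde D_i$. To prove the supersolution inequality I would substitute $q=\mu/n$, use the Dirac rescaling $\delta_{\{0\}}(u)=n\,\delta_{\{0\}}(nu)$ to rewrite $(1-p)\int_0^{\mu/n}F^{-1}(1-u)\,\mathrm{d}u=\tfrac1n\int_0^\mu h(w)\,\mathrm{d}w$ with $h$ as in \eqref{exp: ub_adapt_be}, write $z=(1-\sigma)i/n$ and $z'=z+(1-\sigma)/n$, and divide by $1+\eta_\sigma$, reducing the claim to
\[
d(z)-d(z')\;\ge\;\sup_{\mu\ge 0}\Bigl\{\tfrac{1}{(1+\eta_\sigma)n}\textstyle\int_0^\mu h(w)\,\mathrm{d}w-\tfrac{p\mu}{n}\,d(z')\Bigr\}.
\]
I would bound the left side below by $-(1-\sigma)d'(z)/n$ using concavity of $d$ — which holds because the Hill and Kertz equation gives $y''=y'\ln y>0$, hence $d''=-y''/(y')^2<0$ — and then invoke Lemma~\ref{lem: ub_adapt_be}, which identifies $-d'(z)=\sup_\mu\{\int_0^\mu h-p\mu\,d(z)\}$ with maximizer $\mu^\star(z)=-\log(y(z))/p$. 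Matching the two suprema termwise reduces everything to the pointwise estimate
\[
\Bigl((1-\sigma)-\tfrac{1}{1+\eta_\sigma}\Bigr)\textstyle\int_0^\mu h(w)\,\mathrm{d}w\;\ge\;p\mu\bigl((1-\sigma)\,d(z)-d(z')\bigr),\qquad \mu\ge 0,
\]
which I would verify from the explicit two-sided bounds $d'(s)\in[-\theta^*/(1-\theta^*),-\theta^*]$ (equivalently $|y'(s)|\in[\beta-1,\beta]$), the bound $\int_0^\mu h\ge\theta^*$, and $d(z)\ge(1-z)\theta^*\ge\sigma\theta^*$, valid because $z\le 1-\sigma$. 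The contraction $x\mapsto(1-\sigma)x$ is exactly what keeps $z$ in $[0,1-\sigma]$, where $d$ is bounded away from $0$ and $\mu^\star(z)\le-\log(y(1-\sigma))/p$ stays controlled by the hypothesis $n>-\log(y(1-\sigma))$; and the value of $\eta_\sigma$ in \eqref{exp: eta_req} is calibrated precisely so that the displayed estimate holds uniformly in $\mu\ge 0$ and $i\in[n]$ — in particular it forces $\eta_\sigma>\sigma/(1-\sigma)$, making the left-hand coefficient positive.

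\emph{Main obstacle.} The difficulty is entirely in the last pointwise inequality: one must balance three small quantities — the $O(1/n)$ discretization gap, the multiplicative slack $\eta_\sigma$, and the contraction $\sigma$ — and confirm that the specific choice \eqref{exp: eta_req} makes the bound hold for every $\mu$, including the regime where $p\mu$ is comparable to $n$ and $d(z')$ is near $0$. Once this is in hand, both backward inductions, the change of variables, and the monotonicity/convexity facts about $d$ are routine.
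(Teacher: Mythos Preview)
Your overall architecture matches the paper exactly: Step~1 by monotonicity of the recursion in $F^{-1}(1-\cdot)$, and Step~2 by showing $\widetilde D_i$ is a supersolution of~\eqref{exp:dp_ub} and running backward induction. The divergence is entirely in how you verify the supersolution inequality, and that is where your argument has a real gap.

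You lower-bound the left side by $-(1-\sigma)d'(z)/n$, invoke the Bellman identity $-d'(z)=\sup_\mu\{\int_0^\mu h-p\mu\,d(z)\}$, and then try to compare the two suprema \emph{termwise}, arriving at the pointwise estimate $\bigl((1-\sigma)-\tfrac{1}{1+\eta_\sigma}\bigr)\int_0^\mu h\ge p\mu\bigl((1-\sigma)d(z)-d(z')\bigr)$. But this inequality still carries $d(z')$ on the right, and you never actually verify it; the bounds you list ($d'\in[-\theta^*/(1-\theta^*),-\theta^*]$, $\int_0^\mu h\ge\theta^*$, $d(z)\ge\sigma\theta^*$) do not obviously close it. In particular, computing $c:=(1-\sigma)-\tfrac{1}{1+\eta_\sigma}=\tfrac{(1-\sigma)(-\log y(1-\sigma))}{n}$ shows your left side is $O(1/n)$ times a bounded quantity, whereas the right side is nonnegative precisely when $d(z')<(1-\sigma)d(z)$, and that can occur under the lemma's hypothesis $n>-\log y(1-\sigma)$: concavity of $d$ with $d(1)=0$ only gives $d(z')\ge(1-\sigma)d(z)$ when $n\ge(1-\sigma)/\sigma^2$, and for small $\sigma$ this is a strictly stronger requirement than the stated one (since $-\log y(1-\sigma)\sim -\log\sigma$ while $(1-\sigma)/\sigma^2\sim 1/\sigma^2$). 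So your ``calibration'' claim for $\eta_\sigma$ is asserted, not proved, and the crude bounds do not rescue it.

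The paper avoids this by applying concavity in the other slot: it replaces $d(z')$ in the right-hand side by its tangent $d(z)+\tfrac{1-\sigma}{n}d'(z)$ \emph{before} doing anything else. Now everything is in terms of $d(z)$ and $d'(z)$ alone, so the Bellman inequality $\int_0^\mu h-p\mu\,d(z)\le -d'(z)$ (valid for every $\mu$) absorbs the supremum directly. What remains reduces to the scalar inequality
\[
\eta_\sigma-\sigma-\sigma\eta_\sigma+\tfrac{\log y((1-\sigma)i/n)}{n}(1-\sigma)(1+\eta_\sigma)\;\ge\;0,
\]
which is linear in $\log y$, has its worst case at $i=n$, and holds with equality there by the very definition of $\eta_\sigma$ in~\eqref{exp: eta_req}. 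This is the step where the specific form of $\eta_\sigma$ is actually used, and your route never reaches it.
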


\begin{proof}[Proof of Lemma \ref{lem: ub_dp_approx}.]
    The first inequality follows from observing that $\Tilde{F}(u)$ is a positive, non-increasing function in $\varepsilon$. Thus, per \eqref{exp:dp_ub}, it holds that $D_i^\varepsilon \leq D_i^0$ for any $\varepsilon> 0$. To prove the second inequality, we use the following claim. 
    \begin{claim} \label{clm: ub_tilde}
        For any $q \in [0,1]$, we have $(1-p) \int_{0}^q \Tilde{F}(u) \, \mathrm{d}u + (1-pq)\widetilde{D}_{i+1} \leq \widetilde{D}_{i}$. 
    \end{claim}
    Now we are ready to show the second inequality. We proceed by induction. Note that for some $q_i \in [0,1]$, the Bellman equation for $D_i^0$ is satisfied with equality: 
    \begin{align*}
        D_{i}^0 &=  (1-p) \int_{0}^{q_i} \Tilde{F}(u) \, \mathrm{d}u + (1-pq_i)D_{i+1}^0 \\
        & \leq (1-p) \int_{0}^{q_i} \Tilde{F}(u) \, \mathrm{d}u + (1-pq_i) \widetilde{D}_{i+1}^0 \leq \widetilde{D}_{i} \qquad \text{(per induction hypothesis and Claim~\ref{clm: ub_tilde})}
    \end{align*}
\end{proof}

Finally, we compare the value of the dynamic program and the optimal algorithm. 
Specifically, we consider the ratio $D^\varepsilon_1/ v^\varepsilon(\OPT)$, and analyze its behavior in the asymptotic regime. Noting that $D_1^\varepsilon \leq D_1^0 \leq (1+\eta_\sigma) d((1-\sigma)/n)$ and $\eta_\sigma\to \sigma/(1-\sigma)$ when $n\to \infty$, we have
\begin{align*} \label{exp:compare_ub}
    \lim_{n\to \infty}\frac{D^\varepsilon_1}{v^\varepsilon(\OPT)} & \leq   \lim_{n\to \infty} \frac{(1+ \eta_\sigma) \cdot d((1-\sigma)/n))}{v^\varepsilon(\OPT)}  =\frac{1}{1-\sigma}\frac{\int_{0}^{1} -\frac{1}{y'(s)} ds}{\theta^* - \int_{0}^{1-\varepsilon} \frac{1-y(s)}{y'(s)}  ds},
\end{align*}
where in the last inequality we use Proposition \ref{prop:ub_opt} and Lebesque's dominated convergence theorem. This bound holds for any $\sigma\in (0,1)$. Thus,
\[
\gamma_{p, { 0}}^{\mathrm{AD}} \leq \frac{\int_{0}^{1} -\frac{1}{y'(s)} ds}{\theta^* - \int_{0}^{1-\varepsilon} \frac{1-y(s)}{y'(s)}  ds}
\]
for any $\varepsilon \in (0,1)$. Then letting $\varepsilon\to 0$, we see that the right-hand side tends to $\theta^*$ per \cite[Theorem 3.11]{liu2020variable}, which concludes the proof of Theorem \ref{thm:adaptive_ub}.

{ 
\section{Partial Recovery} \label{sec:recovery}

In this section, we examine the \ppp problem under partial recovery with parameter $\zeta \in [0,1]$. The proofs of Theorems~\ref{thm:static} and~\ref{thm:adaptive} turn out to be analogous to those in the case $\zeta=0$. For completeness, we present the main steps of the arguments, while omitting the full details to avoid redundancy.



\subsection{Non-adaptive Algorithm} \label{sec:recovery_na}
We first characterize the value of the non-adaptive algorithm similar to Lemma~\ref{lem:general_formula}. 
\begin{proposition}\label{prop: v_non_adapt}
We have $v(\ALG^q)= \big( A_n(q,p) + \zeta \big(1 - (1-qp)^n \big) \big)\,\left( \int_0^1 r(v) \min\{v,q\} \mathrm{d}v \right) / q. $
\end{proposition}

\begin{proof}[Proof of Proposition \ref{prop: v_non_adapt}.]
Relative to \eqref{exp:threshold_val}, the only change is that a disruption upon the $i$-th acceptance yields a fraction $\zeta$ of the last item; hence, an additive term $\zeta$ is included when a disruption occurs: 
\begin{align}
v(\ALG^q)
&= \sum_{a=1}^n \Pr[\mathrm{Bin}(n,q)=a]\;
\Bigg(\sum_{i=1}^{a} p(1-p)^{i-1}\big[(i-1)+\zeta\big] + a(1-p)^{a}\Bigg)\;
\mathbb{E}[X\mid X\ge \tau] \notag\\
&= \Big( A_n(q,p) + \zeta\ \sum_{a=0}^n \Pr[\mathrm{Bin}(n,q)=a]\,(1-(1-p)^a) \Big)
\frac{\int_\tau^\infty x f(x) \mathrm{d}x}{\Pr (X\ge \tau)} \notag\\
&= \Big( A_n(q,p) + \zeta\,[1-(1-qp)^n] \Big)\;
\frac{\int_0^1 r(v)\,\min\{v,q\} \mathrm{d}v}{q}, \notag
\end{align}
where the last equality follows by the same procedure as in the proof of Lemma~\ref{lem:general_formula}.
\end{proof}
\noindent Combining Propositions~\ref{prop:opt_value_complete} and~\ref{prop: v_non_adapt} yields the following ratio between $v(\ALG^q)$ and $v(\OPT)$: 
\begin{align*}
    \frac{v(\ALG^q)}{v(\OPT)}
= \frac{\big( A_n(q,p) + \zeta \big(1 - (1-qp)^n \big) \big)\,\left( \int_0^1 r(v) \min\{v,q\} \mathrm{d}v \right) / q}
       {\int_{0}^{1} \left(B_n(p,v) + \zeta \big[1 - (1 - vp)^n\big] \right) r(v) \mathrm{d}v}.
\end{align*}
The same monotonicity arguments as in the proof of Lemma \ref{lem:general_formula} imply that the infimum is attained when $v\to 0$ or $v\to 1$. Consequently,
\begin{align*}
& \frac{v(\ALG^q)}{v(\OPT)} \geq \min\left\{ \lim_{v\to 0} \frac{\big( A_n(q,p) + \zeta\big(1 - (1-qp)^n \big) \big) v}{\left(B_n(p,v) + \zeta \big[1 - (1 - vp)^n\big] \right) q} , \lim_{v\to 1} \frac{\big( A_n(q,p) + \zeta\big(1 - (1-qp)^n \big) \big)}{\left(B_n(p,v) + \zeta \big[1 - (1 - vp)^n\big] \right) }  \right\} \\
&= \frac{(1-p+p\zeta)(1-(1-qp)^n)}{p}\cdot\min\left\{  \frac{1}{[(1-p)/p +\zeta] \left(1 - (1-p)^n \right)}, \frac{1}{(1-p + \zeta p)qn}  \right\},
\end{align*}
This matches Lemma~\ref{lem:general_formula} by canceling common factors that appear both in numerator and denominator. Similar to Lemma~\ref{lem:monotonicity_eta_lb_static}, the same $1-1/e$ bound holds when no recovery is allowed. 


\begin{remark}
For the upper bound, we may reuse the distribution we defined for a hard instance from \S\ref{sebsec:comp_ratio_ub}. Since both $v(\ALG)$ and $v(\OPT)$ are scaled by the same factor $\big( (1-p+p\zeta)/(1-p) \big)$, the ratio is preserved, and the remainder of the analysis follows.
\end{remark}

\subsection{Adaptive Algorithm} \label{sec:recovery_a}
We first characterize the value of the adaptive algorithm in the partial recovery case, in parallel to Theorem~\ref{thm:adapt_lower_bound}. 
Recall that $g_{n}^{(\zeta)}(p,q) = n\bigl(1-(1-\zeta)p\bigr)\,(1-pq)^{n-1}$. Define the adaptive scheme via
$\beta^{(\zeta)}_{i,n}(p,q):=- \theta_n \mathbb{I}_{[\varepsilon_{i-1},\varepsilon_i]}(q)(g_n^{(\zeta)})'(p,q)/(1-(1-\zeta)p)$,
where $0=\varepsilon_1<\varepsilon_2<\cdots<\varepsilon_k\le 1$ is chosen for the largest $k$ possible so that
\[
\int_0^1 \beta^{(\zeta)}_{1,n}(p,q)\,\mathrm{d}q=1,
\qquad
\int_0^1 \beta^{(\zeta)}_{i+1,n}(p,q)\,\mathrm{d}q
=
\int_0^1 \beta^{(\zeta)}_{i,n}(p,q)\,(1-pq)\,\mathrm{d}q, 
\quad \forall i<k.
\]
Using the expression for $v(\OPT)$ from Proposition~\ref{prop:opt_value_complete}, we obtain the following:
\begin{proposition}$v(\ALG^{\mathrm{AD}})
\;=\;
\theta_n \, v(\OPT)
\;-\;
\theta_n \int_{0}^{1} F^{-1}(1-q)\,
n\bigl(1-(1-\zeta)p\bigr)\,(1-p)^{\,n-1}\,\mathrm{d}q.$ \label{prop:adaptivewithrecovery}
\end{proposition}
\noindent The proof is straightforward with 
$v(\ALG^{\mathrm{AD}})
=
\sum_{i=1}^n
\int_{0}^{1}
\beta^{(\zeta)}_{i,n}(p,u)\,(1-p)
\!\int_{0}^{u} F^{-1}(1-q)\,\mathrm{d}q\,\mathrm{d}u$ and using the system of recursions above. Substituting $v(\OPT)=\int_0^1 F^{-1}(1-q)\,g_n^{(\zeta)}(p,q)\,\mathrm{d}q$ yields Proposition \ref{prop:adaptivewithrecovery}. Thus, the ratio between $v(\ALG^{\mathrm{AD}})$ and $v(\OPT)$ can be written as 
\begin{align}
\frac{v(\ALG^{\mathrm{AD}})}{v(\OPT)}
= \frac{\theta_n \int_{0}^{1} F^{-1}(1-q) g_n^{(\zeta)}(p,q)\, \mathrm{d}q
\;-\; \theta_n \int_{0}^{1} F^{-1}(1-q)\, n \bigl(1-(1-\zeta) p\bigr) (1-p)^{n-1} \mathrm{d}q}
       {\int_{0}^{1} F^{-1}(1-q) g_n^{(\zeta)}(p,q) \mathrm{d}q}.
\label{exp:partial_rec_ad}
\end{align}
The conclusion of Theorem~\ref{thm:adapt_lower_bound} continues to hold, since the ratio comparison in~\eqref{exp:ratio_compare} is unchanged after canceling out $\bigl(1-(1-\zeta)p\bigr)$. Note that the connection still holds, where $g_{n}^{(\zeta)}(p,q)$ is the derivative of $\big( A_n(q,p) + \zeta \big(1 - (1-qp)^n \big) \big)$, i.e.,\ the derivative of the expected number of values $\OPT$ gets at quantile at most $q$ plus a fraction $\zeta$ recovery. From here, we adopt the same analysis from the base model, and we observe that the competitive ratio expression is unaffected. The proof is analogous to the base case and is therefore omitted for brevity. 


\begin{remark} For the upper bound, we redefine the distribution $\acute{F}(u)$ in \S \ref{subsec:adapt_ub} as: 
\begin{align}
\Tilde{F}(u) 
&:= \frac{\theta^*}{(1-p+p\zeta)n}\,\delta_{\{0\}}(u) 
   - \frac{p}{1-p+p\zeta} 
   \int_{y^{-1}(e^{-pnu})}^{1-\varepsilon} \frac{1}{y'(s)}\,\mathrm{d}s\;\mathbbm{1}_{(0,1]}(u). \notag 
\end{align}
Under this definition and the new partial-recovery dynamics, the adaptive and optimal algorithms attain the same value, and the remainder of the analysis follows unchanged. 
\end{remark}


\section{Maximizing the Largest Value Until Disruption}\label{sec:extensions}

In this section, we consider the following variant of the \ppp{} problem: if $S$ represents the set of indices of values selected by $\ALG$ before a disruption, then the value of the algorithm $\ALG$ is $v_{\max}(\ALG)=\E[\max_{i\in S}\{ X_i \}]$. We compare the value of an algorithm against the optimal clairvoyant algorithm $\OPT$ that knows the values upfront but not the disruption; hence, it is immediate that $v_{\max}(\OPT)=(1-p)\E[X_{(1)}]$. Our main result is a constant competitive ratio, summarized in the following result.

\begin{theorem}\label{thm:lowerbound_max}
    For any $n\geq 1$, $\sup_{\ALG}\inf_{F \in \mathcal{F}} \frac{v_{\max}(\ALG) }{v_{\max}(\OPT)}\geq \max_{\lambda\leq n} \min\left\{ \frac{1-e^{-p\lambda}}{p \lambda} , 1-e^{-\lambda}\right\} \geq 1-1/e $.
\end{theorem}

To prove this result, we utilize a non-adaptive single-threshold algorithm that is quantile-based. Similar to our analysis of NA in~\S\ref{sec:static}, we can provide an exact formula for $v_{\max}(\ALG^q)$ where $\ALG^q$ is an algorithm that receives a quantile $q\in [0,1]$ and uses the threshold $\tau$ such that $q=\Pr(X\geq \tau)$. We provide a detailed proof in~\S\ref{subsec:proof_max}.

The lower bound $\max_{\lambda \le n} \min \left\{ \frac{1 - e^{-p\lambda}}{p\lambda}, 1 - e^{-\lambda} \right\}$ converges to $1 - e^{-\lambda(p)}$ as $n\to\infty$, where $\lambda = \lambda(p)$ is the unique solution to $1 - e^{-p\lambda} = p\lambda (1 - e^{-\lambda})$. The following result shows that this lower bound is essentially tight for the class of non-adaptive algorithms.

\begin{proposition}
    For any non-adaptive algorithm $\ALG$, $\inf_{F \in \mathcal{F}, n\geq 1} \frac{v_{\max}(\ALG)}{v_{\max}(\OPT)} \leq 1-e^{-\lambda(p)}$.
\end{proposition}

To prove this result, we provide a hard instance similar to the one provided for the class NA in~\S\ref{sec:static}. We defer the details to Appendix~\ref{subsec:max_largest_ub}. In Figure~\ref{fig:max_comp_ratio}, we present the plot of $1-e^{-\lambda(p)}$. We note that, numerically, it is a decreasing function of $p$.

\begin{figure}[H]
  \centering
  \includegraphics[width=0.5\linewidth]{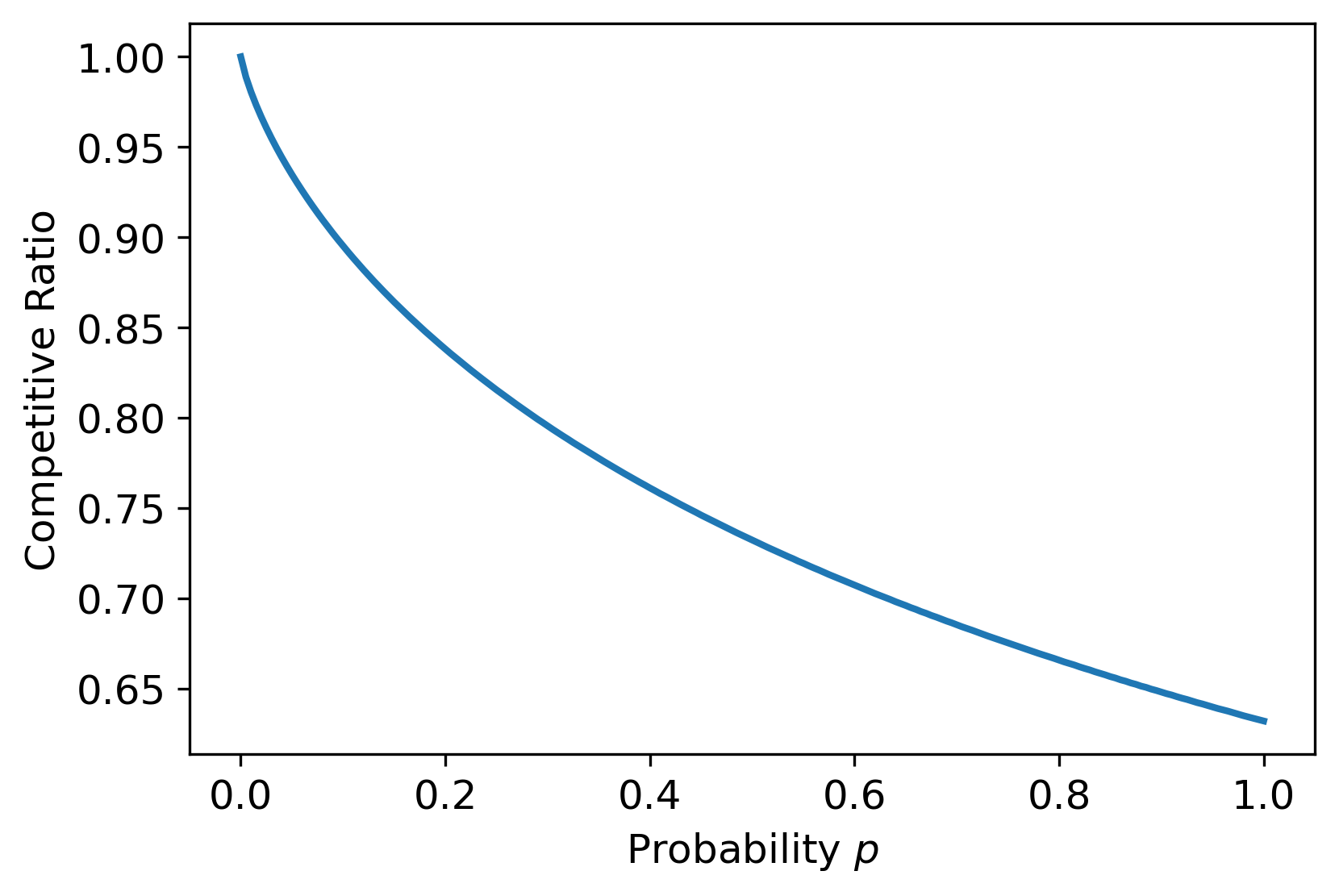}
  \caption{Competitive ratio lower bound computed from~\eqref{exp:cr_largest} for $p \in [0,1)$.}
  \label{fig:max_comp_ratio}
\end{figure}

\begin{remark}
    We can consider an alternative model where $S$ is the set of all indices of values selected by $\ALG$ including the index where the disruption occurs. Our results remain valid for $p<1$. We omit the details for brevity.
\end{remark}

\begin{remark}
We note that $\sup_{\ALG} v_{\max}(\ALG)$ can be computed via dynamic programming. However, any natural formulation must track both the current retained maximum $x$ and the remaining number of items $i$. The optimal policy takes the form of a threshold policy $\tau_{i,x}$, yet characterizing or even approximating these thresholds to allow a competitive analysis remains a challenging problem, which we leave as an open question.  A similar challenge is faced in prophet inequalities, where multiple items can be selected (e.g., see~\cite{assaf2000simple,harb2025new}). 
\end{remark}

\subsection{Proof of Theorem~\ref{thm:lowerbound_max}}\label{subsec:proof_max}

    We use following two lemmas to characterize the expected values of the optimal algorithm and the single-threshold algorithm using the function $r(v)>0$, where $\int_{u}^1 r(v)\mathrm{d}v = F^{-1} (1-u)$.  

    \begin{lemma} \label{lem:opt_max}
       We have $v_{\max}(\OPT) = (1-p) \int_{0}^{1} (1-(1-v)^n) \cdot r(v) \mathrm{d}v$.
    \end{lemma}
    The optimal policy simply targets the largest value. With probability \(1-p\) the item is not disrupting, and the expected collected value equals \(\mathbb{E}[X_{(1)}]\); hence Lemma~\ref{lem:opt_max} follows immediately from Claim~\ref{lem:ordered_stat}.
    
    \begin{lemma} \label{lem:single_max}
        We have $v_{\max}(\ALG^q) = \sum_{a=1}^n \Pr[\mathrm{Bin}(n,q)=a]\!\left(\sum_{k=0}^{a-1} p(1-p)^k\,\mu_k \;+\; (1-p)^a\,\mu_a\right)$\text{, where} $\mu_l := \E[\max\{X_1,\dots,X_l\}| X_1\geq \tau,\dots,X_l\geq \tau] = \left( 1/q^{l} \right) \int_{0}^{1} \left( q^l-(q-\min \{q,v\})^l \right) r(v) \mathrm{d}v$. 
    \end{lemma}
     The proof follows a similar analysis to Lemma~\ref{lem:general_formula}, but the gain from an accepted item depends on the realized disruption horizon; instead of obtaining \(\mathbb{E}[X \mid X\ge \tau]\) from each acceptance, the expected value is \(\mathbb{E} \left[\max\{X_1,\ldots,X_l\} \mid X_1\ge \tau,\ldots,X_l\ge \tau\right]\), where \(l\) is the number of selections survived without a disruption. Full proof for Lemma~\ref{lem:single_max} is provided in \S \ref{app:sec3}. Let

     \[
    W_{\ALG}(v):=
    \begin{cases}
    (1-p)\Bigl[1-(1-q)^{n}\Bigr], 
    & \text{if } v\ge q,\\[6pt]
    \dfrac{(1-p)\,(v/q)}{\,p+(1-p)\,(v/q)\,}\;\Bigl[1-\bigl(1-qp-(1-p)\,v\bigr)^{n}\Bigr], 
    & \text{if } 0<v<q.
    \end{cases}
    \]

     

\noindent We reorganize the expression given in Lemma~\ref{lem:single_max} into a single integral and obtain a representation of $v_{\max}(\ALG^q)$ using the shorthand notation $W_{\ALG}(v)$: 
 $$v_{\max}(\ALG^q) = \int_{0}^{1} W_{\ALG}(v)\, r(v)\, \mathrm{d}v.$$ 

   \noindent Finally, we compare  $v_{\max}(\ALG^q)$ and $v_{\max}(\OPT)$: 
    \begin{align}
    & \frac{v_{\max}(\ALG^q)}{v_{\max}(\OPT)}  = \frac{ \int_{0}^1 W_{\ALG}(v) \cdot r(v) \mathrm{d}v}{(1-p) \int_{0}^{1} (1-(1-v)^n) \cdot r(v) \mathrm{d}v} \notag \\
    & \geq \min \left\{\inf_{v \in [0,q)} \frac{(1-p)\,(v/q)}{\,p+(1-p)\,(v/q)\,}\; \frac{\Bigl[\,1-\bigl(1-qp-(1-p)\,v\bigr)^{n}\Bigr]}{(1-p)(1-(1-v)^n)}, \inf_{v \in [q,1]} \frac{(1-p)\,\bigl[\,1-(1-q)^{n}\,\bigr]}{(1-p) (1-(1-v)^n)} \right\} \notag \\
    & \geq \min \left\{ \lim_{v \to 0} \frac{(v/q)}{\,p+(1-p)\,(v/q)\,}\;
    \frac{\Bigl[\,1-\bigl(1-qp-(1-p)\,v\bigr)^{n}\Bigr]}{(1-(1-v)^n)}, \lim_{v \to 1} \frac{1-(1-q)^{n}}{1-(1-v)^n} \right\}\notag \\
    & = \min \left\{ \frac{1-(1-qp)^n}{npq}, 1-(1-q)^n \right\} \geq \min\left\{ \frac{1-e^{-p\lambda}}{p\lambda}, 1-e^{-\lambda} \right \}. \label{exp:cr_largest}
    \end{align}
    Here, the second inequality uses the monotonicity of both functions (we prove that the first term is non-decreasing in the Lemma~\ref{lem:func_non_decreasing}), and the final inequality follows by substituting $q = \lambda/n$. 
    }

\section{Final Remarks}\label{sec:final_rem}

We introduced the Online Selection with Uncertain Disruption (\ppp) problem, which captures unexpected disruptions resulting from serving requests.
We first provided a non-adaptive single-threshold algorithm with a tight competitive ratio of $1 - 1/e$. We then analyzed the general class of adaptive threshold algorithms and showed that an asymptotic competitive ratio of $\theta^* \approx 0.745$ is attainable, and this is tight.

{ Even though in this work we focus on the case of fixed disruption probability $p$, we can use the techniques developed for the non-adaptive single-threshold algorithms to analyze a rare disruption regime, in particular, the case when $p=\alpha/n$ with $\alpha \leq 1$. Indeed, letting $q=1$ in Lemma \ref{lem:general_formula} and using the fact that $1-x\leq e^{-x}$ for all $x \in \mathbb{R}$, we obtain the following lower bound on the competitive ratio: $\left( \frac{1-e^{-\alpha}}{\alpha}\right)\min\left\{1, \frac{\alpha}{1-(1-\alpha/n)^n} \right\}=(1-e^{-\alpha})/\alpha$.
This competitive ratio is larger than $ 1-1/e$ for $\alpha\in [0,1)$, improving the ratio from the non-adaptive case for fixed $p$. Moreover, this asymptotic competitive ratio converges to $1$ as $\alpha \to 0$. }

It remains open the question of determining a constant lower bound for $\gamma_{n, p, { \zeta}}^{\mathrm{AD}}$ for all $n\geq 1$. Using the linear programming approach in~\cite{perez2025iid} (see also~\cite{jiang2023tightness}), we can approximate $\gamma_{n,p, { \zeta}}^{\mathrm{AD}}$ numerically (see Figure~\ref{fig:cr_final}). We empirically observe that as $n$ grows, $\gamma_{n,p, { \zeta}}^{\mathrm{AD}}$ converges to $\theta^*$. We also note that, for finite $n$, when the disruption probability is close to $1$, the problem aligns closely with the classical single-selection i.i.d.\ prophet inequality problem. 
Indeed, although \ppp\ allows multiple selections, if $p=1-\varepsilon$ with $\varepsilon\approx 0$, the expected gain beyond the first selection is multiplied by $O(\varepsilon^2)$. Therefore, this regime yields limited new insight, as it reduces to the single-selection case.

\begin{figure}[H]
    \centering
    \includegraphics[width=0.5\linewidth]{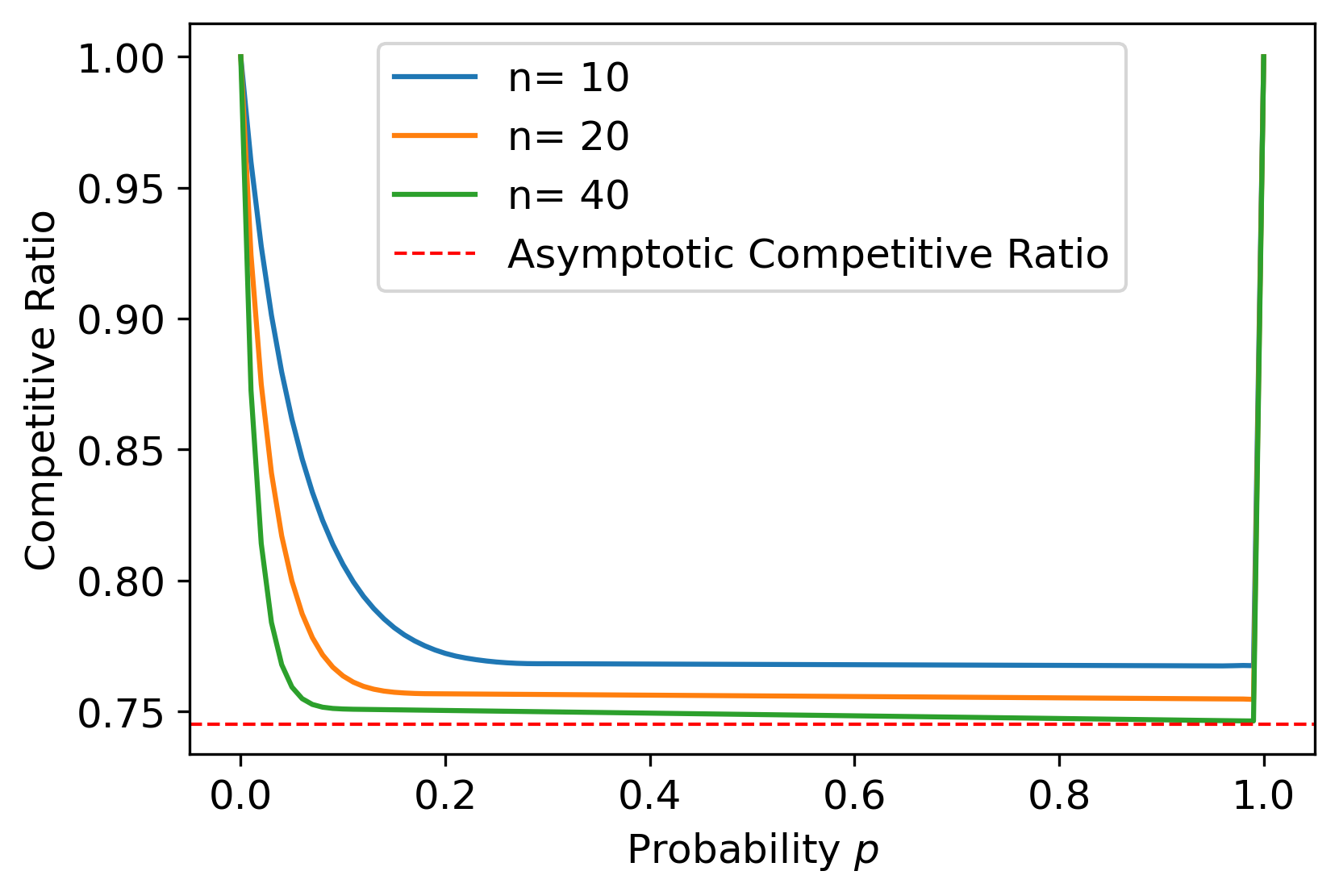}
    \caption{Estimated competitive ratios using linear programming formulation in \cite{perez2025iid} for $p\in [0,1]$.}
    \label{fig:cr_final}
\end{figure}

{  We have so far focused on the i.i.d.\ case of the \ppp{} problem, where all values share the same distribution. It is natural to consider the case, where the values remain independent, but are not necessarily identically distributed. In this case, using an approach akin to contention resolution schemes (see, e.g., ~\citep{alaei2014bayesian}), we can prove a competitive ratio of at least $1/2$, and this bound is tight (see \S\ref{app:sec4}). }

In this paper, we modeled the disruption as a memoryless process. A natural extension would be to consider broader classes of disruption processes. For example, the disruption probability might increase as more values are being accepted, modeling a system that wears out over time; conversely, it could also decrease, representing a system that becomes more reliable over time. However, analyzing such variants is non-trivial, as optimal algorithms must account for the number of selections made—a known challenge in multiple-selection problems and an active area of research (e.g., see ~\cite{alaei2014bayesian,jiang2023tightness,brustle2024splitting}). 



{  Our paper sheds light on quantifying the \textit{price} of having limited information, and we attempt to study the value of \textit{flexibility} by showing how much adaptivity improves the competitive ratio over non-adaptive algorithms. An alternative way to measure the value of flexibility would be to benchmark non-adaptive algorithms directly against the optimal online policy. While conceptually compelling, this is analytically delicate; it remains an open question to establish such tight and distribution-free comparisons by exploiting the structure of the optimal online policy characterized by a dynamic program.    }



{  
Lastly, \ppp assumes a single disruption event that terminates service with partial recovery on the last value. Another interesting direction is to consider models, in which a penalty is introduced on the last acceptance if a disruption occurs---for example, a fixed charge \(c>0\) or a linear penalty \(\phi\,X_{\mathrm{last}}\), akin to those studied in knapsack settings (e.g., see \cite{dean2008approximating,fu2018ptas}). In such cases the realized payoff may become negative, and the standard competitive benchmark can be ill-posed. To obtain meaningful guarantees, one should adopt alternative performance criteria, e.g. regret-type guarantees such as minimizing \(\E[v(\OPT)-v(\ALG)]\). These formulations are natural in revenue-management applications (e.g., airline overbooking or ticketing with compensation for denials), where the objective is expected revenue minus penalties subject to service constraints. 
}


\bibliographystyle{abbrvnat}
\bibliography{Revision}

\appendix

\newpage

\section{Missing Proofs from Section~\ref{sec: prelim}} \label{app:sec0}

\begin{proof}[Proof of Proposition~\ref{prop:decreasing_thres}]
    First, note that $\tau_n = 0$, since accepting always brings a non-negative value. Assume that an optimal algorithm, denoted by \( \ALG^{\text{OPT}} \), has $\tau_i < \tau_{i+1}$ for some $i \in [n-1]$. Consider the following alternative algorithm \( \ALG^{\text{ALT}} \) that swaps these two thresholds, i.e., the alternative algorithm accepts $X_i$ with threshold $\tau_{i+1}$, accepts $X_{i+1}$ with threshold $\tau_i$, and rest of the thresholds remain unchanged. Denote by \( v(\ALG^{\text{OPT}}) \) and \( v(\ALG^{\text{ALT}}) \) the expected total values collected under these two algorithms. Denote the expected total value obtained by both algorithms before observing the $i^{\text{th}}$ value by $v_{[1, \dots, i-1]}$. Conditioned on observing the $(i+2)^{\text{th}}$ value, denote the expected total value obtained by both algorithms starting from observing $(i+2)^{\text{th}}$ value by $v_{[i+2, \dots, n]}$. Let $X^{\tau_i} := \E[X_i|X_i \geq \tau_i]$ and  $p_i := \Pr[X_i \geq \tau_i]$. Finally, let $p_r$ be the probability that the algorithm observes value $X_i$, and $c^i = v_{[1, \dots, i-1]} + p_r (1-p_{i+1} p) (1-p_i p) v_{[i+2, \dots, n]}$. Then the expected total values for both algorithms can be written as
    \begin{align*}
        v(\ALG^{\text{OPT}}) & = c^i + p_r X^{\tau_i} p_i (1-p+{ p\zeta}) + p_r (1-p_i p) X^{\tau_{i+1}} p_{i+1}(1-p+{ p\zeta}), \\
        v(\ALG^{\text{ALT}}) & = c^i + p_r X^{\tau_{i+1}} p_{i+1} (1-p + { p\zeta}) + p_r (1-p_{i+1} p) X^{\tau_{i}} p_{i}(1-p+{ p\zeta}),
    \end{align*}
    which yields $v(\ALG^{\text{ALT}}) - v(\ALG^{\text{OPT}}) = p_r (1-p + { p\zeta}) p p_i p_{i+1} (X_{\tau_{i+1}} - X_{\tau_i}) \geq 0.$
    Thus, one can construct an alternative algorithm that employs a non-increasing sequence of thresholds by iteratively swapping thresholds, which achieves optimality. 
\end{proof}

\begin{proof} [Proof of Proposition~\ref{prop:opt_value_complete}] { The first equality follows immediately from the following characterization of $v(\OPT)$, together with the subsequent claim, whose proof is provided at the end of the current proof.
$$v(\OPT) =\sum_{i=1}^n p(1-p)^{i-1}\!\left(\sum_{j=1}^{i-1}\mathbb{E}[X_{(j)}]+\zeta\,\mathbb{E}[X_{(i)}]\right) \;+\;(1-p)^n\sum_{j=1}^n \mathbb{E}[X_{(j)}].$$}
\begin{claim}
    \label{lem:ordered_stat}
    For any $j \in [n]$, $\E[X_{(j)}] = \int_{0}^{1} \Pr[\text{Bin}(n,v) \geq j] \cdot r(v) \, \mathrm{d}v$, where $r(v) > 0$ satisfies $\int_{u}^1 r(v)\, \mathrm{d}v = F^{-1}(1-u)$.
\end{claim}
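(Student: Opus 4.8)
The plan is to prove Claim~\ref{lem:ordered_stat} via the classical order-statistic integral followed by a Tonelli interchange, and then to indicate how it (together with the companion $F^{-1}$-representation it passes through) completes both characterizations of $v(\OPT)$ in Proposition~\ref{prop:opt_value_complete}.

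\textbf{Proof of the claim.} Fix $j \in [n]$. Starting from the density of the $j$-th largest order statistic,
\[
\E[X_{(j)}] \;=\; \int_0^\infty x\, f(x)\, \binom{n}{j} j\, F(x)^{n-j}\big(1-F(x)\big)^{j-1}\,dx ,
\]
substitute $u = 1 - F(x)$, which is legitimate since $F$ is continuous and strictly increasing, to obtain
\[
\E[X_{(j)}] \;=\; \int_0^1 F^{-1}(1-u)\, \binom{n}{j} j\, (1-u)^{n-j} u^{j-1}\,du .
\]
Now write $F^{-1}(1-u) = \int_u^1 r(v)\,dv$ with $r \ge 0$ (valid by the assumed differentiability and strict monotonicity of $F^{-1}$) and swap the order of integration by Tonelli's theorem, justified because the integrand is nonnegative:
\[
\E[X_{(j)}] \;=\; \int_0^1 r(v)\left(\int_0^v \binom{n}{j} j\, u^{j-1}(1-u)^{n-j}\,du\right)dv .
\]
The inner integral equals $\Pr[\mathrm{Bin}(n,v) \ge j]$ by the standard incomplete-beta/binomial-CDF identity (one can also verify this by differentiating both sides in $v$, or by induction on $j$ with integration by parts). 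Substituting this proves the claim.

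\textbf{Completing Proposition~\ref{prop:opt_value_complete}.} Since $D \sim \mathrm{Geom}(p)$ is independent of $(X_i)_{i=1}^n$, writing $\sum_{i=1}^{\min\{D-1,n\}} X_{(i)} = \sum_{j=1}^n X_{(j)}\,\mathbbm{1}[\,j \le D-1\,]$ and using $\Pr(D \ge j+1) = (1-p)^j$ gives the single common starting point
\[
v(\OPT) \;=\; \sum_{j=1}^n (1-p)^j\, \E[X_{(j)}] .
\]
For the first equality in \eqref{exp:opt_value_complete}, plug in the claim and interchange the finite sum with the integral (again fine since $r \ge 0$); the inner sum $\sum_{j=1}^n (1-p)^j \Pr[\mathrm{Bin}(n,v) \ge j]$ equals $\E[\min\{\mathrm{Bin}(n,v),\,D-1\}]$ by the tail-sum formula $\E[\min\{A,B\}] = \sum_{j\ge 1}\Pr[A \ge j]\Pr[B \ge j]$ for independent nonnegative integer-valued $A,B$, and since $\mathrm{Bin}(n,v) \le n$ this is $\E[\min\{\mathrm{Bin}(n,v),\,D-1,\,n\}] = B_n(p,v)$, yielding $v(\OPT) = \int_0^1 B_n(p,v)\,r(v)\,dv$. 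For the second equality, instead plug in the intermediate $F^{-1}$-representation displayed above, interchange sum and integral, and simplify $\sum_{j=1}^n (1-p)^j\, j\binom{n}{j} u^{j-1}(1-u)^{n-j}$ using $j\binom{n}{j} = n\binom{n-1}{j-1}$ and the binomial theorem to $n(1-p)(1-pu)^{n-1} = g_n(p,u)$, giving $v(\OPT) = \int_0^1 F^{-1}(1-q)\, g_n(p,q)\,dq$ after renaming the variable.

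\textbf{Main obstacle.} No step is deep; the work is bookkeeping. The two points needing a little care are the beta--binomial identity for the inner integral in the claim, and the interchanges of summation and integration --- both harmless because every integrand is nonnegative. The most computational step is the binomial simplification in the second equality.
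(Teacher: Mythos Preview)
Your proof of the claim is correct and follows exactly the same route as the paper: the order-statistic density, the substitution $u=1-F(x)$, the representation $F^{-1}(1-u)=\int_u^1 r(v)\,dv$, and a Tonelli swap to recover the binomial tail via the beta--binomial identity. Your add-on completing Proposition~\ref{prop:opt_value_complete} is also correct and in fact slightly more streamlined than the paper's version (you use the tail-sum identity $\E[\min\{A,B\}]=\sum_{j\ge 1}\Pr[A\ge j]\Pr[B\ge j]$ directly, whereas the paper first expands $v(\OPT)$ by conditioning on the geometric $D$ and then rearranges the double sum).
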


\noindent To prove the second equality, we note that
 {  \begin{align*}
    v(\OPT) 
    &= \sum_{i=2}^n p (1-p)^{i-1} \sum_{j=1}^{i-1} \int_{0}^{1} F^{-1}(1-q)\, (1-q)^{n-j} q^{j-1}\, j \binom{n}{j}\, \mathrm{d}q \\ 
    &\hspace{5cm} + (1-p)^n  \sum_{j=1}^{n} \int_{0}^{1} F^{-1}(1-q)\, (1-q)^{n-j} q^{j-1}\, j \binom{n}{j}\, \mathrm{d}q \\[6pt]
    &\hspace{4.1cm} +\; \zeta \sum_{i=1}^n p (1-p)^{i-1} 
    \int_{0}^{1} F^{-1}(1-q)\, (1-q)^{n-1}\, q^{\,i-1}\, i \binom{n}{i}\, \mathrm{d}q \\[6pt]
    &= \int_{0}^{1} F^{-1}(1-q)\, 
    \sum_{j=1}^{n} (1-q)^{n-j} q^{j-1}\, j \binom{n}{j}\, \big((1-p)^j + \zeta p(1-p)^{j-1}\big)\, \mathrm{d}q \\[6pt]
    &= \int_{0}^{1} F^{-1}(1-q)\; g_n^{(\zeta)}(p,q)\, \mathrm{d}q, 
\end{align*}}
where the last equality follows from the binomial theorem. 

\noindent \emph{Proof of Claim~\ref{lem:ordered_stat}.} \quad 
For any $j \in [n]$, let $b_{n,j} := j \cdot \binom{n}{j}$. Then we can write the expected value of the $j^{\text{th}}$ top ordered statistics as follows: 
\begin{align}
    \E[X_{(j)}] & = \int_{0}^{\infty} xf(x)F(x)^{n-j}(1-F(x))^{j-1} \cdot b_{n,j} \, \mathrm{d}x = \int_{0}^{1} F^{-1}(1-u) \cdot (1-u)^{n-j} u^{j-1} \cdot b_{n,j} \, \mathrm{d}u \notag \\
    & = \int_{0}^{1} \int_{u}^1 r(v) \, \mathrm{d}v \cdot (1-u)^{n-j} u^{j-1} \cdot b_{n,j} \, \mathrm{d}u = \int_{0}^{1} \Pr[\text{Bin}(n,v) \geq j] \cdot r(v) \, \, \mathrm{d}v, 
\end{align}
    where the second equality uses the substitution $1 - F(x) = u$, and the third equality comes from our assumption that $F^{-1}$ is differentiable and strictly decreasing so that the existence of $r(v)$ is guaranteed, and the last equality changes the order of integration. 
\end{proof}

\begin{example}\label{example:vary_p} 
Here, we show that varying the disruption probability within the selection process can lead to a competitive ratio of $0$. 
Let $n$ and $s$ be large constants, where $s \ll n$. Assume $n$ is divisible by $s$ for simplicity. Assume $X_i\sim \mathrm{Exp}(1)$.
Consider the disruption process $\Pr[Y_i=1]=1$ if $i = 1, 1+s, \ldots, 1 + (n/s-1)s$, and $0$ otherwise.
The optimal clairvoyant algorithm collects top $n/s$ ordered statistics in expectation, where the exponential distribution's order statistics roughly follows harmonic numbers (e.g., see \cite{david2004order}). Thus, $v(\OPT) \approx \sum_{l =1}^{n/s} \log\left( \frac{n}{l}\right) \approx  \left( \frac{n}{s} \log s + \frac{n}{s} - 1 \right).$
    On the other hand, the optimal online algorithm's value is $n/s$ by accepting every value when $Y_i = 0$. From here, we see that $v(\ALG)/v(\OPT) \rightarrow 0$ as $n$ and $s$ grow large. 
\end{example}

\section{Missing Proofs from Section~\ref{sec:static}} \label{app:sec1}

\begin{lemma}{\label{lemma_mono}} Fix $p \in (0,1)$. Then $\frac{v}{ \sum_{i=2}^n p\cdot (1-p)^{i-1} \sum_{j=1}^{i-1}\Pr[\text{Bin}(n,v) \geq j] + (1-p)^n \cdot nv}$ is a non-decreasing function in the interval $v \in [0, q]$.
\end{lemma}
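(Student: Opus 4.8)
The plan is to show that the reciprocal of the given expression, namely
\[
\phi(v) \;:=\; \frac{1}{v}\left( \sum_{i=2}^n p(1-p)^{i-1} \sum_{j=1}^{i-1}\Pr[\text{Bin}(n,v)\geq j] \;+\; (1-p)^n\, nv \right),
\]
is non-increasing on $(0,q]$, which is equivalent to the claim. The term $(1-p)^n n v / v = (1-p)^n n$ is constant, so it suffices to analyze $\psi(v) := \tfrac{1}{v}\sum_{j=1}^{n-1} c_j \Pr[\text{Bin}(n,v)\geq j]$, where $c_j := \sum_{i=j+1}^n p(1-p)^{i-1} = (1-p)^j - (1-p)^n > 0$ after swapping the order of summation. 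Since a non-negative combination of non-increasing functions is non-increasing, it is enough to prove that for each fixed $j\in[n-1]$ the map $v \mapsto \Pr[\text{Bin}(n,v)\geq j]/v$ is non-increasing on $(0,1]$.

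First I would record the integral representation $\Pr[\text{Bin}(n,v)\geq j] = b_{n,j}\int_0^v (1-u)^{n-j}u^{j-1}\,du$ with $b_{n,j}=j\binom{n}{j}$, which is exactly the identity already used in the proof of Claim~\ref{lem:ordered_stat}. Writing $G_j(v) := \Pr[\text{Bin}(n,v)\geq j]$, I would then compute the sign of $\frac{d}{dv}\big(G_j(v)/v\big) = \big(vG_j'(v) - G_j(v)\big)/v^2$, so the goal reduces to showing $h_j(v) := vG_j'(v) - G_j(v) \le 0$ on $(0,1]$. Here $G_j'(v) = b_{n,j}(1-v)^{n-j}v^{j-1}$. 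Observe $h_j(0)=0$ (both terms vanish, using $G_j(v)=O(v^j)$ and $j\ge 1$), and $h_j'(v) = vG_j''(v) = v\cdot b_{n,j}\,v^{j-2}(1-v)^{n-j-1}\big((j-1)(1-v) - (n-j)v\big) = b_{n,j}v^{j-1}(1-v)^{n-j-1}\big((j-1) - (n-1)v\big)$. Thus $h_j'$ is positive for $v < (j-1)/(n-1)$ and negative afterwards, so $h_j$ increases then decreases on $(0,1)$; since $h_j(0)=0$ and $h_j(1) = G_j'(1)\cdot 1 - G_j(1)$, where $G_j(1)=1$ and $G_j'(1)=0$ for $j\le n-1$ (the factor $(1-v)^{n-j}$ vanishes), we get $h_j(1) = -1 < 0$. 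A function that starts at $0$, rises, then falls to a negative value need not be everywhere $\le 0$, so this crude argument is not quite enough; I would instead argue directly that $h_j(v) = vG_j'(v) - G_j(v) = b_{n,j}\big(v^j(1-v)^{n-j} - \int_0^v (1-u)^{n-j}u^{j-1}\,du\big)$, and show the bracket is $\le 0$ by comparing integrands: $v^j(1-v)^{n-j} = \int_0^v \tfrac{d}{du}\big(u^j(1-v)^{n-j}\big)\,du$ is not directly comparable, so the cleanest route is to write $v^j(1-v)^{n-j} - \int_0^v j u^{j-1}(1-u)^{n-j}\,du = -\int_0^v u^j \cdot (n-j)(1-u)^{n-j-1}\cdot(-1)\,du \le 0$ via integration by parts on $\int_0^v j u^{j-1}(1-u)^{n-j}du$, which yields exactly $v^j(1-v)^{n-j} + (n-j)\int_0^v u^j(1-u)^{n-j-1}\,du$, so the bracket equals $-(n-j)\int_0^v u^j(1-u)^{n-j-1}\,du \le 0$ when $j<n$, and equals $v^n - \int_0^v n u^{n-1}du = 0$ when $j=n$ (not needed here since $j\le n-1$). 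This gives $h_j(v)\le 0$ on $(0,1]$, hence $G_j(v)/v$ is non-increasing.

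Assembling the pieces: each $v\mapsto G_j(v)/v$ is non-increasing on $(0,1]\supseteq(0,q]$, the coefficients $c_j>0$, and the remaining summand contributes a constant, so $\phi(v)$ is non-increasing and therefore $1/\phi(v)$ — the quantity in the lemma statement — is non-decreasing on $[0,q]$. The main obstacle is the sign analysis of $h_j$; the integration-by-parts identity $\int_0^v j u^{j-1}(1-u)^{n-j}\,du = v^j(1-v)^{n-j} + (n-j)\int_0^v u^j(1-u)^{n-j-1}\,du$ is the key computation that makes the sign transparent, and I would make sure to carry it out carefully, handling the boundary term at $u=0$ (which vanishes since $j\ge 1$).
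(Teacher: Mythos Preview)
Your decomposition contains a genuine error. After swapping sums you reduce to showing that each individual map $v\mapsto G_j(v)/v$ with $G_j(v)=\Pr[\text{Bin}(n,v)\geq j]$ is non-increasing on $(0,1]$. This is \emph{false} for every $j\geq 2$: since $G_j(v)=O(v^{j})$ as $v\to 0$, the ratio $G_j(v)/v\to 0$, so it is initially increasing. For a concrete check, take $n=3$, $j=2$: then $G_2(v)=3v^2-2v^3$ and $G_2(v)/v=3v-2v^2$ has derivative $3-4v>0$ on $(0,3/4)$.

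The place where your argument slips is the integration-by-parts step. The bracket you need to control is
\[
v^{j}(1-v)^{n-j}-\int_0^v u^{j-1}(1-u)^{n-j}\,du,
\]
but you silently replace the integral by $\int_0^v j\,u^{j-1}(1-u)^{n-j}\,du$, inserting an extra factor $j$. With the correct integral, the IBP identity gives
\[
v^{j}(1-v)^{n-j}-\int_0^v u^{j-1}(1-u)^{n-j}\,du=\tfrac{j-1}{j}\,v^{j}(1-v)^{n-j}-\tfrac{n-j}{j}\int_0^v u^{j}(1-u)^{n-j-1}\,du,
\]
which is strictly positive for small $v$ whenever $j\geq 2$, consistent with the counterexample above. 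Your earlier ``crude'' analysis of $h_j$ already hinted at this: $h_j$ does rise above $0$ before coming down.

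The paper avoids this obstacle by grouping differently. Instead of isolating each $G_j$, it keeps the partial sums $S_i(v):=\sum_{j=1}^{i}G_j(v)$ coming from the outer index $i$ and shows that $S_i(v)/v$ is non-increasing for every $i\in[n-1]$. The point is that the $j=1$ term dominates near $0$ (indeed $S_i(v)/v\to n$ as $v\to 0$) and a direct derivative computation together with the identity $\sum_{j=1}^{i}\Pr[\text{Bin}(n,v)\geq j]=\sum_{k\geq 1}\min\{k,i\}\Pr[\text{Bin}(n,v)=k]$ makes the sign transparent. Your coefficients $c_j=(1-p)^j-(1-p)^n$ and the paper's weights $p(1-p)^{i-1}$ on $S_{i-1}$ are related by Abel summation, so you can recover a correct proof from your setup by undoing the swap and arguing with the partial sums $S_i$ instead of the individual $G_j$.
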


\begin{proof}[Proof of Lemma \ref{lemma_mono}.]
    It is sufficient to show that the function 
    $f(v) = v/ \left( \sum_{j=1}^{i} \Pr[\text{Bin}(n,v) \geq j] \right)$ is non-decreasing for fixed  $i \in [n-1]$. Consider $g(v) = 1/f(v)$, and its derivative with respect to $v$: 
    \begin{align*}
        g'(v)&= \frac{1}{v^2} \left( \sum_{j=1}^{i} (\sum_{k = j}^{n}\binom{n}{k} kv^{k}(1-v)^{n-k} - \binom{n}{k} (n-k)v^{k+1}(1-v)^{n-k-1}) - \Pr[\text{Bin}(n,v) \geq j] \right)\\
        & = \frac{1}{v^2} \sum_{j=1}^{i} \frac{n!}{(n-j)!(j-1)!}v^j(1-v)^{n-j} - \frac{v^{n+1}}{1-v} - \Pr[\text{Bin}(n,v) \geq j]\\
        & = -\frac{i  v^{n+1}}{v^2 (1-v)} + \frac{1}{v^2} \sum_{j=1}^i (j \cdot \Pr[\text{Bin}(n,v) = j] - \Pr[\text{Bin}(n,v) \geq j]).
    \end{align*}
    Observe that $i v^{n+1}/ \left( v^2 (1-v) \right)$ is non-negative, and for each $k \in [1, i]$, $k \cdot \Pr[\text{Bin}(n,v) = k] - \sum_{j=1}^{k} \Pr[\text{Bin}(n,v) = j] = 0$, as $\Pr[\text{Bin}(n,v) \geq j] = \sum_{k=j}^{n} \Pr[\text{Bin}(n,v) = k]$. Thus, $ g'(v) \leq 0$, and $f(v)$ is non-decreasing in $v \in [0,q]$. 
\end{proof}

\begin{proof}[Proof of Lemma~\ref{lemma_hard_opt}]
    \begin{align*}
    v(\OPT) & \geq \sum_{i=2}^n p\cdot (1-p)^{i-1} \sum_{j=1}^{i-1} \int_{0}^{1} \Hat{F}(u) \cdot (1-u)^{n-j} u^{j-1} \cdot j \cdot \binom{n}{j} \, \mathrm{d}u \\
    & = \int_{0}^{1} \sum_{i=2}^n p\cdot (1-p)^{i-1} \sum_{j=1}^{i-1} \frac{a_1}{n}\delta_{\lbrace 0 \rbrace}(u) \cdot (1-u)^{n-j} u^{j-1} \cdot j \cdot \binom{n}{j} \, \mathrm{d}u \\
    & \hspace{3cm} + \int_{0}^{\beta/n} \sum_{i=2}^n p\cdot (1-p)^{i-1} \sum_{j=1}^{i-1} a_2 \cdot (1-u)^{n-j} u^{j-1} \cdot j \cdot \binom{n}{j} \, \mathrm{d}u\\
    & = a_1 \sum_{i=2}^n p(1-p)^{i-1} + a_2 \sum_{i=2}^n \sum_{j=1}^{i-1} p\cdot (1-p)^{i-1} \Pr[\text{Bin}(n, \beta/n) \geq j] \\
    & = a_1(1-p)(1-(1-p)^{n-1}) + a_2 \sum_{j=1}^{i-1} \Pr[\text{Bin}(n, \beta/n) \geq j] [(1-p)^j - (1-p)^n].
\end{align*} 
The second equality recognizes that $\Pr[\text{Bin}(n, \beta/n) \geq j] = \int_{0}^{\beta/n} (1-u)^{n-j} u^{j-1} \cdot j \cdot \binom{n}{j} \, \mathrm{d}u$, and we recover the lemma by letting $n \to \infty$. 
\end{proof}

\begin{proof}[Proof of Lemma~\ref{lemma_hard_threshold}]
We derive the performance of $\ALG^q$ via direct calculation.
\begin{align}
    & v(\ALG^q) = A_n(q, p) \cdot \frac{\int_0^q \Hat{F}(u) \, \mathrm{d}u}{q} = A_n(q, p) \cdot \frac{\int_0^q ({a_1}/{n})\delta_{\lbrace 0 \rbrace}(u) + a_2\mathbbm{I}_{(0, \beta/n]}(u)\, \mathrm{d}u}{q} \tag{per \eqref{exp:hard_dist}}\\
    & = \frac{A_n(q, p)}{q} \cdot \left(\frac{a_1}{n} + a_2 \cdot \min \{ q, \beta/n \} \right) \notag \\
    & = \max \left\lbrace \sup_{q \in [0, \beta/n]} \frac{A_n(q, p)}{q} \cdot \left(\frac{a_1}{n} + a_2 \cdot q \right),  \sup_{q \in [\beta/n, 1]} \frac{A_n(q, p)}{q} \cdot \left(\frac{a_1}{n} + \frac{a_2 \beta}{n} \right)\right\rbrace \notag \\
    & = \frac{1-p}{p} \cdot \max \left\lbrace \sup_{\lambda \in [0, \beta]} \frac{1}{\lambda} \left( 1 - \left(1-\frac{\lambda p}{n}\right)^n \right) \left( a_1 + a_2 \lambda \right), \sup_{\lambda \in [\beta, n]} \frac{1}{\lambda} \left( 1 - \left(1-\frac{\lambda p}{n}\right)^n \right) \left(a_1 + a_2 \beta \right) \right \rbrace,\label{exp_4.5_inter}
\end{align}
where the last equality follows from the change of variable $q= \lambda / n$. 
We denote by $r_1(\lambda)$ the second term in the maximum argument in~\eqref{exp_4.5_inter}. We have $r'_1(\lambda) = \frac{1}{\lambda^2} \left( \lambda p \left(1-\frac{\lambda p}{n}\right)^{n-1} + \left(1-\frac{\lambda p}{n}\right)^n - 1 \right)$, and for $\lambda = 2/p$, the derivative of $r_1$ becomes negative; hence,
 $\sup_{\lambda \in [\beta, n]} r_1(\lambda)$ can be restricted to $[\beta, \max \{\beta, 2/p\}]$. Next, letting $n \to \infty$ yields
\begin{align}\label{exp_4.5}
\lim_{n\to \infty} v(\ALG^q) = \frac{1-p}{p} \cdot \max \left\lbrace
\sup_{\lambda \in [0, \beta]} \frac{1-e^{-\lambda p}}{\lambda}  (a_1 + a_2 \lambda), \sup_{[\beta, \max \{\beta, 2/p\}]}\frac{1-e^{-\lambda p}}{\lambda} (a_1 + a_2 \beta)
\right\rbrace.
\end{align}

We denote by $r_2(\lambda)$ and $r_3(\lambda)$ the first and second terms in the maximum argument in~\eqref{exp_4.5}, respectively. Note that $r_2'(\lambda) = [a_1 (e^{-\lambda p} - 1) + p \lambda e^{-\lambda p} (a_1 + a_2 \lambda)]/\lambda^2$. Thus, $r_2{'}(\lambda^*) = 0$ implies that $e^{-\lambda^* p} \left( a_1 + a_1 p \lambda^* + a_2 p ({\lambda^{*}})^2 \right) = a_1 $; hence, $\sup_{\lambda \in [0, \beta]} r_2(\lambda)=\max\{r_2(\lambda^*),r_2(\beta),\lim_{\lambda\to 0} r_2(\lambda)\}$ where $0 < \lambda^* < \beta$, if it exists.
Moreover, $r_3(\lambda)$ is a decreasing function in $\lambda$, thus the supremum is attained at $\lambda = \beta$. We combine both cases to conclude the proof.
\end{proof}

\section{Missing Proofs from Section~\ref{sec:adapt}}\label{app:sec2}

\begin{proof} [Proof of Proposition~\ref{prop:ub_opt}]
By the second characterization of $v^\varepsilon(\OPT)$ in \eqref{exp:opt_value_complete}, we have
\begin{align*}
    v^\varepsilon(\OPT) & = \int_{0}^1 \Tilde{F}(u) g_n^{(0)}(p, u) \, \mathrm{d}u\\
    & = \int_{0}^1 \left( \frac{\theta^*}{(1-p)n} \cdot \delta_{\lbrace 0 \rbrace}(u) - \frac{p}{1-p} \int_{y^{-1} (e^{-pnu})}^{1-\varepsilon} \frac{1}{y'(s)} ds \mathbbm{I}_{(0, 1)}(u) \right) (1-p) \cdot n (1-pu)^{n-1} \, \mathrm{d}u \\
    & = \theta^* - \int_{0}^{1} \int_{y^{-1} (e^{-pnu})}^{1-\varepsilon} \frac{1}{y'(s)} ds (1-pu)^{n-1} pn \, \mathrm{d}u \\
    & = \theta^* - \int_{e^{-np}}^1 \int_{y^{-1} (v)}^{1-\varepsilon} \frac{1}{y'(s)} ds \left(1+\frac{\log(v)}{n}\right)^{n-1} \frac{\, \mathrm{d}v}{v} \tag{per $v = e^{-npu}$} \\
    & = \theta^* - \int_{0}^{1-\varepsilon} \int_{\max\{y(s), e^{-np}\}}^1 \left(1+\frac{\log(v)}{n}\right)^{n-1} \frac{\, \mathrm{d}v}{v} \frac{1}{y'(s)} ds \tag{changing order of integration}\\
    & = \theta^* - \int_{0}^{1-\varepsilon} \frac{1}{y'(s)} \left( 1- \left(1+\frac{\log(y(s))}{n}\right)^n\right) ds. \quad\text{(per $y(s) > e^{-np}$ when $s \in (0, 1-\varepsilon)$)} \qedhere
\end{align*}
\end{proof}

\begin{proof} [Proof of Lemma~\ref{lem: ub_adapt_be}]
    We first show that \eqref{exp: ub_adapt_sol} satisfies the Bellman equation \eqref{exp: ub_adapt_be}. 
    \begin{align*}
        & \int_{0}^{\mu} h(u) \, \mathrm{d}u -p \mu d(x) \\
        & = \int_{0}^{-\log(y(x))/p} \left( \theta^* \cdot \delta_{\lbrace 0 \rbrace}(u) - p \int_{y^{-1} (e^{-pu})}^{1} \frac{1}{y'(s)} ds \mathbbm{I}_{(0, 1)}(u) \right) \, \mathrm{d}u - p \cdot \frac{\log(y(x))}{p} \cdot \int_{x}^{1} \frac{1}{y'(s)} ds\\
        & = \theta^* - p \int_{0}^{1} \int_{0}^{\min\{-\log(y(x))/p, -\log(y(s))/p \}} \, \mathrm{d}u \frac{1}{y'(s)} ds - \log(y(x)) \int_{x}^{1} \frac{1}{y'(s)} ds\\
        & = \theta^* + \int_{0}^{1}  \frac{\max\{\log(y(x)), \log(y(s))\}}{y'(s)} ds - \int_{x}^{1} \frac{\log(y(x))}{y'(s)} ds \\
        & = \theta^* + \int_{0}^{x}  \frac{\log(y(s))}{y'(s)} ds + \int_{x}^{1}  \frac{\log(y(x))}{y'(s)} ds  -  \int_{x}^{1} \frac{\log(y(x))}{y'(s)} ds \tag{monotonicity of $y(x)$}\\
        & = \theta^* + \int_{0}^{x}  \frac{y''(s)}{(y'(s))^2} ds \tag{per $y''(x) = y'(x) \log(y(x))$}\\
        & = \theta^* - \frac{1}{y'(x)} + \frac{1}{y'(0)} = - \frac{1}{y'(x)} = -d'(x).
    \end{align*}

    Next, we show that $\mu$ is indeed the unique maximizer. We proceed in two steps: (i) the derivative with respect to $\mu$ is zero. (ii) the derivative is positive when $\mu = 0$, and the derivative is negative as $\mu \to \infty$. By Leibniz integral rule, we have $\frac{d}{d\mu} \left\lbrace \int_{0}^{\mu} h(u) \, \mathrm{d}u -p \mu d(x) \right\rbrace = h(\mu) - p \cdot d(x)$. 

    \noindent To verify (i), note that 
    \begin{align*}
        h(\mu) - p \cdot d(x) = \theta^* \cdot \delta_{\lbrace 0 \rbrace}(-\frac{\log(y(x))}{p}) - p \int_{x}^{1} \frac{1}{y'(s)} ds \mathbbm{I}_{(0, 1)}(\mu) - p \cdot \int_{x}^{1} -\frac{1}{y'(s)} ds = 0.
    \end{align*}

    \noindent To verify (ii), note that $\lim_{\mu \to \infty} h(\mu) - p \cdot d(x)  = -p \cdot d(x) < 0$ and
    \begin{equation*}
         h(0) - p \cdot d(x) = \theta^* - p \int_{0}^{1} \frac{1}{y'(s)} ds - p \cdot \int_{x}^{1} -\frac{1}{y'(s)} ds = \theta^* -p \cdot \int_{0}^{x} \frac{1}{y'(s)} ds > 0.\qedhere
    \end{equation*}

\end{proof}

\begin{proof} [Proof of Claim~\ref{clm: ub_tilde}]
    Let $d_{i, \sigma} := d((1-\sigma)i/n))$ and $y_{i,\sigma} := y((1-\sigma)i/n))$. Then, we have
    \begin{align*}
        & \hspace{0.3cm} (1-p) \int_{0}^q \Tilde{F}(u) \, \mathrm{d}u + (1-pq)(1+\eta_{\sigma})d_{i+1, \sigma} \\
        & \leq \frac{1}{n} \int_{0}^{\mu} h(u) \, \mathrm{d}u + \left(1-\frac{p\mu}{n}\right)(1+\eta_{\sigma}) \left( d_{i, \sigma} + \frac{1-\sigma}{n} d'_{i, \sigma} \right)\tag{Rewrite $q = \frac{\mu}{n}$ \& concavity of $d_{i, \sigma}$} \\
        & =  (1+\eta_{\sigma}) d_{i,\sigma} + \frac{1}{n} \left(\int_{0}^{\mu} h(u) \, \mathrm{d}u - (1+\eta_{\sigma})p\mu d_{i,\sigma} \right) + \frac{1}{n} (1+\eta_{\sigma}) \left( d'_{i, \sigma} (1-\sigma) - \frac{p\mu}{n} d'_{i, \sigma} (1-\sigma) \right)\\
        & \leq \widetilde{D}_i - \frac{d'_{i, \sigma}}{n} + \frac{\eta_\sigma \log(y_{i,\sigma})}{n} d_{i, \sigma} + \frac{d'_{i, \sigma}}{n} + \frac{\eta_\sigma d'_{i, \sigma}}{n} - \frac{\sigma d'_{i, \sigma}}{n} - \frac{\sigma \eta_\sigma d'_{i, \sigma}}{n} + \frac{\log(y_{i,\sigma})}{n^2 }d'_{i, \sigma} (1-\sigma) (1+\eta_\sigma)\\
        & = \widetilde{D}_i + \frac{\eta_\sigma \log(y_{i,\sigma})}{n} d_{i, \sigma} + \frac{\eta_\sigma d'_{i, \sigma}}{n} - \frac{\sigma d'_{i, \sigma}}{n} - \frac{\sigma \eta_\sigma d'_{i, \sigma}}{n} + \frac{\log(y_{i,\sigma})}{n^2 }d'_{i, \sigma} (1-\sigma) (1+\eta_\sigma)
    \end{align*}
    Note that $\eta_\sigma \log(y_{i,\sigma}) d_{i, \sigma} < 0$, and since $d'_{i, \sigma} < 0$, it is sufficient to show that $\eta_\sigma  - \sigma - \sigma \eta_\sigma + \frac{\log(y_{i,\sigma})}{n} (1-\sigma) (1+\eta_\sigma) \geq 0$, which is guaranteed by \eqref{exp: eta_req}. 
\end{proof}
{ 
\section{Missing Proofs from Section~\ref{sec:extensions}}\label{app:sec3}


\begin{proof}[Proof of Lemma~\ref{lem:single_max}]
    Recall that $\mu_l = \E[\max\{X_1,\dots,X_l\}| X_1\geq \tau,\dots,X_l\geq \tau]$. We first characterize $\mu_l$ via a positive function $r(v)$, where $\int_{u}^1 r(v)\mathrm{d}v = F^{-1} (1-u)$. Let $F_\tau(x) := \Pr[X_l\leq x | X_l \geq \tau] = (F(x) - F(\tau))/ (1- F(\tau))$. We have
    \begin{align*}
    \mu_l & = \int_{\tau}^{\infty} x\,l\,F_\tau(x)^{\,l-1} \mathrm{d} F_\tau(x) = \frac{1}{(1-F(\tau))^{l}}\int_{\tau}^{\infty} x\,l\,\bigl(F(x)-F(\tau)\bigr)^{l-1}\,\mathrm{d}F(x) \\
    & = \frac{1}{q^{l}}\int_{0}^{q} F^{-1}(1-w) l (q-w)^{l-1} \mathrm{d}w = \frac{l}{q^{l}}\int_{0}^{q} \int_{w}^1 r(v)\mathrm{d}v (q-w)^{l-1} \mathrm{d}w \\
    & = \frac{l}{q^{l}}\int_{0}^{1} \int_{0}^{\min \{q,v\}}  (q-w)^{l-1} \mathrm{d}w r(v)\mathrm{d}v = \frac{1}{q^{l}} \int_{0}^{1} \left( q^l-(q-\min \{q,v\})^l \right) r(v)\mathrm{d}v.
\end{align*}
We complete the proof by employing the same non-adaptive algorithm value formulation as established in Lemma~\ref{lem:general_formula}.
\end{proof}

\noindent Define $$R(v) := \frac{(v/q)}{\,p+(1-p)\,(v/q)\,}\; \frac{1-\bigl(1-qp-(1-p)\,v\bigr)^{n}}{(1-(1-v)^n)}.$$

\begin{lemma}\label{lem:func_non_decreasing}
    $R(v)$ is non-decreasing in $v \leq q$. 
\end{lemma}

\begin{proof}[Proof of Lemma \ref{lem:func_non_decreasing}.]
We begin by differentiating \(R(v)\) explicitly for \(n\le 4\) and verify, by direct computation, that \(R'(v)\ge 0\). We then turn to the case \(n\ge 5\). For $n \geq 5$, it is convenient to recast \(R(v)\) using the following shorthand notation:
$$
S(x):=\sum_{j=0}^{n-1}(1-x)^j=\frac{1-(1-x)^n}{x},\qquad
s(v):=qp+(1-p)\,v,\qquad
R(v)=\frac{S\!\big(s(v)\big)}{S(v)}.
$$
We take the derivative of $\log R$. Then we have
$$
\frac{d}{dv}\log R(v)
=\frac{d}{dv}\!\left[\log S\!\big(s(v)\big)-\log S(v)\right]
=(1-p)\,\frac{S'\!\big(s(v)\big)}{S\!\big(s(v)\big)}-\frac{S'(v)}{S(v)}.
$$
Next, we show that $\log S$ is convex for $n \geq 5$. Setting $u=1-x$  yields
$$
S'(x)=-\sum_{j=1}^{n-1} j\,u^{\,j-1},\qquad
S''(x)=\sum_{j=2}^{n-1} j(j-1)\,u^{\,j-2}.
$$
Hence,
$$
(\log S)''(x)
=\frac{S''(x)S(x)-\big(S'(x)\big)^2}{\big(S(x)\big)^2}
=\frac{\sum_{0\le i<j\le n-1}(i-j)^2\,u^{\,i+j-2} - \frac{1}{2}\sum_{i,j=0}^{n-1} (i+j) u^{i+j-2}}{\big(S(x)\big)^2}.
$$
When $n\geq 5$, $(\log S)''(x) \geq 0$, thus $(\log S)'(x)=S'(x)/S(x)$ is non-decreasing. Since $v\le q$, $s(v)=qp+(1-p)\,v\ \ge\ v.$
Because $(\log S)'$ is increasing and $0\le 1-p\le 1$,
$$
(1-p)\,(\log S)'\!\big(s(v)\big)-(\log S)'(v)
\ \ge\ (1-p)\,(\log S)'(v)-(\log S)'(v)
=-p\,(\log S)'(v)\ \ge\ 0,
$$
where the last inequality uses $(\log S)'(v)=S'(v)/S(v)\le 0$.
Therefore,
$$
\frac{d}{dv}\log R(v)=(1-p)\,\frac{S'\!\big(s(v)\big)}{S\!\big(s(v)\big)}-\frac{S'(v)}{S(v)}\ \ge\ 0,
$$
and hence, $R'(v)\ge 0$ on $(0,q]$.




\end{proof}

\subsection{Upper Bound} \label{subsec:max_largest_ub}
In this subsection, we explain that our analysis is tight by illustrating an instance as the proof for the upper bound. We use the same distribution from~\eqref{exp:hard_dist}, that is, 
\begin{align}
    \Hat{F}(u) := \frac{a_1}{n} \delta_{\lbrace 0 \rbrace}(u) + a_2\mathbbm{I}_{(0, \beta/n]}(u). 
\end{align}
We can separately characterize the value of the optimal algorithm and the value of the single threshold algorithm, when $n \to \infty$. 
\begin{lemma}\label{lemma_max_largest_opt}
    When $n\to \infty$, $v_{\max}(\OPT) = (1-p) \cdot \Big[ a_1 + a_2\Big(1- e^{-\beta}\Big) \Big]$. 
\end{lemma}
\begin{proof}[Proof of Lemma \ref{lemma_max_largest_opt}] We use the expression from Lemma~\ref{lem:opt_max}, 
    \begin{align*} 
        v(\OPT) & = (1-p) \E[X_{(1)}] = (1-p) \int_{0}^{1} F^{-1}(1-u)\,(1-u)^{n-1} n \mathrm{d}u,\\
& = (1-p) n \int_{0}^{1}\frac{a_1}{n}\delta_{\{0\}}(u) (1-u)^{n-1}\,\mathrm{d}u\ +  (1-p) n\int_{0}^{\beta/n} a_2 (1-u)^{n-1} \mathrm{d}u. \\
& = (1-p) a_1 + (1-p) n \cdot a_2 \cdot \frac{1-(1-\beta/n)^n}{n} = (1-p) \cdot \Big[ a_1 + a_2\Big(1-(1-\beta/n)^n\Big) \Big]
    \end{align*}
    We recover the claim by letting $n\to \infty$. 
\end{proof}

\begin{lemma}\label{lemma_max_largest_alg}
As $n$ grows large, the expected value of $\ALG^{t/n}$ converges to
\[
\lim_{n \to \infty} v_{\max}(\ALG^{t/n}) = 
\sup_{t\in [0,\beta]} (1-p)\left[ a_2(1-e^{-t}) + \frac{a_1}{p}\,\frac{1-e^{-pt}}{t} \right]
\]
\end{lemma}

\begin{proof}[Proof of Lemma \ref{lemma_max_largest_alg}]
    Define $m = \min \{q, \beta/n\}$, we can find the following expressions for $\mu_k$ where $\mu_0 = 0$, and 
    \begin{align*}
    \mu_k & = \frac{1}{q^{k}}\int_{0}^{q} F^{-1}(1-w) k (q-w)^{k-1} \mathrm{d}w = \frac{1}{q^{k}}\!\left[\frac{a_1}{n}\,k\,q^{\,k-1} \;+\; a_2 \int_{0}^{m} k\,(q-w)^{k-1}\,\mathrm{d}w \right] \\
    &= \frac{1}{q^{k}}\!\left[\frac{a_1}{n}\,k\,q^{\,k-1} \;+\; a_2 \big(q^{k}-(q-m)^{k}\big) \right] = \frac{a_1}{n}\,\frac{k}{q} \;+\; a_2\!\left[1-\Big(1-\frac{m}{q}\Big)^{\!k}\right].
    \end{align*}
    Thus, using the expression from Lemma~\ref{lem:single_max} and denoting $b_{n,q}^a := \Pr[\mathrm{Bin}(n,q)=a]$, $c_k := 1-\Big(1-\frac{\beta}{nq}\Big)^{k}$, 
    \begin{align}
    &\sup_{q \in [0,1]} v_{\max}(\ALG^q)  = \max \Bigg\{ \sup_{q\in [0,\beta/n]} \sum_{a=1}^n b_{n,q}^a
    \left(\sum_{k=1}^{a-1} p(1-p)^k (\frac{a_1 k}{nq} +a_2) + (1-p)^a(\frac{a_1 a}{nq} +a_2)\right), \notag \\
    & \hspace{4.3cm} \sup_{q\in[\beta/n, 1]}  \sum_{a=1}^n b_{n,q}^a
    \left(\sum_{k=1}^{a-1} p(1-p)^k (\frac{a_1 k}{nq} +a_2 c_k) + (1-p)^a(\frac{a_1 a}{nq} +a_2 c_a)\right) \Bigg\} \notag \\
    & = \max \Bigg\{ \sup_{q\in [0,\beta/n]} a_2 (1-p)\big( 1-(1-q)^n\big)  + \frac{a_1}{n}\frac{1-p}{p}\frac{1-(1-qp)^{n}}{q},  \sup_{q\in[\beta/n, 1]} \frac{a_1}{n}\frac{1-p}{p}\frac{1-(1-qp)^{n}}{q} \notag \\
    & \hspace{2cm} + a_2\left[1 - \frac{p}{p+(1-p)\tfrac{\beta}{nq}}\Bigl(1-\bigl(1-qp-(1-p)\tfrac{\beta}{n}\bigr)^{n}\Bigr) - \bigl(1-qp-(1-p)\tfrac{\beta}{n}\bigr)^{n} \right] \Bigg\} \notag \\
    & = \max \Bigg\{ \sup_{t\in [0,\beta]}  (1-p)\left[ a_2\left(1-\Bigl(1-\frac{t}{n}\Bigr)^{n}\right) +\frac{a_1}{p}\,\frac{1-\left(1-\frac{pt}{n}\right)^{n}}{t} \right] , \sup_{t\in[\beta, n]}\frac{a_1(1-p)}{p}\frac{1-\left(1-\frac{pt}{n}\right)^{n}}{t}\notag \\ 
    & \hspace{1.5cm} + a_2\left[1 - \frac{p}{p+(1-p)\tfrac{\beta}{t}}\left(1-\left(1-\frac{pt+(1-p)\beta}{n}\right)^{n}\right) - \left(1-\frac{pt+(1-p)\beta}{n}\right)^{n}\right]\Bigg\},  \label{exp:sup_second_term}
    \end{align}
    where the last equality uses the change of variable $q = t/n$.  We denote by $r_1(t)$ the second term in the maximum argument in~\eqref{exp:sup_second_term}. Similar to the analysis in Lemma~\ref{lemma_hard_threshold}, we can prove that both terms in $r_1(t)$ is decreasing when $t> 2/p$ (see Claim~\ref{clm:decreasing_t} for details). Hence, $\sup_{t\in[\beta, n]} r_1(t)$ can be restricted to $[\beta, \max\{\beta , 2/p\}]$. Next, letting $n\to \infty$, we can extend~\eqref{exp:sup_second_term}: 
    \begin{gather}
        \sup_{q \in [0,1]} v_{\max}(\ALG^q) = \max \Bigg\{ \sup_{t\in [0,\beta]} (1-p)\left[ a_2(1-e^{-t}) + \frac{a_1}{p}\,\frac{1-e^{-pt}}{t} \right], \notag \\
        \sup_{t \in [\beta, \max\{\beta, 2/p\}]} \frac{a_1(1-p)}{p}\,\frac{1-e^{-pt}}{t} + a_2\!\left[ 1 - \frac{p}{\,p+(1-p)\tfrac{\beta}{t}}\bigl(1-e^{-(pt+(1-p)\beta)}\bigr) - e^{-(pt+(1-p)\beta)} \right]. \label{exp:sup_term_max}
    \end{gather}
    We denote by $r_2(t)$ the second term in the maximum argument in~\eqref{exp:sup_term_max}. $r_2(t)$ is a decreasing function in $t$, thus the supremum is attained at $t = \beta$. From here, we conclude the proof. 
\end{proof}
\begin{claim}\label{clm:decreasing_t}
$r_1(t)$ is decreasing in $t$ for $t>2/p$.
\end{claim}
\begin{proof}[Proof of Claim \ref{clm:decreasing_t}]
By Lemma~\ref{lemma_hard_threshold}, the derivative of the first term in $r_1(t)$ is negative for $t>2/p$. It remains to handle the second term. Set $y(t) =pt+(1-p)\beta$, 
so that the second term can be rewritten as 

$$ a_2\frac{(1-p)\beta}{y(t)}\Bigl(1-\Bigl(1-\frac{y(t)}{n}\Bigr)^{\!n}\Bigr).$$

\noindent Observe that the preceding expression has the same structure as the first term; hence its derivative is negative under the same condition, namely $y(t)> 2$. Since $t > 2/p$ implies $y(t) =pt+(1-p)\beta \geq pt > 2$, this condition is satisfied.
\end{proof}

Finally, we use the preceding results to establish that the competitive ratio admits an upper bound of \(1 - e^{-\lambda(p)}\). Specifically, we have
\begin{align*}
    & \inf_{F \in \mathcal{F}, n\geq 1} \sup_{q \in [0,1]} \frac{v_{\max}(\ALG^q)}{v_{\max}(\OPT)} \leq \inf_{n \geq 1} \min_{a_1,a_2\geq 0, a_1+a_2=1} \sup_{q \in [0,1]}  \frac{v_{\max}(\ALG^q)}{v_{\max}(\OPT)} \\
    & \leq \lim_{\beta \to \infty} \lim_{n \to \infty} \min_{a_1,a_2\geq 0,a_1+a_2=1} \sup_{t \geq 0} \frac{(1-p)\left[ a_2(1-e^{-t}) + \frac{a_1}{p}\,\frac{1-e^{-pt}}{t} \right]}{1-p} \\
    & = \max_{t\geq 0} \min_{a_1,a_2\geq 0,a_1+a_2=1}\left\{ a_2 (1-e^{-t})+ a_1 \left( \frac{1-e^{-pt}}{pt} \right)   \right\} = \sup_{t\geq 0} \min\left\{ 1-e^{-t} , \frac{1-e^{-pt}}{pt}.  \right\}
\end{align*}
The second inequality follows directly from Lemmas~\ref{lemma_max_largest_opt} and~\ref{lemma_max_largest_alg}.  
The first equality uses the quasi-concavity of the function $f(t)=a_2(1-e^{-t})+\frac{a_1}{p}\,\frac{1-e^{-pt}}{t}$ which allows the interchange of the min and max operators under Sion’s minimax theorem~\citep{sion1958general}. The proof is detailed in Claim~\ref{clm:quasiconcave}. The final equality then follows by noting that, for any fixed \(t\), the minimum over the convex combination parameters \(a_1,a_2\) occurs at one of the extreme points of the simplex. 
\begin{claim}\label{clm:quasiconcave}
    For any $a_1,a_2\geq 0,a_1+a_2=1, 0 <p <1,$ and $t>0$, $f(t)$ is a quasi-concave function. 
\end{claim}
\begin{proof}[Proof of Claim \ref{clm:quasiconcave}]
A direct differentiation gives
$$
f'(t)=a_2 e^{-t}+\frac{a_1}{p\,t^2}\Big(e^{-pt}(1+pt)-1\Big).
$$
Multiplying by the positive factor $p\,t^2$ and rearranging,
$$
f'(t)=0 \iff L(t)=R(t),
\quad
L(t):=a_1\Big(1-e^{-pt}(1+pt)\Big),\quad
R(t):=p\,a_2\,t^2 e^{-t}.
$$
$L$ is strictly increasing on $(0,\infty)$:
$$
L'(t)=a_1 p^2 t e^{-pt}>0,\qquad
\lim_{t\downarrow0}L(t)=0,\qquad
\lim_{t\to\infty}L(t)=a_1.
$$
\(R\) is unimodal with a unique maximum at \(t=2\):
$$
R'(t)=p a_2 e^{-t}(2t-t^2),\qquad
R(0)=0,\qquad
\lim_{t\to\infty}R(t)=0.
$$

Since $L$ is strictly increasing, and $R$ increases on $(0,2)$, then decreases on $(2,\infty)$,
the equation $L(t)=R(t)$ has at most one solution $t^*>0$.
Indeed, if \(0<t_1<t_2\) with $L(t_i)=R(t_i)$, then, because $L$ is strictly increasing,
$R-L>0$ on $(t_1,t_2)$. But $R-L$ has at most one local maximum (as $R$ has one
and $L$ is monotone), so starting from $R(0)-L(0)=0$, it cannot become positive and
return to $0$ twice without creating two distinct extrema.
Hence, $f'$ has at most one zero on $(0,\infty)$.
\end{proof}



\section{Missing Details from Section~\ref{sec:final_rem}}\label{app:sec4}
In this section, we study the non-i.i.d.\ variant of \ppp, where the values $X_i$'s are drawn from non-negative, independent, but potentially non-identical  distributions \(F_i\)'s, respectively. For this variant, we first establish a lower bound for the competitive ratio against the benchmark, which can rearrange the arrival sequence after observing all $X_i$'s. We then conclude by providing a tight upper bound for the competitive ratio. 

\begin{theorem} \label{thm: non-iid}
    In the non-i.i.d.\ variant of \ppp with $\zeta\in[0,1]$, the optimal online algorithm achieves at least a $1/2$-competitive ratio with respect to the offline benchmark. 
\end{theorem}

\begin{proof}[Proof of Theorem \ref{thm: non-iid}]
For each $i\in[n]$, let $z_i$ denote the probability that $\OPT$ selects item $i$, and set $B = 1- (1-\zeta) p$. 
By stochastic dominance (conditioning on the top $z_i$-quantile maximizes the conditional mean), there exists a threshold $\tau_i$ with
$z_i=\Pr[X_i\ge\tau_i]$ such that
\begin{equation}\label{eq:OPT-upper}
v(\OPT)
=\sum_{i=1}^n B \cdot \E[X_i\mid \OPT\text{ selects }i] \cdot z_i
\leq \sum_{i=1}^n B \cdot \E[X_i\mid X_i\ge\tau_i] \cdot z_i.
\end{equation}
Consider the following online algorithm $\ALG$.
At time $i$, $\ALG$ independently skips the current item with probability $\varepsilon_i\in[0,1]$ (to be determined below); otherwise, if $X_i\ge\tau_i$, $\ALG$ accepts the current item. Let $R_i := \Pr[\ALG \text{ reaches item } i]$, where $R_1=1$. Then we can write the value of $\ALG$ as 
\begin{equation}\label{eq:ALG-value}
v(\ALG)=\sum_{i=1}^n R_i \cdot (1-\varepsilon_i) \cdot B \cdot \E[X_i\mid X_i\ge\tau_i] \cdot z_i.
\end{equation}


\noindent Since both $v(\ALG)$ and $v(\OPT)$ are comparable term by term, it is sufficient to prove that for all $i \in [n]$, we have $R_i (1-\epsilon_i) \geq 1/2$. To see this, we first characterize a recurrence relation for $R_i$. From time $i$ to $i+1$, $\ALG$ fails to advance if and only if $\ALG$ does not skip (with probability ($1-\epsilon_i$)), $X_i\ge\tau_i$, and disruption happens.
Therefore, 
\begin{equation}\label{eq:R-rec}
R_{i+1} \;=\; R_i\bigl(1-(1-\varepsilon_i)\,z_i\,p\bigr),\qquad i=1,\dots,n-1.
\end{equation}
Moreover, to guarantee that $R_i (1-\epsilon_i) \geq 1/2$, we simply set $1-\epsilon_i = 1/(2R_i)$, and plugging into the recurrence \eqref{eq:R-rec}, we have the following updated recurrence for $R_i$: 
\begin{equation}\label{eq:R-linear}
R_{i+1} \;=\; R_i\!\left(1-\frac{z_i p}{2R_i}\right) \;=\; R_i - \frac{z_i p}{2}.
\end{equation}
Iterating~\eqref{eq:R-linear} yields the closed forms for $R_i$ and $\epsilon_i$ for all $i \in [n]$:
\begin{equation}\label{eq:closed-forms}
R_i = 1 - \frac{p}{2}\sum_{k=1}^{i-1} z_k,\qquad
\varepsilon_i = 1 - \frac{1}{2 - p \sum_{k=1}^{i-1} z_k }.
\end{equation}
To ensure $R_i\geq 1/2$ so that $\varepsilon_i$ is a valid probability, it suffices to show $\sum_{k=1}^{i-1} z_k \leq 1/p$. Bounding the expected number of $\OPT$'s selections before time $i$ gives

\[
\sum_{k=1}^{i-1} z_k \leq \sum_{k=1}^{i-1} (1-p)^{k-1}
= \frac{1-(1-p)^{i-1}}{p} \leq \frac{1}{p}.
\]
Finally, combining \eqref{eq:OPT-upper} and \eqref{eq:ALG-value} with $R_i(1-\varepsilon_i)=\tfrac12$ yields
\[
v(\ALG)=\frac{1}{2}\sum_{i=1}^n B \cdot \E[X_i\mid X_i\geq \tau_i]\cdot z_i
\geq \frac{1}{2}v(\OPT).
\]
Since the optimal online policy attains a value of at least $v(\ALG)$, the result follows.
\end{proof}

 \paragraph{Upper Bound.} 
We now present an instance that provides a tight upper bound for the competitive ratio under the non-i.i.d.\ variant of \ppp. Consider $n$ items arriving in order. The last item follows the distribution $X_n=1/\varepsilon$ with probability $\varepsilon$ and $X_n=0$ otherwise, whereas all preceding items share a deterministic value: $X_1=\cdots=X_{n-1}=(p-\varepsilon)$, where $\varepsilon>0$ is sufficiently small. At the penultimate encounter (at time $n-1$), the optimal online algorithm either skips item $n-1$ and advances to accept the last item (with expected value gain of $B$), or accepts item $n-1$ and possibly advances to accept the last item (with expected value gain of $B\big[(p-\varepsilon)+(1-p)\big]=B(1-\varepsilon)$). Therefore, for any $\varepsilon>0$, it is optimal to skip in the penultimate encounter. By induction, it follows that the optimal online algorithm skips the first $n-1$ items and accepts the last item. Thus, $v(\ALG) = B$. On the other hand, the offline benchmark goes backwards in order, and retrieves: 
\begin{align*}
v(\OPT)
  &= \varepsilon \!\left(
      B\,\frac{1}{\varepsilon}
      + B(p-\varepsilon)\!\sum_{m=1}^{n-1}(1-p)^m
    \right)
     + (1-\varepsilon)B(p-\varepsilon)
       \!\sum_{m=0}^{n-2}(1-p)^m \\[4pt]
  &= B \!\left[
      1
      + \frac{(p-\varepsilon)(1-\varepsilon p)}{p}
        \bigl(1-(1-p)^{\,n-1}\bigr)
    \right].
\end{align*}

\noindent Letting $n\to\infty$ and then $\varepsilon\to 0$ yields the asymptotic competitive ratio of
\[
\lim_{\substack{n\to\infty\\ \varepsilon\to 0}}\frac{v(\ALG)}{v(\OPT)}
=\frac{1}{2}.
\]

    }

\end{document}